\newtheorem{theorem}{Theorem}
\newtheorem{lemma}{Lemma}
\title{Discrete Budget Aggregation: Truthfulness and Proportionality}
\author{
Ulrike Schmidt-Kraepelin$^1$
\and
Warut Suksompong$^2$\and
Markus Utke$^1$
\affiliations
$^1$TU Eindhoven, The Netherlands\\
$^2$National University of Singapore, Singapore\\
}
\newif\ifcomments 
\DeclareMathOperator*{\med}{med}
\newcommand{\N}{\mathbb{N}}
\newcommand{\A}{\mathcal{A}}
\newcommand{\F}{\mathcal{F}}
\newcommand{\B}{\mathcal{B}}
\newcommand{\W}{\Phi}
\newcommand{\M}{\mathcal{M}}
\newcommand{\C}{\ensuremath{S^{m}_b}}
\newcommand{\Cc}{\ensuremath{\mathcal{S}_{n,m,b}}}
\newcommand{\Cn}{\ensuremath{S_{\nabla}}}
\newcommand{\D}{\ensuremath{I^{m}_b}}
\newcommand{\Dc}{\ensuremath{\mathcal{I}_{n,m,b}}}
\newcommand{\ph}{\phi}
\newcommand{\ellone}[2]{\lVert#1 - #2\rVert_1}
\newcommand{\Amean}{\mu}
\newcommand{\IM}{\textsc{IndependentMarkets}\xspace}
\newcommand{\UT}{\textsc{Utilitarian}\xspace}
\newcommand{\fIM}{\textsc{FloorIM}\xspace}
\newcommand{\fUT}{\textsc{FloorUtil}\xspace}
\newcommand{\wpo}{\trianglerighteq}
\tikzset{>=latex}
\tikzstyle{aggregate}=[draw, thick, minimum size=8,inner sep=0,regular polygon,regular polygon sides=3]
\tikzstyle{voter}=[draw, thick, minimum size=7,inner sep=0,circle]
\tikzstyle{label}=[text opacity=1]
\definecolor{color1}{HTML}{d97904}
\colorlet{color2}{blue!50!black}
\definecolor{color3}{RGB}{227,114,34}
\definecolor{verylightgray}{HTML}{f1f1f1}
\begin{document}

\maketitle

\begin{abstract}
We study a budget aggregation setting where voters express their preferred allocation of a fixed budget over a set of alternatives, and a mechanism aggregates these preferences into a single output allocation. Motivated by scenarios in which the budget is not perfectly divisible, we depart from the prevailing literature by restricting the mechanism to output allocations that assign integral amounts. This seemingly minor deviation has significant implications for the existence of truthful mechanisms. Specifically, when voters can propose fractional allocations, we demonstrate that the Gibbard--Satterthwaite theorem can be extended to our setting. In contrast, when voters are restricted to integral ballots, we identify a class of truthful mechanisms by adapting \textit{moving-phantom mechanisms} to our context. Moreover, we show that while a weak form of proportionality can be achieved alongside truthfulness, (stronger) proportionality notions derived from approval-based committee voting are incompatible with truthfulness.
\end{abstract}

\section{Introduction}

The summer break is approaching, and you are looking forward to hosting a workshop at your university with participants from around the world.
As the organizer, you need to determine how to allocate the workshop time among paper presentations, poster sessions, and social activities.
Naturally, the participants have varying preferences regarding how the time should be divided.
How should you combine these preferences into the actual allocation?

The problem of aggregating individual preferences on how a budget should be distributed among a set of alternatives is known as \emph{budget aggregation} or \emph{portioning} \citep{freeman2021truthful,caragiannis2022truthful,elkind2023settling,brandt2024optimal,deberg2024truthful,freeman2024project}.
In addition to time, the budget can also represent financial resources, such as when a city council is tasked with allocating its annual funds across different projects.
Several budget aggregation mechanisms have been proposed and investigated in the literature.
An example is the \emph{average mechanism}, which simply returns the average of the preferences of all voters.
Despite its simplicity, this mechanism is susceptible to manipulation: 
if a voter can guess the outcome of the mechanism, she can usually misreport her preference and bring the average closer to her true preference.
In light of this, a number of authors have focused on designing \emph{truthful} mechanisms, i.e., mechanisms for which it is always in the best interest of the voters to report their true preferences.
Notably, \citet{freeman2021truthful} introduced the class of \emph{moving-phantom mechanisms} and demonstrated that every mechanism in this class is truthful.
In addition, a specific moving-phantom mechanism called the \emph{independent markets} mechanism is \emph{(single-minded) proportional}---this means that when every voter is single-minded (i.e., would like the entire budget to be spent on a single alternative), the output of the mechanism coincides with the average of all votes.

As far as we are aware, all prior work on budget aggregation allows a mechanism to output any distribution of the budget.\footnote{\citet{lindner2011zur} considered rules that take integral distributions as their input, but did not place any requirement on the output.} 
However, this can result in ``fractional'' distributions, which may be difficult or even impractical to implement in certain applications.
For instance, a distribution that allots $6.37$ hours from the $10$ available hours at a workshop to paper presentations might be infeasible due to scheduling considerations or the inability to utilize such precise time increments.\footnote{Note that such a distribution can be output, e.g., by the average mechanism, even if all participants submit preferences consisting only of integral numbers of hours.}
Likewise, when allocating funds, it is often more convenient to work with round numbers.
In this paper, we study \emph{discrete budget aggregation}, where an integral budget must be distributed among a set of alternatives in such a way that every alternative receives an integral amount of the budget.
Beyond the allocation of time and money, discrete budget aggregation is generally applicable when the ``budget'' comprises indivisible items, for example, in the distribution of faculty hiring slots among university departments. 

\subsection{Our Contributions}

We study two variants of our model: In the \emph{integral} setting, the voter ballots and the output allocation must be integral, while in the \emph{fractional-input} setting, the voter ballots are allowed to be fractional. For both settings, we establish interesting connections to several social choice frameworks.
\paragraph{Integral Mechanisms: Truthfulness.} 
We explore two approaches for adapting truthful mechanisms from the fractional setting to our integral setting. Firstly, we round the output of fractional mechanisms using \emph{apportionment} methods. 
We show that combining a well-known fractional mechanism with several standard apportionment methods fails truthfulness. Secondly, we translate the idea behind moving-phantom mechanisms directly into our setting. Specifically, we define the class of \textit{integral moving-phantom mechanisms}, and prove that every mechanism in this class is truthful. 
\paragraph{Integral Mechanisms: Proportionality.}
We show that there exist truthful mechanisms (from our class of integral moving-phantom mechanisms) that satisfy \emph{single-minded quota-proportionality}. 
While this property is rather weak, we derive stronger proportionality notions for our setting by viewing it as a subdomain of \emph{approval-based committee elections}.
However, using a computer-aided approach, we show that even the weakest of these notions (called \emph{JR}) is incompatible with truthfulness.
\paragraph{Fractional-Input Mechanisms.} Allowing voters to cast fractional ballots has major implications on the space of truthful mechanisms. Building upon the literature on dictatorial domains, we show that any fractional-input mechanism that is truthful and onto must be dictatorial.
This can be viewed as a variant of the Gibbard--Satterthwaite theorem.

\subsection{Related Work}
\label{sec:related-work}

The analysis of aggregating individual distributions into a collective distribution dates back to the work of \citet{intriligator1973probsc}.
However, Intriligator did not assume that agents possess utility functions and, as a result, did not address the aspect of truthfulness.
Most of the work on truthful budget aggregation thus far assumes that agents are endowed with $\ell_1$ utilities.
Under this assumption, \citet{lindner2008midpoint} and \citet{goel2019knapsack} showed that the mechanism that optimizes utilitarian social welfare (with a certain tie-breaking rule) is truthful.
After \citet{freeman2021truthful} proposed the class of moving-phantom mechanisms, \citet{caragiannis2022truthful} and \citet{freeman2024project} investigated them with respect to the distances of their output from the average distribution, while \citet{deberg2024truthful} presented truthful mechanisms outside this class.
\citet{brandt2024optimal} proved that truthfulness is incompatible with single-minded proportionality and an efficiency notion called \emph{Pareto optimality} under $\ell_1$ utilities, but these properties are compatible under a different utility model.
\citet{elkind2023settling} conducted an axiomatic study of various budget aggregation mechanisms.

Given the integral nature of the output distribution, discrete budget aggregation bears a resemblance to the long-standing problem of apportionment \citep{BaYo82a}.
The main difference is that, in apportionment, the input can be viewed as a single distribution (representing the fractions of voters who support different alternatives) rather than a collection of distributions.
\citet{brill2024approval} studied an approval-based generalization of apportionment, where each voter is allowed to approve multiple alternatives instead of only one.
\citet{DDE+23b} established the incompatibility between truthfulness and representation notions in that setting.

\section{Model and Preliminaries}

For any $z\in\mathbb{N}$, let $[z]$ denote $\{1,\dots,z\}$ and $[z]_0$ denote $\{0,1,\dots,z\}$.
In the setting of budget aggregation, we have a set $[n$] of $n$ \textit{voters} deciding how to distribute a budget of~$b \in \mathbb{N}$ over a set $[m]$ of $m \ge 2$ \textit{alternatives}. 
We write \[\C = \{ v \in [0,b]^m \mid \lVert v \rVert_1 = b \}\] for the set of vectors distributing a budget $b$ over a number of alternatives $m \in \N$, i.e., \C is an $(m-1)$-simplex. Similarly, \[\D = \{ v \in ([b]_0)^m \mid \lVert v \rVert_1 = b \} \subset \C\] denotes the set of vectors \textit{integrally} distributing the budget $b$ over $m$ alternatives. 
We sometimes refer to an element of $\C$ or $\D$ as an \emph{allocation} or a \emph{distribution}.
We denote by $\Cc = (\C)^n$ the set of all \emph{fractional} profiles with $n$ voters, $m$ alternatives, and a budget of $b$, and by $\Dc = (\D)^n$ the set of all integral profiles with the same parameters. 
For each voter~$i$, let $p_i\in \C$ denote her vote, where $p_i = (p_{i,1},\dots,p_{i,m})$.

\paragraph{Budget-Aggregation Mechanisms.} We will consider three types of budget-aggregation mechanisms (or \emph{mechanisms} for short). Generally, a mechanism is a family of functions $\A_{n,m,b}$, one for every triple $n,m,b \in \N$ with $m \ge 2$. We distinguish three types of mechanisms by the type of input and output space of the corresponding functions.

\begin{itemize}
    \item An \textbf{\emph{integral} mechanism} maps any integral profile to an integral aggregate, i.e., $\A_{n,m,b} : \Dc \rightarrow \D$. 
    \item A \textbf{\emph{fractional} mechanism} maps any fractional profile to a fractional aggregate, i.e., $\A_{n,m,b} : \Cc \rightarrow \C$. 
    \item A \textbf{\emph{fractional-input} mechanism} maps any fractional profile to an integral aggregate, i.e., $\A_{n,m,b} : \Cc \rightarrow \D$. 
\end{itemize}
Since $n$, $m$, and $b$ are often clear from context, we slightly abuse notation and write $\mathcal{A}$ instead of $\mathcal{A}_{n,m,b}$. 
While our primary focus is on integral and fractional-input mechanisms, we will build upon fractional mechanisms from the literature. 

We define the \textit{disutility} of voter $i$ with truthful vote $p_i \in \Cc$ towards aggregate $a \in \Cc$ (or $a \in \Dc$) as the $\ell_1$-distance between $p_i$ and $a$, denoted by $\ellone{p_i}{a}$.

\paragraph{Truthfulness.} A mechanism $\A$ is \emph{truthful} if for any $n, m, b \in \N$ with $m \ge 2$ and any profile $P = (p_1, \dots, p_n)$, voter $i \in [n]$, and misreport $p_i^\star$, the following holds for profile $P^\star = (p_1, \dots, p_{i-1}, p_i^\star, p_{i+1}, \dots, p_n)$: \[\ellone{p_i}{\A(P)} \le \ellone{p_i}{\A(P^\star)}.\]
For fractional(-input) mechanisms, both the true profile $P$ and the misreport $P^{\star}$ belong to $\Cc$, while for integral mechanisms these profiles must be from $\Dc$.

\subsection{Moving-Phantom Mechanisms} \label{sec:prelim}

\citet{freeman2021truthful} introduced a class of truthful fractional mechanisms, which we summarize below. 

\paragraph{Moving-Phantom Mechanisms \citep{freeman2021truthful}.} 
For fixed $n, m, b$, a \textit{phantom system} $\F_n$ is a collection of $n+1$ continuous, non-decreasing functions $f_k: [0,1] \rightarrow [0,b]$, with $f_k(0) = 0$ and $f_k(1) \geq b \cdot \frac{n-k}{n}$ for $k \in [n]_0$.
We refer to these functions as \textit{phantom votes} (or just \emph{phantoms}) and to their input as \textit{time}. Any collection of phantom systems $\F = \{\F_n\}_{n \in \N}$ induces a fractional budget aggregation mechanism $\A^\F$, called a \textit{moving-phantom mechanism}. Namely, for a profile $P = (p_1, \dots, p_n) \in \Cc$, an alternative $j \in [m]$, and time $t \in [0,1]$, we denote by $\med(P, \F, j, t) := \med(p_{1,j}, \dots, p_{n,j}, f_0(t), \dots, f_n(t))$ the median of all votes on alternative $j$ and all phantom votes (from $\F_n$) at time $t$. 
Let $t^\star$ be a time such that $\sum_{j \in [m]} \med(P, \F, j, t^\star) = b$; then, the moving-phantom mechanism $\A^\F$ returns the allocation $\A^\F(P) = a$ with $a_j = \med(P, \F, j, t^\star)$ for all $j\in[m]$.
Such $t^\star$ is guaranteed to exist\footnote{We slightly deviate from the definition by \citet{freeman2021truthful} by requiring the sum of medians to reach $b$ instead of $1$. Since we also require phantoms to reach at least $b \cdot \frac{n-k}{n}$ instead of $\frac{n-k}{n}$, this is merely a matter of scaling. \citet[Proposition 3]{freeman2021truthful} showed that requiring $f_k(1)\geq \frac{n-k}{n}$ for all $k \in [n]_0$ implies that the sum of medians at $t=1$ is at least $1$, thus normalization occurs.\label{footnote:normalization}}, and while it may not be unique, the resulting allocation $\A^\F(P)$ is.

We recap two prominent moving-phantom mechanisms from the literature that we will build upon later.

\paragraph{\textsc{IndependentMarkets} \citep{freeman2021truthful}.}
The \textit{\textsc{IndependentMarkets}} mechanism is induced by the phantom system with 
$$f_k(t) = \min(b\cdot(n-k)\cdot t, b)$$  for $k \in [n]_0$ and $n \in \N$. This corresponds to the phantoms moving towards $b$ simultaneously, while being equally spaced (before they get capped at $b$). 

\paragraph{\textsc{Utilitarian} \citep{lindner2008midpoint,goel2019knapsack,freeman2021truthful}.}
The \textit{\textsc{Utilitarian}} mechanism is induced by the phantom system with 
\begin{equation*}
    f_k(t) = 
    \begin{cases}
        0 & \text{ if } t < \frac{k}{n}, \\
        b(tn-k) & \text{ if } \frac{k}{n} \le t \le \frac{k+1}{n}, \\
        b & \text{ if } \frac{k+1}{n} < t
    \end{cases}
\end{equation*}
for $k \in [n]_0$ and $n \in \N$. This corresponds to all phantoms moving towards $b$ one after another (except $f_n$ which stays at 0). 
\UT maximizes utilitarian social welfare (i.e., minimizes the sum of the voters' disutilities).

\section{Integral Mechanisms: Truthfulness} \label{sec:truthful}
We embark on our search for integral mechanisms that are truthful.  
If one of the truthful fractional mechanisms from \Cref{sec:prelim} were guaranteed to output an integral distribution for any integral profile, then this mechanism would directly yield a truthful integral mechanism. 
However, no moving-phantom mechanism satisfies this property---e.g., for the profile $((1,0),(0,1))$, every anonymous and neutral mechanism, and thus every moving-phantom mechanism, must return $(1/2,1/2)$.
In this section, we discuss two approaches for discretizing moving-phantom mechanisms, and exhibit their differing levels of success in achieving truthfulness. 

\subsection{Rounding Fractional Mechanisms}
\label{subsec:rounding_continuous}
Our first approach is to take a fractional mechanism and round its output into an integral output, i.e., we need to map any element of $\C$ to an element of $\D$. 
In fact, this is a well-studied task in the apportionment literature \citep{BaYo82a}; an \emph{apportionment method} is a family of functions (for any $m,b\in \mathbb{N}$) such that $\M_{m,b}: \C \rightarrow \D$. 
Given a fractional mechanism $\A$ and an apportionment method $\M$, we call $\M \circ \A$ the integral mechanism that is \emph{composed of $\A$ and $\M$}. 
Commonly studied apportionment methods include stationary divisor methods, Hamilton's method, and the quota method (see \Cref{app:integral-truthful} for definitions). Stationary divisor methods are parameterized by $\Delta \in [0,1]$, where $\Delta =1$ corresponds to the \textit{Jefferson} (or \textit{d'Hondt}) method and $\Delta = \sfrac{1}{2}$ corresponds to the \emph{Webster} (or \textit{Sainte-Laguë}) method. 
However, applying any of these methods to the outcome of \textsc{IndependentMarkets}
does not yield a truthful mechanism.

\begin{restatable}{theorem}{thmApportionmentNoTie}\label{thm:apportionment-no-tie}
    The composition of \textsc{IndependentMarkets} and the following apportionment methods is not truthful: 
    \begin{itemize}
        \item Hamilton's method
        \item Quota method
        \item Any stationary divisor method for which $\Delta > 0$ and $\frac{2}{\Delta} \not\in \mathbb{N}$
        \item Any stationary divisor method for which $\Delta > 0$ and $\frac{2}{\Delta} \in \mathbb{N}$, if we assume that tie-breaking is in favor of alternatives with higher amounts in the input allocation
    \end{itemize}
\end{restatable}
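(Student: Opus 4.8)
The plan is to refute truthfulness separately for each of the four families of apportionment methods by exhibiting, for suitable parameters $n,m,b$, an integral profile $P\in\Dc$, a voter $i$, and a misreport $p_i^\star$ such that, writing $P^\star$ for the resulting profile, $\ellone{p_i}{\M(\IM(P^\star))} < \ellone{p_i}{\M(\IM(P))}$. All four counterexamples share a common template. We pick $P$ so that $\IM(P)$ lies just on the ``wrong'' side of a rounding boundary of $\M$: there are two alternatives $j$ and $j'$ for which $\M$ rounds coordinate $j$ of $\IM(P)$ \emph{down} to $\lfloor\IM(P)_j\rfloor$ and coordinate $j'$ \emph{up} to $\lceil\IM(P)_{j'}\rceil$, while voter $i$ has $p_{i,j} > \IM(P)_j$ and $p_{i,j'} < \IM(P)_{j'}$. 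A direct computation then shows that shifting one unit of budget from $j'$ to $j$ in the output decreases $i$'s $\ell_1$-disutility by exactly~$2$. Voter $i$ achieves this shift by submitting a more extreme ballot in favour of $j$: one verifies that $\IM(P^\star)_j$ crosses the boundary, so that $\M$ now rounds $j$ up and $j'$ down. This does not contradict the truthfulness of $\IM$ as a fractional mechanism---indeed $\IM(P^\star)$ is $\ell_1$-\emph{farther} from $p_i$ than $\IM(P)$ is---rather, the rounding step converts a fractionally worse outcome into an integrally better one.

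To make this concrete, one chooses the smallest parameters for which the desired boundary configuration is realizable as an $\IM$ output; since small $m$ and small $b$ leave too little resolution (e.g.\ the profile $((1,0),(0,1))$ is forced to $(1/2,1/2)$), some instances will require $m\geq 3$ or a moderately large budget. Once $P$ and $p_i^\star$ are fixed, evaluating $\IM$ reduces to solving the normalisation equation $\sum_{j\in[m]}\med(P,\F,j,t^\star)=b$ for the $\IM$ phantom system $\F$, which is routine. The remaining, method-specific work is to locate the rounding boundary. For Hamilton's method one arranges $\IM(P)$ to have exactly one undistributed unit and two coordinates with fractional remainders $r<r'$, so the unit goes to the $r'$-alternative, and the misreport reverses the order of these two remainders; taking $r\neq r'$ strictly makes the improvement strict. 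For the quota method the same scheme works, but because that method is iterative and keeps each coordinate within its quota interval, $P$ must be chosen so that the quota constraints are slack at the relevant step, which only costs a slightly larger instance.

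For a stationary divisor method with parameter $\Delta>0$, the boundary between an output with coordinates $(a_j,a_{j'})$ and the one with $(a_j+1,a_{j'}-1)$ occurs at the fractional allocation whose $j$-to-$j'$ ratio equals $\frac{a_j+\Delta}{a_{j'}-1+\Delta}$. When $\tfrac{2}{\Delta}\notin\N$, one can place $\IM(P)$ strictly on one side of this boundary and $\IM(P^\star)$ strictly on the other, so both roundings are unambiguous and the manipulation is strict. When $\tfrac{2}{\Delta}\in\N$ (which includes Webster and Jefferson), the natural symmetric construction instead forces one of the allocations $\IM(P)$ and $\IM(P^\star)$ to land \emph{exactly} on such a boundary, so $\M$ faces a genuine tie; under the stated assumption that ties are broken towards alternatives holding a larger amount in the fractional input allocation, one checks that the tie resolves in the direction that keeps $i$'s deviation profitable.

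I expect this last case to be the main obstacle: one must treat the whole one-parameter family of stationary divisor methods uniformly in $\Delta$, ensuring both that for $\tfrac{2}{\Delta}\notin\N$ a single family of instances works for every such $\Delta$ without a $\Delta$-dependent tie creeping in, and that for $\tfrac{2}{\Delta}\in\N$ the specified tie-breaking rule really does break in the manipulator's favour---since there the deviation is profitable only because of how the tie is resolved. The $\IM$ evaluations and the disutility bookkeeping are mechanical by comparison.
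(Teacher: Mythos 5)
Your overall strategy is the right one and is the same as the paper's: for each apportionment method, exhibit an integral profile whose \textsc{IndependentMarkets} output sits on one side of a rounding boundary, together with a single-voter deviation that pushes the fractional output across that boundary so that one unit of the rounded allocation migrates from an alternative the deviator undervalues to one she overvalues, gaining her exactly $2$ in $\ell_1$-disutility even though the fractional output moves \emph{away} from her true vote. That diagnosis of why rounding breaks truthfulness is correct and matches the paper's counterexamples.

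The problem is that the proposal stops at the blueprint stage, and for a theorem of this type the blueprint is not the proof: the entire content is the existence and verification of the concrete instances, and you never produce them. Realizing the ``boundary configuration'' as an actual \textsc{IndependentMarkets} output of an integral profile, with a one-voter deviation that flips the rounding, is not routine --- the paper needs $n=10$, $m=6$, $b=8$ for Hamilton and $n=b=4$, $m=5$ for the quota method, which already signals that small instances do not suffice. For the stationary divisor methods you explicitly flag the uniform-in-$\Delta$ treatment as ``the main obstacle'' and leave it unresolved; the paper handles it by letting the instance size depend on $\Delta$ (taking $m=\lceil 2+\frac{2}{\Delta}\rceil$ alternatives with budget $b=2$), a move your plan does not anticipate. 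Moreover, your claim that for $\frac{2}{\Delta}\notin\mathbb{N}$ ``both roundings are unambiguous'' is not how the actual construction plays out: in the paper's instance the multiplier $\lambda=1+\Delta$ places the first coordinate exactly at a rounding tie $\{1,2\}$, and uniqueness of the apportionment has to be argued from the fact that no other coordinate can absorb the remaining unit --- precisely the kind of detail that only surfaces once the instance is written down. As it stands, the proposal is a plausible plan for finding a proof rather than a proof.
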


The proof of \Cref{thm:apportionment-no-tie}, along with all other omitted proofs, can be found in the appendix. 
Clearly, this theorem does not rule out the possibility that combining a different fractional mechanism with an apportionment method gives rise to a truthful integral mechanism; in fact, we will show that this is possible for the \textsc{Utilitarian} mechanism. 
However, the theorem implies that this combination approach does not preserve truthfulness in general.
In the following section, we show that by embedding the rounding within the definition of the moving-phantom mechanism itself, we obtain a general recipe for constructing truthful mechanisms.

\subsection{Integral Moving-Phantom Mechanisms} \label{subsec:discretePhantoms}

The reason why moving-phantom mechanisms can produce non-integral outputs, even when all votes are integral, 
is that the sum of medians can normalize when phantom votes (which are continuous functions) occupy non-integral positions.
We will adjust the phantom systems to the integral setting by modifying them in two ways. 
First, to guarantee integral medians, we let phantom votes increase in discrete steps rather than continuously.
Second, to guarantee normalization, we define phantom votes for each alternative separately; this also reflects the inherent necessity for non-neutrality.

\medskip

For $n, m, b \in \N$, an \textit{integral phantom system} \[\W_{n,m,b} = \{\ph_{k,j} \mid k \in [n]_0,\, j \in [m]\}\] is a set of $(n+1)\cdot m$ non-decreasing functions \[\ph_{k,j}: \mathbb{N} \cup \{0\} \rightarrow [b]_0\] with the following properties, where $z :=  b \cdot m \cdot (n+1)$: 
\begin{enumerate}
    \item $\ph_{k,j}(0) = 0$ and $\ph_{k,j}(z) \geq \lceil \frac{n-k}{n} \cdot b \rceil$ holds for every alternative $j \in [m]$ and every $k \in [n]_0$, and \label{cond1}
    \item  $\sum_{k = 0}^{n} \sum_{j =1}^m \big(\ph_{k,j}(\tau) - \ph_{k,j}(\tau-1)\big) \leq 1$ for all $\tau \in [z]$. \label{cond2}
\end{enumerate}
The idea is that we have $n+1$ phantom votes on each alternative $j \in [m]$, all starting at position $0$ at time $\tau = 0$. In each time step $\tau \rightarrow \tau+1$ at most one of the phantom votes increases its position by $1$, until eventually all phantom votes reach the position $\lceil \frac{n-k}{n} \cdot b \rceil$ or higher. (We will discuss later why this lower bound is useful.)

A family of integral phantom systems $\W = \{\W_{n,m,b} \mid n, m, b \in \N\}$ defines the \emph{integral moving-phantom mechanism} $\A^\W$. 
For a given profile $P = (p_1, \dots, p_n) \in \Dc$, and a time $\tau \in [z]_0$, we are interested in the median of the votes and the phantom votes on each alternative $j$, denoted as 
$$\med(P,\W,j,\tau) = \med(\ph_{0,j}(\tau), \dots, \ph_{n,j}(\tau), p_{1,j}, \dots, p_{n,j}).$$
The integral moving-phantom mechanism $\A^\W$ finds $\tau^{\star}\in [z]_0$ such that $\sum_{j \in [m]} \med(P,\W,j,\tau^{\star}) = b$, and returns $\A^\W(P) =a$ with $a_j = \med(P,\W,j,\tau^{\star})$ for each alternative $j \in [m]$. 
We remark that $\tau^{\star}$ necessarily exists, because by Condition \ref{cond1} of an integral phantom system it holds that $\sum_{j \in [m]} \med(P,\W,j,0) = 0$ and $\sum_{j \in [m]} \med(P,\W,j,z) \geq b$, and by Condition \ref{cond2} it holds that this sum increases by at most $1$ in each time step.\footnote{The statement $\sum_{j \in [m]} \med(P,\W,j,z) \geq b$ follows from the fact that moving-phantom mechanisms are guaranteed to reach normalization when every phantom $k$ reaches $\frac{n-k}{n} \cdot b$ (see \Cref{footnote:normalization}).} While $\tau^{\star}$ is not necessarily unique, the outcome $\A^\W(P)$ is. We illustrate an example in \Cref{fig:discrete_phantom}.

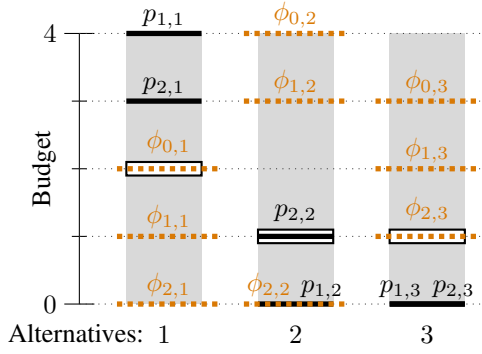
\begin{figure}
    \centering
    
    \begin{tikzpicture}[xscale=1.25, yscale=0.9]
        \def\yaxpos{0.5}
        \def\coorddist{1.4}
        \def\coordwidth{0.4}
        \def\b{4}
        \def\m{3}
        \draw (\yaxpos,0) -- (\yaxpos,\b);  
        \draw (\yaxpos-0.16,0) -- (\yaxpos+0.16,0); 
        \draw (\yaxpos-0.16,\b) -- (\yaxpos+0.16,\b); 
        \foreach \i in {2,...,\b}{
            \draw (\yaxpos-0.08,\i-1) -- (\yaxpos+0.08,\i-1); 
        }
        \node[anchor=east, xshift=-5px] at (\yaxpos,0) {$0$}; 
        \node[anchor=east, xshift=-5px] at (\yaxpos,\b) {$\b$}; 
        \node[anchor=south, rotate=90] at (\yaxpos-0.15,\b/2) {Budget}; 
        \foreach \i in {1,...,\m}{
            \filldraw[fill=black!15,draw=none] (\i*\coorddist-\coordwidth,0) rectangle (\i*\coorddist+\coordwidth,\b); 
        }
        \foreach \i in {0,...,\b}{
            \draw [dotted, darkgray] (\yaxpos,\i) -- (\m*\coorddist+\coordwidth+0.2, \i); 
        }
        \node[anchor=north east, outer sep=4.4pt] at (1*\coorddist,0) {Alternatives:};
        \foreach \i in {1,...,\m}{
            \node[anchor=north, outer sep=5pt] at (\i*\coorddist,0) {$\i$};
        }
        %
        \draw[thick, fill=white] (1*\coorddist-\coordwidth,2-0.1) rectangle (1*\coorddist+\coordwidth,2+0.1); 
        \draw[thick, fill=white] (2*\coorddist-\coordwidth,1-0.1) rectangle (2*\coorddist+\coordwidth,1+0.1); 
        \draw[thick, fill=white] (3*\coorddist-\coordwidth,1-0.1) rectangle (3*\coorddist+\coordwidth,1+0.1);
        \draw[line width=2pt, black] (1*\coorddist-\coordwidth,4) -- (1*\coorddist+\coordwidth,4) node [midway, yshift=5.4pt, xshift=0]{$p_{1,1}$}; 
        \draw[line width=2pt, black] (1*\coorddist-\coordwidth,3) -- (1*\coorddist+\coordwidth,3) node [midway, yshift=5.4pt, xshift=0]{$p_{2,1}$}; 
        \draw[line width=2pt, black] (2*\coorddist-\coordwidth,0) -- (2*\coorddist+\coordwidth,0) node [midway, yshift=5.4pt, xshift=10pt]{$p_{1,2}$}; 
        \draw[line width=2pt, black] (2*\coorddist-\coordwidth,1) -- (2*\coorddist+\coordwidth,1) node [midway, yshift=8.4pt, xshift=0]{$p_{2,2}$}; 
        \draw[line width=2pt, black] (3*\coorddist-\coordwidth,0) -- (3*\coorddist+\coordwidth,0) node [midway, yshift=5.4pt, xshift=-10pt]{$p_{1,3}$}; 
        \draw[line width=2pt, black] (3*\coorddist-\coordwidth,0) -- (3*\coorddist+\coordwidth,0) node [midway, yshift=5.4pt, xshift=10pt]{$p_{2,3}$}; 
        %
        \draw[line width=2pt, color1, dotted] (1*\coorddist-\coordwidth-0.1,0) -- (1*\coorddist+\coordwidth+0.2,0) node [midway, yshift=7pt, xshift=0] {$\ph_{2,1}$}; 
        \draw[line width=2pt, color1, dotted] (1*\coorddist-\coordwidth-0.1,1) -- (1*\coorddist+\coordwidth+0.2,1) node [midway, yshift=7pt, xshift=0] {$\ph_{1,1}$}; 
        \draw[line width=2pt, color1, dotted] (1*\coorddist-\coordwidth-0.1,2) -- (1*\coorddist+\coordwidth+0.2,2) node [midway, yshift=10pt, xshift=0] {$\ph_{0,1}$}; 
        \draw[line width=2pt, color1, dotted] (2*\coorddist-\coordwidth-0.15,0) -- (2*\coorddist+\coordwidth+0.15,0) node [midway, yshift=7pt, xshift=-10pt] {$\ph_{2,2}$}; 
        \draw[line width=2pt, color1, dotted] (2*\coorddist-\coordwidth-0.15,3) -- (2*\coorddist+\coordwidth+0.15,3) node [midway, yshift=7pt, xshift=0] {$\ph_{1,2}$}; 
        \draw[line width=2pt, color1, dotted] (2*\coorddist-\coordwidth-0.15,4) -- (2*\coorddist+\coordwidth+0.15,4) node [midway, yshift=7pt, xshift=0] {$\ph_{0,2}$}; 
        \draw[line width=2pt, color1, dotted] (3*\coorddist-\coordwidth-0.15,1) -- (3*\coorddist+\coordwidth+0.15,1) node [midway, yshift=10pt, xshift=0] {$\ph_{2,3}$};
        \draw[line width=2pt, color1, dotted] (3*\coorddist-\coordwidth-0.15,2) -- (3*\coorddist+\coordwidth+0.15,2) node [midway, yshift=7pt, xshift=0] {$\ph_{1,3}$}; 
        \draw[line width=2pt, color1, dotted] (3*\coorddist-\coordwidth-0.15,3) -- (3*\coorddist+\coordwidth+0.15,3) node [midway, yshift=7pt, xshift=0] {$\ph_{0,3}$}; 
    \end{tikzpicture}
    \caption{Example of an integral moving-phantom mechanism with $n=2$ voters, $m=3$ alternatives, and a budget of $b = 4$.
    The votes on each alternative are marked by (black) solid lines. The phantom positions are shown as (orange) dashed lines. The median vote on each alternative is marked by a rectangle.
    There are two voters with reports $(4, 0, 0)$ and $(3,1,0)$. The figure shows the positions of the phantoms at a time where normalization is reached, i.e., the sum of the median votes is $4$. The returned budget distribution is $(2,1,1)$. 
    }
    \label{fig:discrete_phantom}
\end{figure}

We show in the appendix that any integral phantom system leads to a truthful mechanism. The proof closely follows the proof of truthfulness for fractional moving-phantom mechanisms by \citet{freeman2021truthful}.

\begin{restatable}{theorem}{thmDiscreteTruthful}\label{thm:truthfulness}
    Any integral moving-phantom mechanism is truthful.
\end{restatable}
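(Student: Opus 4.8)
The plan is to transfer the truthfulness proof for fractional moving-phantom mechanisms of \citet{freeman2021truthful} to the integral setting. Fix $n,m,b$, a profile $P \in \Dc$, a voter $i$ with truthful report $p := p_i$, a misreport $p^{\star}$, and write $P^{\star}$ for the resulting profile, $a := \A^{\W}(P)$, and $a^{\star} := \A^{\W}(P^{\star})$. Let $\tau, \tau^{\star} \in [z]_0$ be normalization times for $P$ and $P^{\star}$; since the outcome does not depend on which normalization time is chosen, we are free to pick convenient ones. We must show $\ellone{p}{a} \le \ellone{p}{a^{\star}}$. Since $\lVert p \rVert_1 = \lVert a \rVert_1 = \lVert a^{\star} \rVert_1 = b$, and $|p_j - a_j| = 2\max(p_j,a_j) - (p_j + a_j)$, this is equivalent to $\sum_{j \in [m]} \max(a_j, p_j) \le \sum_{j \in [m]} \max(a^{\star}_j, p_j)$, which is the inequality I would actually establish.

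Next I would record the elementary order-theoretic properties of the map $\med(\cdot,\W,j,\cdot)$: (i) it is non-decreasing in the time argument (the phantoms $\ph_{k,j}$ are non-decreasing); (ii) it is non-decreasing in each voter's report on alternative $j$ (the median is monotone in each argument); (iii) it is $1$-Lipschitz in each single argument, so changing voter $i$'s report on $j$ by $\delta$ moves the median by at most $|\delta|$; and (iv) if voter $i$'s report on $j$ lies strictly above (resp.\ below) the current median, then increasing (resp.\ decreasing) it further leaves the median unchanged. Each of these facts is purely about order statistics of a finite list of numbers, hence holds verbatim for integer-valued lists; the only place discreteness actually enters is the existence of $\tau^{\star}$, which is handled by Conditions~\ref{cond1}--\ref{cond2} of an integral phantom system exactly as noted after its definition.

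The core of the argument is a case split according to whether $\tau^{\star} \ge \tau$ or $\tau^{\star} < \tau$; the two cases are dual (interchanging ``up'' and ``down''), so suppose $\tau^{\star} \ge \tau$. Let $S^{+} = \{j : p^{\star}_j > p_j\}$ and $S^{-} = \{j : p^{\star}_j < p_j\}$, and note $\sum_{j \in S^{+}}(p^{\star}_j - p_j) = \sum_{j \in S^{-}}(p_j - p^{\star}_j)$. For every $j \notin S^{-}$, combining (i) and (ii) gives $a^{\star}_j = \med(P^{\star},\W,j,\tau^{\star}) \ge \med(P^{\star},\W,j,\tau) \ge \med(P,\W,j,\tau) = a_j$; together with $\sum_j a^{\star}_j = \sum_j a_j = b$ this forces $\sum_{j \in S^{-}}(a_j - a^{\star}_j) = \sum_{j \notin S^{-}}(a^{\star}_j - a_j) \ge 0$. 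One then estimates, alternative by alternative, the quantity $\max(a^{\star}_j,p_j) - \max(a_j,p_j)$: it is nonnegative on every coordinate on which the outcome moved up; on the coordinates in $S^{-}$ on which it moved down, one bounds the loss by $a_j - a^{\star}_j$, and one charges any ``movement of $a^{\star}_j$ toward $p_j$'' partly to the time shift from $\tau$ to $\tau^{\star}$ via~(iii) and partly to~(iv), which limits the effect of an inflated report on $j$ once $a^{\star}_j$ has reached $p_j$. Summing these estimates and invoking the budget identity above yields $\sum_j \max(a^{\star}_j,p_j) \ge \sum_j \max(a_j,p_j)$.

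I expect the last step --- the alternative-by-alternative bookkeeping showing that whatever a manipulator gains on some coordinates is always dominated by the losses forced on others --- to be the main obstacle, exactly as in the fractional proof. The difficulty is that it genuinely does not localize to a single coordinate: a profitable shift on one alternative must be financed by a compensating misreport elsewhere, and the two are linked only through the constraint $\lVert p^{\star}\rVert_1 = b$ and the shared normalization time, so one cannot argue coordinatewise. Since every inequality appearing in that argument is between integers and uses only monotonicity, the Lipschitz property, and the order-statistic fact~(iv) --- none of which distinguishes the continuous from the discrete regime --- the proof of \citet{freeman2021truthful} should carry over with only cosmetic changes, but carefully verifying this is where the work lies.
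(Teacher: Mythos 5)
Your setup is sound: the two ingredients you isolate (monotonicity of the medians in each report and in time, plus the existence of a normalization time $\tau^{\star}$, which is indeed the only place integrality enters) are the right ones, and the reduction of truthfulness to $\sum_{j}\max(a_j,p_j)\le\sum_{j}\max(a^{\star}_j,p_j)$ is valid. But there is a genuine gap: the entire content of the theorem sits in the ``alternative-by-alternative bookkeeping,'' and you explicitly defer it (``the proof should carry over \dots verifying this is where the work lies''). Moreover, the accounting you sketch does not close as stated. Bounding the loss on each coordinate with $a^{\star}_j<a_j$ by $a_j-a^{\star}_j$ and the gain on the remaining coordinates by $0$ yields only $\sum_j\bigl(\max(a^{\star}_j,p_j)-\max(a_j,p_j)\bigr)\ge-\sum_{j:a^{\star}_j<a_j}(a_j-a^{\star}_j)$, which is nonpositive, so some additional structural fact is needed; your proposal to ``charge'' movements partly to the time shift and partly to property~(iv) is not an argument. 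The missing observation (which uses your property~(iv), but in a different place than you apply it) is that in the case $\tau^{\star}\ge\tau$ there are no losses at all: if $a_j=\med(P,\W,j,\tau)>p_{i,j}$, then voter $i$'s entry lies strictly below the median on alternative $j$, so lowering it cannot move that median and raising it can only increase it; hence $\med(P^{\star},\W,j,\tau)\ge a_j$ and, since $\tau^{\star}\ge\tau$, also $a^{\star}_j\ge a_j$, giving $\max(a^{\star}_j,p_{i,j})\ge\max(a_j,p_{i,j})$, while for $a_j\le p_{i,j}$ this inequality is trivial. With that, your route closes coordinatewise with no cross-coordinate charging, and the case $\tau^{\star}<\tau$ is dual via $\sum_j\min(a_j,p_{i,j})$.

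For comparison, the paper's proof uses a different decomposition that avoids coordinatewise case analysis altogether: it introduces the intermediate point $\med(P^{\star},\W,\cdot,\tau)$ (misreported profile, \emph{original} normalization time), observes that when the report changes each median can only move away from $p_{i,j}$, so the disutility increases by exactly $y=\sum_j|\med(P^{\star},\W,j,\tau)-\med(P,\W,j,\tau)|$; it then notes that re-normalizing from $\tau$ to $\tau^{\star}$ moves the medians monotonically by a total of $b-\sum_j\med(P^{\star},\W,j,\tau)\le y$, and concludes by the triangle inequality that this second step can decrease the disutility by at most $y$. Either route works, but as submitted yours stops exactly where the argument has to start.
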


\paragraph{Rounding Phantom Systems.} \label{subsec:rounding_phantoms}
We can construct integral moving-phantom mechanisms by rounding phantom systems. Let $\F_n = \{f_0(\cdot), \dots, f_{n}(\cdot)\}$ be a phantom system and $\llbracket \cdot \rrbracket$ be a rounding function.\footnote{A rounding function maps any $x \in \mathbb{R}$ to either $\lfloor x \rfloor$ or $\lceil x \rceil$ in such a way that if it maps $x$ to $\lceil x \rceil$, then it also maps every number between $x$ and $\lceil x\rceil$ to $\lceil x\rceil$.} Then, we first track the point in (fractional) time $t \in [0,1]$ at which $\llbracket f_k(t) \rrbracket$ changes for some $k$. We construct an integral phantom system by iterating over these points in time and moving the phantoms $\ph_{k,1}, \dots, \ph_{k,m}$ up by $1$, one after another. We have to be careful when  $\llbracket f_k(t) \rrbracket$ changes for the same $t$ and more than one $k$; in this case, we first move the phantoms with lower $k$. Formally, this leads to the following procedure (see also \Cref{fig:discrete_phantom_from_continuous}): 

\begin{itemize}
    \item Let $0 \leq t_1 < t_2 < \dots < t_{\ell} \leq 1$ be all points in time such that for some $k \in [n]_0$ 
    there is a change in $\llbracket f_k(\cdot) \rrbracket$.
    \item Let $\ph_{k,j}(0) = 0$ for $j \in [m],\, k \in [n]_0$. Let $\tau = 0$. 
    \item For $t_i \in \{t_1, \dots, t_{\ell}\}$, iterate over all $k \in [n]_0$ such that $\llbracket f_k(\cdot) \rrbracket$ changes at $t_i$ and, starting with the lowest such $k$,  do the following for $j \in [m]$ one after another: 
            \begin{itemize}
                \item $\ph_{k,j}(\tau + 1) =\ph_{k,j}(\tau) + 1$,
                \item $\ph_{k',j'}(\tau + 1) =\ph_{k,j}(\tau)$ for all $(j',k') \neq (j,k)$,
                \item increase $\tau$ by $1$.
            \end{itemize}
\end{itemize}

\begin{figure*}
    \centering
    \begin{subfigure}[b]{0.33\textwidth}
        \centering
        \begin{tikzpicture}[scale=0.9]
            \def\yaxpos{0.5}
            \def\coorddist{1.4}
            \def\coordwidth{0.4}
            \def\b{4}
            \def\m{3}
            \draw (\yaxpos,0) -- (\yaxpos,\b);  
            \draw (\yaxpos-0.16,0) -- (\yaxpos+0.16,0); 
            \draw (\yaxpos-0.16,\b) -- (\yaxpos+0.16,\b); 
            \foreach \i in {2,...,\b}{
                \draw (\yaxpos-0.08,\i-1) -- (\yaxpos+0.08,\i-1); 
            }
            \node[anchor=east, xshift=-5px] at (\yaxpos,0) {$0$}; 
            \node[anchor=east, xshift=-5px] at (\yaxpos,\b) {$\b$}; 
            \node[anchor=south, rotate=90] at (\yaxpos-0.15,\b/2) {Budget}; 
            \foreach \i in {1,...,\m}{
                \filldraw[fill=black!15,draw=none] (\i*\coorddist-\coordwidth,0) rectangle (\i*\coorddist+\coordwidth,\b); 
            }
            \foreach \i in {0,...,\b}{
                \draw [dotted, darkgray] (\yaxpos,\i) -- (\m*\coorddist+\coordwidth+0.2, \i); 
            }
            \node[anchor=north east, outer sep=4.4pt] at (1*\coorddist,0) {Alternatives:};
            \foreach \i in {1,...,\m}{
                \node[anchor=north, outer sep=5pt] at (\i*\coorddist,0) {$\i$};
            }
            %
            \node[anchor=south, yshift=5px] at (\m*\coorddist/2+\coorddist/2, \b) {$\tau$}; 
            %
            \draw[line width=2pt, color2] (1*\coorddist-\coordwidth-0.15,0) -- (\m*\coorddist+\coordwidth+0.15,0) node [xshift=7pt] {$f_2$}; 
            \draw[line width=2pt, color2] (1*\coorddist-\coordwidth-0.15,1.4) -- (\m*\coorddist+\coordwidth+0.15,1.4) node [xshift=7pt] {$f_1$}; 
            \draw[line width=2pt, color2] (1*\coorddist-\coordwidth-0.15,2.8) -- (\m*\coorddist+\coordwidth+0.15,2.8) node [xshift=7pt] {$f_0$}; 
            %
            \draw[line width=2pt, color1, dotted] (1*\coorddist-\coordwidth-0.1,0) -- (1*\coorddist+\coordwidth+0.2,0) node [midway, yshift=7pt] {$\ph_{2,1}$}; 
            \draw[line width=2pt, color1, dotted] (1*\coorddist-\coordwidth-0.1,1) -- (1*\coorddist+\coordwidth+0.2,1) node [midway, yshift=7pt] {$\ph_{1,1}$}; 
            \draw[line width=2pt, color1, dotted] (1*\coorddist-\coordwidth-0.1,2) -- (1*\coorddist+\coordwidth+0.2,2) node [midway, yshift=7pt] {$\ph_{0,1}$}; 
            \draw[line width=2pt, color1, dotted] (2*\coorddist-\coordwidth-0.15,0) -- (2*\coorddist+\coordwidth+0.15,0) node [midway, yshift=7pt] {$\ph_{2,2}$}; 
            \draw[line width=2pt, color1, dotted] (2*\coorddist-\coordwidth-0.15,1) -- (2*\coorddist+\coordwidth+0.15,1) node [midway, yshift=7pt] {$\ph_{1,2}$}; 
            \draw[line width=2pt, color1, dotted] (2*\coorddist-\coordwidth-0.15,2) -- (2*\coorddist+\coordwidth+0.15,2) node [midway, yshift=7pt] {$\ph_{0,2}$}; 
            \draw[line width=2pt, color1, dotted] (3*\coorddist-\coordwidth-0.15,0) -- (3*\coorddist+\coordwidth+0.15,0) node [midway, yshift=7pt] {$\ph_{2,3}$};
            \draw[line width=2pt, color1, dotted] (3*\coorddist-\coordwidth-0.15,1) -- (3*\coorddist+\coordwidth+0.15,1) node [midway, yshift=7pt] {$\ph_{1,3}$}; 
            \draw[line width=2pt, color1, dotted] (3*\coorddist-\coordwidth-0.15,2) -- (3*\coorddist+\coordwidth+0.15,2) node [midway, yshift=7pt] {$\ph_{0,3}$}; 
        \end{tikzpicture}
    \end{subfigure}%
    \hfill
    \begin{subfigure}[b]{0.33\textwidth}
        \centering
        \begin{tikzpicture}[scale=0.9]
            \def\yaxpos{0.5}
            \def\coorddist{1.4}
            \def\coordwidth{0.4}
            \def\b{4}
            \def\m{3}
            \draw (\yaxpos,0) -- (\yaxpos,\b);  
            \draw (\yaxpos-0.16,0) -- (\yaxpos+0.16,0); 
            \draw (\yaxpos-0.16,\b) -- (\yaxpos+0.16,\b); 
            \foreach \i in {2,...,\b}{
                \draw (\yaxpos-0.08,\i-1) -- (\yaxpos+0.08,\i-1); 
            }
            \node[anchor=east, xshift=-5px] at (\yaxpos,0) {$0$}; 
            \node[anchor=east, xshift=-5px] at (\yaxpos,\b) {$\b$}; 
            \node[anchor=south, rotate=90] at (\yaxpos-0.15,\b/2) {Budget}; 
            \foreach \i in {1,...,\m}{
                \filldraw[fill=black!15,draw=none] (\i*\coorddist-\coordwidth,0) rectangle (\i*\coorddist+\coordwidth,\b); 
            }
            \foreach \i in {0,...,\b}{
                \draw [dotted, darkgray] (\yaxpos,\i) -- (\m*\coorddist+\coordwidth+0.2, \i); 
            }
            \node[anchor=north east, outer sep=4.4pt] at (1*\coorddist,0) {Alternatives:};
            \foreach \i in {1,...,\m}{
                \node[anchor=north, outer sep=5pt] at (\i*\coorddist,0) {$\i$};
            }
            %
            \node[anchor=south, yshift=5px] at (\m*\coorddist/2+\coorddist/2, \b) {$\tau+1$}; 
            %
            \draw[line width=2pt, color2] (1*\coorddist-\coordwidth-0.15,0) -- (\m*\coorddist+\coordwidth+0.15,0) node [xshift=7pt] {$f_2$}; 
            \draw[line width=2pt, color2] (1*\coorddist-\coordwidth-0.15,1.5) -- (\m*\coorddist+\coordwidth+0.15,1.5) node [xshift=7pt] {$f_1$}; 
            \draw[line width=2pt, color2] (1*\coorddist-\coordwidth-0.15,3.0) -- (\m*\coorddist+\coordwidth+0.15,3.0) node [xshift=7pt] {$f_0$}; 
            %
            \draw[line width=2pt, color1, dotted] (1*\coorddist-\coordwidth-0.1,0) -- (1*\coorddist+\coordwidth+0.2,0) node [midway, yshift=7pt] {$\ph_{2,1}$}; 
            \draw[line width=2pt, color1, dotted] (1*\coorddist-\coordwidth-0.1,1) -- (1*\coorddist+\coordwidth+0.2,1) node [midway, yshift=7pt] {$\ph_{1,1}$}; 
            \draw[line width=2pt, color1, dotted] (1*\coorddist-\coordwidth-0.1,3) -- (1*\coorddist+\coordwidth+0.2,3) node [midway, yshift=7pt] {$\ph_{0,1}$}; 
            \draw[line width=2pt, color1, dotted] (2*\coorddist-\coordwidth-0.15,0) -- (2*\coorddist+\coordwidth+0.15,0) node [midway, yshift=7pt] {$\ph_{2,2}$}; 
            \draw[line width=2pt, color1, dotted] (2*\coorddist-\coordwidth-0.15,1) -- (2*\coorddist+\coordwidth+0.15,1) node [midway, yshift=7pt] {$\ph_{1,2}$}; 
            \draw[line width=2pt, color1, dotted] (2*\coorddist-\coordwidth-0.15,2) -- (2*\coorddist+\coordwidth+0.15,2) node [midway, yshift=7pt] {$\ph_{0,2}$}; 
            \draw[line width=2pt, color1, dotted] (3*\coorddist-\coordwidth-0.15,0) -- (3*\coorddist+\coordwidth+0.15,0) node [midway, yshift=7pt] {$\ph_{2,3}$};
            \draw[line width=2pt, color1, dotted] (3*\coorddist-\coordwidth-0.15,1) -- (3*\coorddist+\coordwidth+0.15,1) node [midway, yshift=7pt] {$\ph_{1,3}$}; 
            \draw[line width=2pt, color1, dotted] (3*\coorddist-\coordwidth-0.15,2) -- (3*\coorddist+\coordwidth+0.15,2) node [midway, yshift=7pt] {$\ph_{0,3}$}; 
        \end{tikzpicture}
    \end{subfigure}%
    \hfill
    \begin{subfigure}[b]{0.33\textwidth}
        \centering
        \begin{tikzpicture}[scale=0.9]
            \def\yaxpos{0.5}
            \def\coorddist{1.4}
            \def\coordwidth{0.4}
            \def\b{4}
            \def\m{3}
            \draw (\yaxpos,0) -- (\yaxpos,\b);  
            \draw (\yaxpos-0.16,0) -- (\yaxpos+0.16,0); 
            \draw (\yaxpos-0.16,\b) -- (\yaxpos+0.16,\b); 
            \foreach \i in {2,...,\b}{
                \draw (\yaxpos-0.08,\i-1) -- (\yaxpos+0.08,\i-1); 
            }
            \node[anchor=east, xshift=-5px] at (\yaxpos,0) {$0$}; 
            \node[anchor=east, xshift=-5px] at (\yaxpos,\b) {$\b$}; 
            \node[anchor=south, rotate=90] at (\yaxpos-0.15,\b/2) {Budget}; 
            \foreach \i in {1,...,\m}{
                \filldraw[fill=black!15,draw=none] (\i*\coorddist-\coordwidth,0) rectangle (\i*\coorddist+\coordwidth,\b); 
            }
            \foreach \i in {0,...,\b}{
                \draw [dotted, darkgray] (\yaxpos,\i) -- (\m*\coorddist+\coordwidth+0.2, \i); 
            }
            \node[anchor=north east, outer sep=4.4pt] at (1*\coorddist,0) {Alternatives:};
            \foreach \i in {1,...,\m}{
                \node[anchor=north, outer sep=5pt] at (\i*\coorddist,0) {$\i$};
            }
            %
            \node[anchor=south, yshift=5px] at (\m*\coorddist/2+\coorddist/2, \b) {$\tau+3$}; 
            %
            \draw[line width=2pt, color2] (1*\coorddist-\coordwidth-0.15,0) -- (\m*\coorddist+\coordwidth+0.15,0) node [xshift=7pt] {$f_2$}; 
            \draw[line width=2pt, color2] (1*\coorddist-\coordwidth-0.15,1.5) -- (\m*\coorddist+\coordwidth+0.15,1.5) node [xshift=7pt] {$f_1$}; 
            \draw[line width=2pt, color2] (1*\coorddist-\coordwidth-0.15,3.0) -- (\m*\coorddist+\coordwidth+0.15,3.0) node [xshift=7pt] {$f_0$}; 
            %
            \draw[line width=2pt, color1, dotted] (1*\coorddist-\coordwidth-0.1,0) -- (1*\coorddist+\coordwidth+0.2,0) node [midway, yshift=7pt] {$\ph_{2,1}$}; 
            \draw[line width=2pt, color1, dotted] (1*\coorddist-\coordwidth-0.1,1) -- (1*\coorddist+\coordwidth+0.2,1) node [midway, yshift=7pt] {$\ph_{1,1}$}; 
            \draw[line width=2pt, color1, dotted] (1*\coorddist-\coordwidth-0.1,3) -- (1*\coorddist+\coordwidth+0.2,3) node [midway, yshift=7pt] {$\ph_{0,1}$}; 
            \draw[line width=2pt, color1, dotted] (2*\coorddist-\coordwidth-0.15,0) -- (2*\coorddist+\coordwidth+0.15,0) node [midway, yshift=7pt] {$\ph_{2,2}$}; 
            \draw[line width=2pt, color1, dotted] (2*\coorddist-\coordwidth-0.15,1) -- (2*\coorddist+\coordwidth+0.15,1) node [midway, yshift=7pt] {$\ph_{1,2}$}; 
            \draw[line width=2pt, color1, dotted] (2*\coorddist-\coordwidth-0.15,3) -- (2*\coorddist+\coordwidth+0.15,3) node [midway, yshift=7pt] {$\ph_{0,2}$}; 
            \draw[line width=2pt, color1, dotted] (3*\coorddist-\coordwidth-0.15,0) -- (3*\coorddist+\coordwidth+0.15,0) node [midway, yshift=7pt] {$\ph_{2,3}$};
            \draw[line width=2pt, color1, dotted] (3*\coorddist-\coordwidth-0.15,1) -- (3*\coorddist+\coordwidth+0.15,1) node [midway, yshift=7pt] {$\ph_{1,3}$}; 
            \draw[line width=2pt, color1, dotted] (3*\coorddist-\coordwidth-0.15,3) -- (3*\coorddist+\coordwidth+0.15,3) node [midway, yshift=7pt] {$\ph_{0,3}$}; 
        \end{tikzpicture}
    \end{subfigure}%
    \caption{Illustration showing how to construct the integral phantom system $\W$ from a fractional phantom system $\F$. In this example, we have $n = 2$, $m = 3$, $b = 4$, the fractional phantom system is \textsc{IndependentMarkets}, and rounding is done using the floor function. Each fractional phantom $f_k$ is drawn as a (blue) line spanning all alternatives and each integral phantom $\ph_{k,j}$ is drawn as an (orange) dashed line. In the left figure (discrete time step $\tau$), no fractional phantom is crossing an integer value and all integral phantoms correspond to a rounded fractional phantom. As time progresses, the upper fractional phantom $f_0$ reaches $3$, at which point the corresponding integral phantoms should move from $2$ to $3$. To guarantee a time of normalization, they move one after another, as illustrated in the middle and right figures.
    }
    \label{fig:discrete_phantom_from_continuous}
\end{figure*}
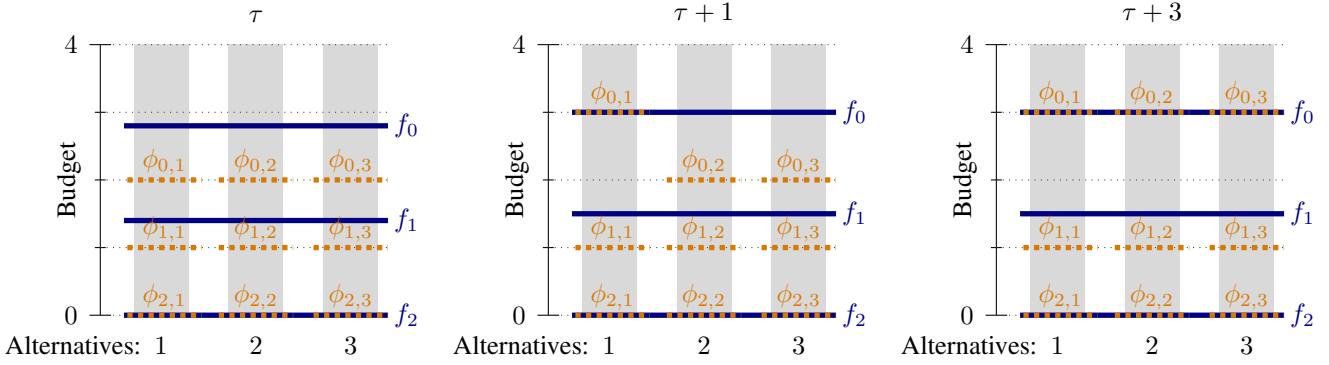

Two integral moving-phantom mechanisms that will be of particular interest are the combination of a variant of \IM and the floor rounding function (referred to as \fIM), and the combination of \UT and the floor rounding function (referred to as \fUT). We show that \fUT is equivalent to the mechanism induced by combining \UT with Hamilton's apportionment method via the process described in \Cref{subsec:rounding_continuous}. In particular, this shows that the approach from \Cref{subsec:rounding_continuous} can lead to truthful mechanisms. 

\begin{restatable}{proposition}{floorUtilEqualsUtilHamilton}\label{prop:eqUtilHamilton}
    The composition of \UT and Hamilton's method (with tie-breaking by indices of alternatives) is equivalent to \fUT.  
\end{restatable}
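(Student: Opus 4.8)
The plan is to show that, on every integral profile $P=(p_1,\dots,p_n)$ and for all $n,m,b$, both mechanisms are driven by the \emph{same} ``water level'' over the \emph{same} sequence of breakpoints, and that the only remaining choice---which alternatives absorb the leftover budget---is resolved in both cases by rounding up in increasing index order, i.e.\ exactly as Hamilton's method with ties broken towards smaller indices. For each alternative $j$, write $p_{(1),j}\le\dots\le p_{(n),j}$ for the sorted votes on $j$ and put $p_{(0),j}:=0$, $p_{(n+1),j}:=b$; let $\sigma_k:=\sum_j p_{(k),j}$, so that $0=\sigma_0\le\dots\le\sigma_n$ and $\sigma_n=\sum_j\max_i p_{i,j}\ge\|p_1\|_1=b$. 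Set $\tilde k:=\max\{k\in[n]_0:\sigma_k<b\}$, so $\sigma_{\tilde k}<b\le\sigma_{\tilde k+1}$.

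For \UT I would use that its phantoms move on disjoint time intervals: for $t\in(q/n,(q+1)/n)$ we have $f_0=\dots=f_{q-1}=b$ and $f_{q+1}=\dots=f_n=0$ while $f_q(t)=:v\in(0,b)$, so among the $2n+1$ relevant values the $n-q$ zeros and the $q$ copies of~$b$ occupy the two ends and the median is the $(q+1)$-st smallest of the $n$ votes together with $v$; that is, $\med(P,\F,j,t)=\med(p_{(q),j},v,p_{(q+1),j})$, where $\F$ denotes the phantom system of \UT. Hence $t\mapsto\sum_j\med(P,\F,j,t)$ is continuous, non-decreasing, and equals $\sigma_k$ at $t=k/n$, so there is a normalization time in phase $\tilde k$ and
\[
\UT(P)_j=\med\bigl(p_{(\tilde k),j},\,v^\star,\,p_{(\tilde k+1),j}\bigr),
\]
where $v^\star$ is the smallest $v\in[0,b]$ with $\sum_j\med(p_{(\tilde k),j},v,p_{(\tilde k+1),j})=b$ (this $v^\star$ exists by the intermediate value theorem, and $v^\star>0$ since $\sigma_{\tilde k}<b$).

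For \fUT I would use that, again because the \UT phantoms move on disjoint intervals, the times at which $\lfloor f_k(\cdot)\rfloor$ jumps are pairwise distinct and ordered as $(k,\ell)=(0,1),\dots,(0,b),(1,1),\dots,(n-1,b)$. Therefore the integral phantom system $\W$ built from $\F$ proceeds in blocks $k=0,\dots,n-1$, and inside block~$k$ the phantoms $\ph_{k,1},\dots,\ph_{k,m}$ are raised to the common level $\ell$ one index at a time, in increasing order, for $\ell=1,\dots,b$. Hence at each $\tau$ there are a block $q$, level $\ell$, and index $j'$ (all depending on $\tau$) with $\ph_{k,j}(\tau)=b$ for $k<q$, with $\ph_{q,j}(\tau)$ equal to $\ell$ for $j\le j'$ and to $\ell-1$ for $j>j'$, and with $\ph_{k,j}(\tau)=0$ for $k>q$; the same count as above gives $\med(P,\W,j,\tau)=\med(p_{(q),j},\ph_{q,j}(\tau),p_{(q+1),j})$. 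So $\sum_j\med(P,\W,j,\tau)$ also increases monotonically through $\sigma_0,\dots,\sigma_n$ at the block boundaries, and the normalization time $\tau^\star$ again lies in block $\tilde k$.

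It then remains to compare the two outputs inside block~$\tilde k$. Put $\ell^\star:=\lceil v^\star\rceil$, $A:=\{j:p_{(\tilde k),j}<v^\star<p_{(\tilde k+1),j}\}$, and $L_j:=\med(p_{(\tilde k),j},\ell^\star-1,p_{(\tilde k+1),j})$, $U_j:=\med(p_{(\tilde k),j},\ell^\star,p_{(\tilde k+1),j})$. Suppose first that $v^\star\notin\mathbb Z$. Then $A$ is exactly the set of alternatives on which $\UT(P)$ is non-integral (equal there to $v^\star$); one checks that $U_j-L_j=1$ for $j\in A$ and $U_j=L_j=\UT(P)_j$ otherwise; and $r:=b-\sum_j\lfloor\UT(P)_j\rfloor$ equals $|A|\,(v^\star-\lfloor v^\star\rfloor)$, a positive integer strictly below $|A|$. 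Consequently the partial sums inside block~$\tilde k$ satisfy
\[
\sum_{j\le j'}U_j+\sum_{j>j'}L_j=(b-r)+\bigl|A\cap\{1,\dots,j'\}\bigr|,
\]
which first reaches $b$ exactly when $j'$ is the $r$-th smallest element of $A$; reading off the medians at that state, \fUT returns $\lceil v^\star\rceil$ on the $r$ smallest-indexed members of $A$, $\lfloor v^\star\rfloor$ on the remaining members of $A$, and $\UT(P)_j$ on every other alternative. Since all members of $A$ carry the common remainder $v^\star-\lfloor v^\star\rfloor>0$ under $\UT(P)$, while all other alternatives have remainder $0$, this is precisely the allocation that Hamilton's method (breaking ties towards smaller indices) produces from $\UT(P)$. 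If instead $v^\star\in\mathbb Z$, then $\UT(P)$ is already integral, so Hamilton leaves it unchanged, and subtracting the identity $\sum_j U_j=b$ from the partial sum realized at $\tau^\star$---a sum of non-negative terms---forces every coordinate of $\fUT(P)$ to equal $\UT(P)_j$. I expect the bulk of the work to be the bookkeeping at phase/block boundaries, possible equalities among the $\sigma_k$, the $\le$-versus-$<$ conventions in the median clamps, and checking that the output at $\tau^\star$ does not depend on which normalization time in block~$\tilde k$ is used; once both mechanisms are recognized as running on the same breakpoints $\sigma_0,\dots,\sigma_n$ and the same water level $v^\star$, the remainder is routine.
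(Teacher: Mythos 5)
Your proposal is correct and follows essentially the same route as the paper's proof: both arguments rest on the observation that \UT moves its phantoms sequentially, so all non-integral coordinates of its output share the single moving phantom's value (making Hamilton's rounding pure tie-breaking among equal residues), and that the discrete phantoms of \fUT are raised in index order, which rounds up exactly the lowest-indexed such coordinates. Your version simply makes explicit (via the clamp formula for the medians and the partial-sum bookkeeping within the normalizing block) what the paper's shorter proof asserts directly.
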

 
In the following section, we show that \fIM offers a desirable property beyond truthfulness.

\section{Integral Mechanisms: Proportionality} \label{sec:truthful_proportional}

Having established the existence of truthful mechanisms in the integral setting, we next examine how well these mechanisms perform with respect to other properties. We focus on proportionality, i.e., we want a mechanism to reflect the preferences of voter groups proportionally.
There exists a proportionality notion in the fractional setting, which requires a mechanism to output the average distribution if all voters are single-minded.
A voter~$i$ is said to be \textit{single-minded} if $p_{i,j} = b$ for some alternative $j$ (and therefore $p_{i,j'} = 0$ for all alternatives $j'\ne j$). We call a profile single-minded if all voters are single-minded, and define the average allocation $\mu(P)$ where $\mu(P)_j = \frac{1}{n}\sum_{i \in N} p_{i,j}$ for each $j \in [m]$.

\paragraph[Single-Minded~Proportionality]{Single-Minded Proportionality\footnote{\citet{freeman2021truthful} called this axiom \textit{proportionality}; we deviate from this to distinguish it from other proportionality notions.} \citep{freeman2021truthful}.} A fractional budget-aggregation mechanism $\A$ is \emph{single-minded proportional} if for any $n, m, b \in \N$ with $m \ge 2$ and any single-minded profile $P$, it holds that $\A(P) = \mu(P)$.

Clearly, outputting exactly the average is not always possible in the integral setting. We therefore adapt the axiom to make it satisfiable in our setting. 

\paragraph{Single-Minded Quota-Proportionality.} An integral budget-aggregation mechanism $\A$ is \emph{single-minded quota-proportional} if for any $n, m, b \in \N$ with $m \ge 2$ and any single-minded profile $P$, the output allocation $a = \A(P)$ satisfies $a_j \in \{\lfloor \Amean(P)_j \rfloor, \lceil \Amean(P)_j \rceil\}$ for all $j \in [m]$.

We establish the existence of truthful, single-minded quota-proportional mechanisms by adapting the fractional phantom system of single-minded proportional moving-phantom mechanisms and then translating them into integral mechanisms as described in \Cref{subsec:discretePhantoms}. For $n,b \in \N$, we call a (fractional) phantom system $\F_n = \{f_0, \dots, f_n\}$ \textit{upper-quota capped} if for all $k \in [n]_0$ we have $f_k(1) = \lceil b \cdot \frac{n-k}{n} \rceil$. 

\begin{restatable}{theorem}{thmQuotaProportionality}\label{thm:quota_proportionality}
    For any single-minded proportional and upper-quota capped phantom system $\F$, the integral moving-phantom mechanism induced by $\F$ and the floor function satisfies single-minded quota-proportionality.
\end{restatable}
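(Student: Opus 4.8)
The plan is to fix an arbitrary single-minded profile $P$, write $n_j$ for the number of voters who are single-minded for alternative $j$, and set $q_j := \mu(P)_j = \frac{n_j}{n}\cdot b$; the goal is to show that the output $a = \A^\W(P)$ of the integral moving-phantom mechanism induced by $\F$ and the floor function satisfies $\lfloor q_j\rfloor \le a_j \le \lceil q_j\rceil$ for every $j$. The key simplification is that on a single-minded profile the votes on alternative $j$ consist of $n_j$ copies of $b$ and $n-n_j$ copies of $0$, which lie weakly above and weakly below, respectively, every phantom position. Consequently the median of votes and phantoms on alternative $j$ equals the $(n_j{+}1)$-th smallest of the phantom positions on $j$; write $f_{(n_j+1)}(t)$ for this quantity in the fractional system $\F$ (it is the same for every $j$) and $\phi^{\,j}_{(n_j+1)}(\tau)$ for its integral counterpart in $\W$. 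Since $\F$ is single-minded proportional, at the fractional normalization time $t^\star$ we have $f_{(n_j+1)}(t^\star)=q_j$ for all $j$, so it suffices to compare $\phi^{\,j}_{(n_j+1)}$ at an integral normalization time $\tau^\star$ with the number $q_j$.

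For the upper bound I would use that, because $\F$ is upper-quota capped, the final integral phantom positions are $\phi_{k,j}(z)=\lceil\frac{n-k}{n}b\rceil$ (where $z = b\cdot m\cdot(n+1)$), so $\phi^{\,j}_{(n_j+1)}(z)=\lceil q_j\rceil$; since $\phi^{\,j}_{(n_j+1)}$ is non-decreasing in $\tau$, we obtain $a_j=\phi^{\,j}_{(n_j+1)}(\tau^\star)\le\lceil q_j\rceil$. For the lower bound I would exploit the staggered construction of $\W$. Let $0=t_0<t_1<\dots<t_\ell$ enumerate the times at which some floor $\lfloor f_k(\cdot)\rfloor$ changes value; continuity of the $f_k$ guarantees that every such change has size exactly $1$, so if $T_i$ denotes the integral time at which the construction has just finished processing the batch of changes occurring at $t_i$, then every phantom satisfies $\phi_{k,j}(T_i)=\lfloor f_k(t_i)\rfloor$ (a short induction on $i$ using $f_k(0)=0$), and hence $\phi^{\,j}_{(n_j+1)}(T_i)=\lfloor f_{(n_j+1)}(t_i)\rfloor$ because flooring commutes with taking order statistics.

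To conclude, locate $t^\star$ in the half-open gap $[t_{i-1},t_i)$ for the appropriate index $i$ (taking $t_0=0$). Since no floor changes on this gap, $\lfloor f_{(n_j+1)}(t_{i-1})\rfloor=\lfloor f_{(n_j+1)}(t^\star)\rfloor=\lfloor q_j\rfloor$, and therefore $\phi^{\,j}_{(n_j+1)}(T_{i-1})=\lfloor q_j\rfloor$ for all $j$ simultaneously; in particular $\sum_j\phi^{\,j}_{(n_j+1)}(T_{i-1})=\sum_j\lfloor q_j\rfloor\le\sum_j q_j=b$, so by monotonicity no integral normalization time can lie strictly before $T_{i-1}$ unless $\sum_j\lfloor q_j\rfloor=b$. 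If $\tau^\star\ge T_{i-1}$, monotonicity of $\phi^{\,j}_{(n_j+1)}$ gives $a_j\ge\lfloor q_j\rfloor$; otherwise $b=\sum_j\phi^{\,j}_{(n_j+1)}(\tau^\star)\le\sum_j\phi^{\,j}_{(n_j+1)}(T_{i-1})=\sum_j\lfloor q_j\rfloor\le b$ forces each $q_j$ to be an integer and each $a_j=q_j$. Either way $a_j\ge\lfloor q_j\rfloor$, and combined with the upper bound this proves the claim. The one configuration not covered is when $t^\star$ lies at or beyond $t_\ell$; there the floors of all $f_k$ already equal their values at $t=1$, so $\lfloor q_j\rfloor=\lfloor f_{(n_j+1)}(t^\star)\rfloor=\lfloor f_{(n_j+1)}(1)\rfloor=\lceil q_j\rceil$ forces every $q_j$ to be integral, and the claim again follows from the upper bound together with $\sum_j a_j=b=\sum_j q_j$.

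I expect the main obstacle to be the bookkeeping in the staggered construction: rigorously verifying $\phi_{k,j}(T_i)=\lfloor f_k(t_i)\rfloor$, which hinges on processing changes one at a time in increasing order of $k$ and on each floor-change of a continuous phantom having size exactly $1$, together with the careful treatment of the boundary positions of $t^\star$ (when it coincides with a change point or lies past $t_\ell$). Once these are in place, the median-equals-$(n_j{+}1)$-th-phantom identity and the order-statistic monotonicity and comparison facts underpinning both bounds are routine.
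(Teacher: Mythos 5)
Your proposal is correct and follows essentially the same route as the paper's proof: identify the median on alternative $j$ with the $(n_j+1)$-th smallest phantom, use single-minded proportionality to pin that fractional phantom at $\mu(P)_j$ at normalization, use the floor-tracking property of the integral construction to find a time where all these medians simultaneously equal $\lfloor \mu(P)_j\rfloor$ (sum $\le b$, giving the lower bound), and use upper-quota capping for the $\lceil \mu(P)_j\rceil$ upper bound. Your explicit $T_i$ bookkeeping and edge-case handling just spell out what the paper compresses into ``by construction of $\ph$, there must exist a time step $\tau_1$ \dots''.
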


We can transform any phantom system $\F_n$ into an upper-quota capped system $\F'_n$: First extend $\F_n$ to guarantee $f_k(t) \geq \lceil b \cdot \frac{n-k}{n} \rceil$ (if necessary), then set $f_k'(t) = \min(f_k(t), \lceil b \cdot \frac{n-k}{n} \rceil)$. Generally, $\A^{\F}$ and $\A^{\F'}$ need not be equivalent, but in the case of the \IM phantom system---call it $\mathcal{G}$---they are. 
We define \fIM as the integral moving-phantom mechanism induced by $\mathcal{G}'$ and the floor function.  
\Cref{thm:quota_proportionality} then implies that \fIM is single-minded quota-proportional.
We remark that the theorem does not hold if we use $\mathcal{G}'$ (or $\mathcal{G}$) and the ceiling function. For example, consider the instance with $n = 6$, $m = 4$, and $b = 4$, where three voters vote $(4, 0, 0, 0)$ and one voter each votes $(0, 4, 0, 0)$, $(0, 0, 4, 0)$, and $(0, 0, 0, 4)$. 
The upper $n$ phantoms are immediately rounded to~$1$, leading to the output $(1,1,1,1)$, which violates single-minded quota-proportionality for the first alternative.

Single-minded quota-proportionality is a rather weak proportionality notion, as it only applies to a highly restricted subclass of profiles.
Consider, for example, the non-single-minded profile $P = (p_1, p_2)$ for $n = 2$, $m = 4$, and $b = 2$ with $p_1 = (1,1,0,0)$ and $p_2 = (0,0,1,1)$. Clearly, a desirable outcome should allocate $1$ to either alternative $1$ or $2$ and also $1$ to either alternative $3$ or $4$, so that both voters are equally represented. However, integral moving-phantom mechanisms do not consider which of the votes on different alternatives come from the same voter, and may therefore (depending on the tie-breaking) return an allocation like $(1,1,0,0)$.

In order to define notions that capture a wider range of scenarios, we demonstrate that our setting can be interpreted as a subdomain of the well-studied domain of \emph{approval-based committee voting} \citep{lackner2023multi}. 
This allows us to import established axioms of proportional representation to our setting. 
We show that the failure to satisfy these axioms is not a weakness of integral moving-phantom mechanisms per se, but rather stems from more general limitations of truthful mechanisms.

\paragraph{Connection to Approval-Based Committee Voting.} In approval-based committee voting, we have a set of voters $N$, a set of candidates $M$, and a committee size $k \in \N$. Each voter $i$ approves a subset of the candidates $A_i \subseteq M$, and a \textit{voting rule} chooses a \textit{committee} $W \subseteq M$ of size $|W| = k$. The satisfaction of a voter $i$ with a committee $W$ is $|A_i \cap W|$.

We can interpret any instance of our setting as an approval-based committee election with an equivalent utility model (see also \citet{goel2019knapsack}).
Let $P = (p_1, \dots, p_n)$ be a profile in the integral budget aggregation setting. We set $M = \{c_{j,\ell} \mid j \in [m],\, \ell \in [b]\}$ to be the set of candidates, $k = b$, and $A_i = \bigcup_{j \in [m]} \{c_{j,\ell} \mid \ell \in [p_{i,j}]\}$. Intuitively, for each alternative we create $b$ (ordered) candidates corresponding to it, and a voter approves as many of these candidates (in order) as the amount of budget that she would like to allocate to that alternative. This translation is illustrated in \Cref{fig:approval_voting}. Any chosen allocation $a \in \D$ can similarly be translated into a committee $W = \bigcup_{j \in [m]} \{c_{j,\ell} \mid \ell \in [a_j]\}$.
To see that the (dis)satisfactions of the voters coincide in both models, observe that for a voter $i$ and allocation $a \in \D$, the following holds:
$
    \ellone{p_i}{a} = 2b - 2 \sum_{j \in [m]} \min(p_{i,j}, a_j).
$
This is equal to $2b - 2 |A_i \cap W|$, so a voter $i$ prefers an allocation $a$ over another allocation $a'$ if and only if voter $i$ prefers the corresponding committee $W$ over $W'$.

\begin{figure}
    \centering
    
    \begin{tikzpicture}[xscale=1.25, yscale=0.8]
        \def\yaxpos{0.5}
        \def\coorddist{1.4}
        \def\coordwidth{0.4}
        \def\b{4}
        \def\m{3}
        \def\allone{3}
        \def\alltwo{1}
        \def\allthree{0}
        \draw (\yaxpos,0) -- (\yaxpos,\b);  
        \draw (\yaxpos-0.16,0) -- (\yaxpos+0.16,0); 
        \draw (\yaxpos-0.16,\b) -- (\yaxpos+0.16,\b); 
        \foreach \i in {2,...,\b}{
            \draw (\yaxpos-0.08,\i-1) -- (\yaxpos+0.08,\i-1); 
        }
        \node[anchor=east, xshift=-5px] at (\yaxpos,0) {$0$}; 
        \node[anchor=east, xshift=-5px] at (\yaxpos,\b) {$\b$}; 
        \node[anchor=south, rotate=90] at (\yaxpos-0.15,\b/2) {Budget}; 
        \foreach \i in {1,...,\m}{
            \filldraw[fill=black!15,draw=none] (\i*\coorddist-\coordwidth,0) rectangle (\i*\coorddist+\coordwidth,\b); 
        }
        \foreach \i in {0,...,\b}{
            \draw [dotted, darkgray] (\yaxpos,\i) -- (\m*\coorddist+\coordwidth+0.2, \i); 
        }
        \node[anchor=north east, outer sep=4.4pt] at (1*\coorddist,0) {Alternatives:};
        \foreach \i in {1,...,\m}{
            \node[anchor=north, outer sep=5pt] at (\i*\coorddist,0) {$\i$};
        }
        \foreach \all [count=\j] in {\allone, \alltwo, \allthree}{
            \foreach \k in {1, ..., \b}{
                \ifthenelse{\not{\all<\k}}{
                    \filldraw[fill=color1!50, draw=gray] (\j*\coorddist-\coordwidth,\k-1) rectangle (\j*\coorddist+\coordwidth,\k) node [midway]{$c_{\j,\k}$}; 
                }{
                    \filldraw[fill=white, draw=gray] (\j*\coorddist-\coordwidth,\k-1) rectangle (\j*\coorddist+\coordwidth,\k) node [midway]{$c_{\j,\k}$}; 
                }
             }
        }
        \foreach \all [count=\j] in {\allone, \alltwo, \allthree}{
            \draw[line width=2pt, color2] (\j*\coorddist-\coordwidth,\all) -- (\j*\coorddist+\coordwidth,\all) node [anchor=west, xshift=-3pt]{$a_{\j}$}; 
        }
    \end{tikzpicture}
    \caption{
        Example showing for $m=3$ and $b=4$ how a vote $p_i \in \D$ can be interpreted as an approval ballot, i.e., $p_i = (3,1,0)$ is translated to $A_i = \{c_{1,1}, c_{1,2}, c_{1,3}, c_{2,1}\}$. We apply a similar translation when mapping an allocation $a$ to a committee $W$.
    }
    \label{fig:approval_voting}
\end{figure}
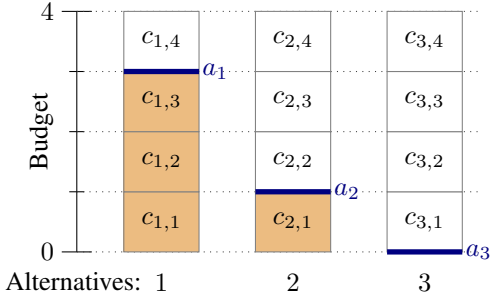

Using this connection to approval-based committee voting, we translate two representation axioms to our setting. 

\paragraph[Justified Representation]{Justified Representation (JR) \citep{ABC+15a}.} For a profile $P = (p_1, \dots, p_n)$, we say that a voter group $N' \subseteq [n]$ is \textit{cohesive} if $|N'| \ge \frac{n}{b}$ and, for some alternative $j$, it holds that $p_{i,j} > 0$ for all $i \in N'$. An allocation $a \in \D$ provides \textit{JR} if for each cohesive group $N'\subseteq [n]$, there is a voter $i \in N'$ and an alternative $j$ such that $a_j > 0$ and $p_{i,j} > 0$. 
A mechanism provides JR if it always returns an allocation providing JR. 

\paragraph[Extended Justified Representation +]{Extended Justified Representation+ (EJR+) \citep{BrPe23a}.} For a profile $P = (p_1, \dots, p_n)$, an allocation $a \in \D$ provides \textit{EJR+} if there is no alternative $j$, integer $\ell \in [b]$, and voter group $N' \subseteq [n]$ with $|N'| \ge \ell \cdot \frac{n}{b}$ such that $p_{i,j} > a_j$ and $\sum_{j' \in [m]} \min(p_{i,j'}, a_{j'}) < \ell$ for all voters $i \in N'$.
A mechanism provides EJR+ if it always returns an allocation providing EJR+.

We establish an impossibility result for each of these axioms. 
For the first impossibility, we need the additional axiom \textit{anonymity}, which disallows a mechanism from making decisions based on the identity of the voters.
(However, a mechanism can still discriminate among the alternatives.)

\paragraph{Anonymity}
 A mechanism $\A$ is \emph{anonymous} if for any profile $(p_1, \dots, p_n)$ and any permutation of voters $\sigma: [n] \rightarrow [n]$, it holds that
    $
        \A(p_1, \dots, p_n) = \A(p_{\sigma(1)}, \dots, p_{\sigma(n)}).
    $

\begin{restatable}{theorem}{thmJR}\label{thm:JR}
    No integral mechanism satisfies anonymity, truthfulness, and JR.
\end{restatable}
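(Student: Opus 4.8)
The plan is to turn the statement into a \emph{finite} feasibility check for one carefully chosen small instance $(n,m,b)$, and then to show that this check fails. For fixed $(n,m,b)$ the ballot set $\D$ is finite, so $\Dc$ is finite, and for every profile one can compute directly --- from its cohesive groups --- the nonempty finite set of allocations in $\D$ that provide JR. An anonymous mechanism is then just an assignment, to each unordered profile, of one of its JR allocations, and truthfulness becomes a finite system of inequalities $\ellone{p_i}{\A(P)} \le \ellone{p_i}{\A(P^\star)}$ linking profiles that differ in a single voter's ballot (equivalently, via $\ellone{p_i}{a} = 2b - 2\sum_j \min(p_{i,j},a_j)$, a bound on how much the deviating voter's ``overlap'' with the outcome may change). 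So I would first encode ``is there an anonymous, truthful, JR mechanism on this instance?'' as a constraint-satisfaction / SAT instance, obtain UNSAT, and then extract from the solver's reasoning a small, human-verifiable inconsistent subsystem.

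To make the extracted certificate readable I would organise it around single-voter deviation chains. First pick a seed profile $P_0$ on which JR already pins $\A(P_0)$ down to a handful of allocations --- e.g.\ a profile assembled from blocks of single-minded voters, where each block of size at least $n/b$ forces its alternative to receive positive budget, leaving little slack. For each remaining candidate value $a$ of $\A(P_0)$, identify a voter $i$ whose true ballot $p_i$ is served suboptimally by $a$, and trace the profiles $P_0 \to P_1 \to \dots$ reached by letting $i$ (and later other voters) swap ballots. On each $P_t$ I would (i) use JR to discard allocations that are not JR for $P_t$, and (ii) use truthfulness to propagate the constraint that $\A(P_t)$ must be no better for the relevant deviating voter than the value forced on the previous profile. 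Anonymity enters twice: it shrinks the set of profiles that must be examined, and it lets me recognise when two profiles appearing on different branches are permutations of one another, so that branches can be joined and the derivation closed. Each branch terminates either because some $P_t$ is forced to take a JR allocation that \emph{strictly} improves the deviating voter (contradicting truthfulness) or because the accumulated constraints clash with an allocation forced earlier; once every branch is closed, no anonymous, truthful, JR mechanism exists on that instance, hence none exists at all.

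The hard part is combinatorial, not conceptual. JR is quite weak, so many naturally symmetric ``split'' profiles have all their JR allocations equivalent for every voter and thus generate no usable truthfulness constraint; the argument is only tractable after finding the smallest instance on which the impossibility actually bites and after choosing the seed profile and the order of deviations so that JR prunes aggressively at every step, keeping the certificate short. Verifying that a given finite list of profiles, their JR allocations, and the truthfulness links among them is genuinely inconsistent is mechanical --- it is exactly what the solver certifies --- but isolating a compact such list by hand is where the effort goes. Finally, I would confirm that anonymity is genuinely needed: without it one expects to satisfy JR and truthfulness by using voter identities to break the ties JR leaves open, so the impossibility rests on the interaction of all three axioms rather than on JR and truthfulness alone.
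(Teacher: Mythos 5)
Your plan is, methodologically, exactly the paper's proof: for a fixed small $(n,m,b)$ the existence of an anonymous, truthful, JR mechanism is encoded as a SAT instance (one Boolean variable per pair of an unordered profile and one of its JR allocations, exactly-one clauses per profile, and a binary clause forbidding each pair of outcomes that would witness a profitable single-voter deviation); the solver reports UNSAT; and a human-readable certificate is extracted as a case distinction over the possible outcomes of one seed profile, closed by chains of truthfulness applications with JR pruning at each step and anonymity used to identify permuted profiles. The paper carries this out for $n=3$, $m=4$, $b=3$ (where, since $n=b$, JR reduces to: every voter must have some positively-funded alternative on which she placed positive budget), and the extracted certificate spans 45 profiles and 203 truthfulness applications, branching on the four surviving outcomes of the profile $((0,0,3,0),(0,2,0,1),(2,1,0,0))$.

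The gap is that what you have written is a plan for a proof, not a proof. For a statement of this kind the entire mathematical content \emph{is} the finite witness: the specific unsatisfiable instance and the inconsistent subsystem of constraints. You do not name any instance $(n,m,b)$, you do not exhibit a seed profile, and you do not produce (or even bound) the certificate; you explicitly defer ``the hard part'' to future combinatorial work. Until that work is done there is no argument that any instance is actually infeasible --- a priori it could be that on every finite instance the JR sets are loose enough that some tie-breaking yields a truthful anonymous selection, in which case the theorem would simply be false. So the proposal cannot be credited as a proof; it is a correct and complete description of how one would go about finding one. (Your closing remark about checking that anonymity is genuinely needed is a sensible sanity check but is not part of the claim; the paper addresses the analogous concern by separately proving an anonymity-free impossibility for the stronger axiom EJR+ together with range-respect.)
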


In order to prove \Cref{thm:JR}, we use a computer-aided approach similar to the ones used, e.g., by \citet{Pete18c}, \citet{BBP+21a}, and \citet{DDE+23b}. 
For fixed $n,m,b$, we translate the search for an anonymous, truthful, and JR mechanism into a SAT formula, and use a SAT-solver to check for satisfiability. Each satisfying assignment corresponds to a mechanism $\A_{n,m,b}$ satisfying these axioms. For $n = 3$, $m = 4$, and $b = 3$, the SAT formula is unsatisfiable, which implies that no anonymous, truthful, and JR mechanism exists. We explain how to encode these axioms into a SAT problem and give a full proof of \Cref{thm:JR} in \Cref{app:truthful_proportional}. We extracted a proof that is human-readable, but lengthy---it argues over 45 different profiles and applies truthfulness 203 times.
Therefore, we additionally present a second result with a (much) shorter proof. For this result, we consider the stronger proportionality notion EJR+ and add range-respect to the list of axioms. In return, this impossibility does not require anonymity as one of the axioms.

\paragraph{Range-respect.} A mechanism $\A$ is \emph{range-respecting} if for any $n,m,b$ and any profile $P = (p_1, \dots, p_n) \in \Dc$, the following holds for the allocation $a = \A(P)$:
$$
    \min_{i \in [n]} p_{i,j} \le a_j \le \max_{i \in [n]} p_{i,j}  \text{ for all } j \in [m].
$$

\begin{restatable}{theorem}{thmEJRplusRR}\label{thm:EJRplus_RR}
    No integral mechanism satisfies truthfulness, EJR+, and range-respect.
\end{restatable}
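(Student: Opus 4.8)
The plan is to fix a single small triple $(n,m,b)$ and show that no function $\A_{n,m,b}$ can be truthful while also satisfying EJR+ and range-respect; since all three properties are quantified over all parameters, a contradiction at one triple suffices.

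First I would record how strongly EJR+ and range-respect pin down the output on structured profiles. Range-respect forces $a_j = c$ on any coordinate $j$ on which every voter reports the same value $c$; so on profiles where the voters agree except on a few coordinates the feasible region collapses to a small (often one- or two-element) set, and on a single-minded profile it forces $a_j = 0$ on every alternative that nobody picks. EJR+ is most effective on profiles built from blocks of identical ballots: if $t$ voters all place positive weight on alternative $j$, then together with $j$ they witness the size condition of EJR+ for every $\ell \le tb/n$, so -- using the identity $\ellone{p_i}{a} = 2b - 2\sum_{j}\min(p_{i,j},a_j)$ quoted in the excerpt -- each of them must attain overlap at least $\lfloor tb/n\rfloor$, which for single-minded voters reads $a_j \ge \lfloor tb/n\rfloor$. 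In particular, on a single-minded profile for which the average $\mu(P)$ has integer coordinates, the EJR+ lower bounds sum exactly to $b$, so $\A(P) = \mu(P)$ is forced -- this is the principal source of uniquely determined profiles, and single-voter misreports into such profiles (as well as into profiles where two voters reporting the same value on a coordinate fix that coordinate) are the lever for the truthfulness argument. Using these observations I would tabulate, for each profile appearing in the proof, the allocations left feasible by EJR+ and range-respect.

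The second step is the truthfulness argument. Starting from a profile $P$ whose feasible set is not a singleton, I would pick a voter $i$ who strictly prefers one feasible allocation $a^{\mathrm{good}}$ to another feasible allocation $a^{\mathrm{bad}}$ (i.e.\ overlaps more with $a^{\mathrm{good}}$) and show that if the mechanism output $a^{\mathrm{bad}}$ on $P$, then voter $i$ could misreport some $p_i^{\star}$ to reach a profile $P^{\star}$ on which EJR+ and range-respect -- together, where needed, with values of $\A$ pinned down in earlier steps -- yield an allocation giving her, evaluated at her true ballot $p_i$, strictly smaller disutility than $a^{\mathrm{bad}}$ gives. This kills $a^{\mathrm{bad}}$; iterating over the finite list of profiles in play and their feasible allocations rules out every possibility and yields the contradiction. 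Since anonymity is not assumed, the entire case analysis is over the mechanism's choices on a fixed finite set of profiles, not over voter permutations, which keeps it finite and -- as the excerpt advertises -- much shorter than the JR argument.

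The main obstacle is choosing the triple $(n,m,b)$ and the chain of profiles so that the feasible sets remain tight throughout. For degenerate parameters -- e.g.\ $n > b$, where no individual voter ever witnesses EJR+ and single voters' demands are free, or small $m$, where a phantom-type rule on the few relevant coordinates survives -- EJR+ and range-respect are too weak to rule out a truthful mechanism; and even for a well-chosen instance the implications one obtains can be of the inert form ``$\A(P^{\star}) \ne a^{\mathrm{bad}}$'' and never close into a genuine contradiction. So the real work is locating an instance in which every profile entering the argument is pinned down (or reduced to two options that truthfulness then separates), and carrying out the EJR+ bookkeeping -- enumerating the candidate witnessing groups and their satisfaction thresholds, which the committee-voting reformulation makes routine -- correctly on each of those profiles.
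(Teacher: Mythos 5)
Your high-level plan is the same as the paper's: fix one small triple, use range-respect and EJR+ to shrink the feasible sets on a handful of profiles, and close the loop with single-voter deviations. But as written this is a description of the method, not a proof. The entire content of the theorem lies in exhibiting a concrete instance and chain of profiles for which the bookkeeping actually closes, and you explicitly defer exactly that step (``the real work is locating an instance\dots''). For the record, the paper does it with $n=b=3$, $m=4$, true profile $P=((1,2,0,0),(1,0,2,0),(1,0,0,2))$: range-respect pins $a_1=1$, so some $j\in\{2,3,4\}$ gets $0$; the corresponding voter deviates to put all $3$ units on $j$, and EJR+ on the deviated profile forces both $a^\star_1\ge 1$ and $a^\star_j\ge 1$, which strictly improves her true disutility from $4$ to at most $2$. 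Note that the deviated profile is neither single-minded nor fully pinned down---only two lower bounds are needed---so your emphasis on profiles whose output is uniquely determined is not where the argument actually lives.

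There is also a substantive imprecision in how you use EJR+. You state it as a positive guarantee that every member of a group of $t$ voters supporting alternative $j$ ``must attain overlap at least $\lfloor tb/n\rfloor$.'' EJR+ only rules out the existence of a group in which \emph{every} member simultaneously has $p_{i,j}>a_j$ and satisfaction below $\ell$; its contrapositive guarantees that \emph{some} member escapes, not that each member reaches overlap $\ell$. The two coincide for blocks of identical ballots, but the paper's key application is to the group $\{2,3\}$ with distinct ballots $(1,0,2,0)$ and $(1,0,0,2)$ (cohesive only on alternative $1$, with $\ell=2$), where one must verify the violation conditions voter by voter against a hypothetical output such as $(0,1,1,1)$. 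If you carried out your plan with the ``each member'' reading you would either derive constraints EJR+ does not actually imply, or restrict yourself to identical-ballot blocks, for which (as your own discussion of degenerate parameters suggests) it is not clear the argument closes. So the gap is twofold: the instance and case analysis are missing, and the EJR+ lemma you intend to lean on is stated more strongly than the axiom supports.
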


\begin{proof}
    Suppose for contradiction that there is a truthful, EJR+, and range-respecting mechanism $\A$. 
    Let $n=3$, $m=4$, and $b=3$, and
    consider the profile
    \begin{align*}
        P &= ((1,2,0,0),(1,0,2,0),(1,0,0,2)).
    \end{align*}
    Range-respect requires the first alternative to receive exactly $1$, leaving alternative $2$, $3$, or $4$ with zero budget. 
    Assume first that $\A(P)_2 = 0$. Consider the profile 
    \begin{align*}
        P^\star &= ((0,3,0,0),(1,0,2,0),(1,0,0,2)). 
    \end{align*}

    We claim that EJR+ implies that $\A(P^\star)_2 \ge 1$ and $\A(P^\star)_1 \ge 1$. For the former statement, notice that otherwise the voter set $\{1\}$ yields an EJR+ violation. For showing the latter statement, we now assume $\A(P^\star)_2 \ge 1$ and suppose for contradiction that $\A(P^\star)_1 =0$. 
    If $\A(P^\star)_3 = \A(P^\star)_4 = 1$, EJR+ is violated for for alternative $1$ and voter set $\{2,3\}$. Otherwise, there is $j \in \{3,4\}$ with $\A(P^\star)_j = 0$, which violates EJR+ for alternative $j$ and voter set $\{j-1\}$.
    
    Hence, we must have $\A(P^\star)_1 \ge 1$ and $\A(P^\star)_2 \ge 1$.
    However, this contradicts truthfulness, as voter~$1$ from profile $P$ can misreport $(0,3,0,0)$ to decrease her disutility.

    The cases $\A(P)_3 = 0$ and $\A(P)_4 = 0$ can be handled similarly by choosing $((1,2,0,0),(0,0,3,0),(1,0,0,2))$ or $((1,2,0,0),(1,0,2,0),(0,0,0,3))$ for $P^\star$, respectively.
\end{proof}

\section{Fractional-Input Mechanisms} \label{sec:continuous_input}

While both the integral and fractional budget aggregation settings allow for truthful mechanisms, we demonstrate in this section that truthful fractional-input mechanisms (i.e., those that map from $\Cc$ to $\D$) are significantly more restricted. In particular, we prove that the only \textit{truthful} and \textit{onto} fractional-input mechanisms are \textit{dictatorial}. This stands in contrast to the integral setting, where one can verify that, e.g., \fIM is onto and non-dictatorial. Our result builds upon the literature on dictatorial domains in ranked-choice elections. Thus, before formalizing our result in \Cref{subsec:imp-fractional}, we briefly introduce ranked-choice elections along with a result on dictatorial domains by \citet{ACS03b}. 

\subsection{Dictatorial Domains}
Let $A$ be a set of alternatives and $\mathcal{L}(A)$ be the set of all strict rankings over $A$. We call $\mathbb{D} \subseteq \mathcal{L}(A)$ a \emph{(sub)domain}. In the following, we state the concept of \textit{linkedness} for subdomains, as defined by \citet{ACS03b}. 

\paragraph{Linked Domains.}
Let $\mathbb{D} \subseteq \mathcal{L}(A)$ be a subdomain. 
\begin{itemize}
    \item We call two alternatives $a,a' \in A$ \emph{connected} in $\mathbb{D}$ if there exist strict rankings $\triangleright, \triangleright' \in \mathbb{D}$ such that $a$ is ranked first by $\triangleright$ and second by $\triangleright'$, and vice versa for $a'$. 
    \item We say that alternative $a \in A$ is \emph{linked} to a subset $B \subseteq A$ if there exist distinct $a', a'' \in B$ such that $a$ is connected to both $a'$ and $a''$ in $\mathbb{D}$.
    \item We call the subdomain $\mathbb{D}$ \emph{linked} if we can order the alternatives in $A$ into a vector $(a^{1}, \dots, a^{|A|})$ such that 
    $a^{1}$ is connected to $a^{2}$ and, for all $k \in \{3, \dots, |A|\}$,
    it holds that $a^{k}$ is linked to $\{a^{1}, \dots, a^{k-1}\}$.
\end{itemize}

Informally, \citet{ACS03b} have shown that the Gibbard--Satterthwaite theorem \citep{gibbard1973manipulation,satterthwaite1975strategy} holds for all linked domains.
We state their theorem below and defer the formal definitions of a \textit{social choice function}, \textit{unanimous}, \textit{truthful}, and \textit{dictatorial} in the context of ranked-choice voting to \Cref{sec:strengthening}. 

\begin{theorem}[{\cite[Theorem 3.1]{ACS03b}}] \label{thm:strict_order_impossibility}
    For any set of alternatives $A$ with $|A|\ge 3$, the following holds: If a subdomain $\mathbb{D} \subseteq \mathcal{L}(A)$ is linked, then any unanimous and truthful social choice function on domain $\mathbb{D}$ is dictatorial for any number of voters  $n \in \mathbb{N}$.
\end{theorem}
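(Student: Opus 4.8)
The plan is to follow the standard blueprint for Gibbard--Satterthwaite-type results on restricted domains: convert strategy-proofness into a monotonicity property, use a pivotal-voter argument to identify a single voter who is \emph{decisive} over one pair of alternatives, and then exploit the combinatorial structure of a linked domain to \emph{propagate} decisiveness until that voter is a full dictator. It is convenient to first prove the statement for $n = 2$ and then lift it to arbitrary $n$ by a coalitional induction (pitting one voter against the coalition of the others), though this lifting itself requires checking that the restricted profiles one constructs remain inside $\mathbb{D}$.

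The first ingredient is that any strategy-proof social choice function $f$ on $\mathbb{D}$ is \emph{monotone}: if $f(P) = a$ and $P'$ is obtained from $P$ by having each voter move to a ranking in $\mathbb{D}$ in which no alternative rises above $a$, then $f(P') = a$. This follows from two applications of strategy-proofness, one voter at a time. Together with unanimity (if every voter top-ranks the same alternative, it is chosen), this already pins the outcome down considerably on ``two-alternative'' subprofiles.

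Next I would locate a decisive voter on the first connected pair. Let $a^1$ be connected to $a^2$ via rankings $\triangleright, \triangleright' \in \mathbb{D}$, with $a^1$ first and $a^2$ second in $\triangleright$ and vice versa in $\triangleright'$. On profiles where every voter reports $\triangleright$ or $\triangleright'$, strategy-proofness plus unanimity forces the outcome into $\{a^1, a^2\}$ and monotonicity makes it coordinate-wise monotone. This by itself is not enough --- a two-alternative monotone rule can be non-dictatorial (e.g.\ a biased rule that returns $a^1$ unless everyone reports $\triangleright'$) --- so one must bring in a third alternative, which exists because $|A| \ge 3$ and can be reached through the domain's other connections, to run a pivotal-voter argument and conclude that one fixed voter $d$ always gets her way between $a^1$ and $a^2$. \textbf{This step is where I expect the real difficulty to lie:} excluding the biased two-alternative rules, and ensuring that every auxiliary ranking used in the pivotal argument genuinely belongs to the possibly sparse domain $\mathbb{D}$, is the technical heart of the proof.

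Finally I would propagate the dictatorship along the linked order $(a^1, \dots, a^{|A|})$ by induction on $k$, with inductive hypothesis that $d$ is a dictator on $\{a^1, \dots, a^{k-1}\}$ in the sense that $f$ always returns $d$'s top-ranked element of that set whenever all voters top-rank something in it. For the step, $a^k$ is connected to two earlier alternatives $a^p$ and $a^q$; feeding the corresponding witnessing rankings into the monotonicity property shows that the decisive voter for $\{a^k, a^p\}$ and for $\{a^k, a^q\}$ must coincide with $d$ (any other voter would contradict monotonicity, since with $d$'s report fixed one can move between the relevant in-domain profiles without changing the winner), and hence that $d$ is a dictator on all of $\{a^1, \dots, a^k\}$. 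Taking $k = |A|$ gives that $d$ is a dictator, as desired.
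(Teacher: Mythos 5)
This statement is an imported result: the paper does not prove it, but cites \citet{ACS03b} and, in the appendix, only summarizes the structure of their argument in order to justify the strengthened version for weak rankings. So the relevant comparison is between your outline and the Aswal--Chatterji--Sen proof. Your architecture matches theirs at the top level: a reduction between the $n$-voter and two-voter cases (they prove the equivalence via a cloning construction $g(\triangleright_i,\triangleright_j)=f(\triangleright_i,\triangleright_j,\dots,\triangleright_j)$ rather than your coalitional induction, but both are standard), followed by an induction along the linked enumeration $(a^1,\dots,a^{|A|})$ of the alternatives. Your monotonicity lemma is also correct as stated and holds on any domain, since the one-voter-at-a-time argument only needs the new ranking to lie in $\mathbb{D}$.

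However, there is a genuine gap, and you have flagged it yourself: the entire two-voter base case --- establishing that a single fixed voter is decisive on the connected pair $\{a^1,a^2\}$, and more generally carrying the induction step on a domain that may contain very few rankings --- is left as ``where I expect the real difficulty to lie.'' That step is not a routine pivotal-voter computation; on a sparse linked domain one cannot freely construct the intermediate profiles that the classical Gibbard--Satterthwaite pivot argument uses. What \citet{ACS03b} actually do is work with \emph{option sets} $O_2(\triangleright_1)=\{f(\triangleright_1,\triangleright_2)\mid \triangleright_2\in\mathbb{D}\}$, use the fact that strategy-proofness forces $f(\triangleright_1,\triangleright_2)=\max_{\triangleright_2}O_2(\triangleright_1)$, and then prove by induction over the linked order that each option set is either the singleton consisting of voter $1$'s top alternative or contains every alternative; the linkedness hypothesis (each new $a^k$ connected to \emph{two} earlier alternatives) is exactly what lets them rule out the intermediate ``biased'' behaviours you worry about. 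Without supplying that machinery, or an equivalent argument, your proposal is a correct roadmap rather than a proof.
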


For our proof, we need a stronger version of this theorem, which works even for weak rankings that have no ties in the two top ranks. We formalize this version and argue why it holds in \Cref{sec:strengthening}.

\subsection{Truthful Fractional-Input Mechanisms} \label{subsec:imp-fractional}

There exists a direct connection between our model and that of weak rankings. Namely, each vote $p \in \C$ induces a weak ranking $\wpo_p$ over the integral allocations in $\D$ (i.e., rank points in $\D$ by their $\ell_1$-distance to $p$). 
At a high level, our goal is therefore to show that these weak rankings form a linked domain, which together with the stronger version of \Cref{thm:strict_order_impossibility} yields a similar result in our setting. 

Before doing so, we return to the context of fractional-input mechanisms and formalize the desired result. 

\paragraph{Onto.} A fractional-input mechanism $\A$ is \emph{onto} if for any $n, m, b \in \N$ with $m \ge 2$ and any integral allocation $a \in \D$, there exists a profile $P \in \Cc$ with $\A(P) = a$.

\paragraph{Dictatorial.} Given $n,m,b\in\N$ with $m\ge 2$, voter $i \in [n]$ is a \textit{dictator} for a fractional-input mechanism $\A$ for $n,m,b$ if for all profiles $P = (p_1, \dots, p_n)$ with parameters $m$ and $b$, it holds that $\A(P)$ has rank~$1$ (i.e., is most preferred) in $\wpo_{p_i}$. 
The mechanism $\A$ is \emph{dictatorial} for $n,m,b$ if there exists a voter that is a dictator for $\A$ for $n,m,b$.

\begin{restatable}{theorem}{continuousImpossibility}\label{thm:continuous_impossibility}
     Any onto and truthful fractional-input mechanism is dictatorial for any $n,m,b$ with $(m-1) \cdot b \ge 2$. 
\end{restatable}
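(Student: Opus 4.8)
The plan is to reduce Theorem~\ref{thm:continuous_impossibility} to the strengthened version of Theorem~\ref{thm:strict_order_impossibility} (the Aswal--Chatterji--Sen result for linked domains) via the weak-ranking embedding sketched just before the statement. Concretely: take $A := \D$ as the alternative set in the ranked-choice world, and associate to each fractional vote $p \in \C$ the weak ranking $\wpo_p$ on $\D$ obtained by ordering integral allocations by their $\ell_1$-distance to $p$. A truthful, onto fractional-input mechanism $\A$ then induces a social choice function on the domain $\mathbb{D} := \{\wpo_p \mid p \in \C\} \subseteq$ (weak rankings on $\D$): truthfulness of $\A$ in the $\ell_1$-disutility sense is exactly strategy-proofness of the induced SCF (a voter's best misreport cannot move the outcome to something she ranks strictly higher in $\wpo_{p_i}$), and "onto'' gives unanimity/surjectivity. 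The conclusion "dictatorial'' transfers back directly by the definition given in the paper. So the theorem follows once we show that $\mathbb{D}$ satisfies the linkedness hypothesis of the strengthened ACS theorem.

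The technical heart is therefore: \textbf{(i)} show that for each $a \in \D$ there is a vote $p \in \C$ with $a$ the \emph{unique} closest integral allocation (so $\wpo_p$ has a strict, tie-free top), and more generally that we can realize enough "tie-free in the top two'' weak rankings; and \textbf{(ii)} exhibit an ordering $a^1, a^2, \dots, a^{|\D|}$ of $\D$ witnessing linkedness, i.e.\ $a^1$ connected to $a^2$, and each $a^k$ ($k \ge 3$) connected (in $\mathbb{D}$) to two earlier allocations. For (i), given target $a$, pick $p$ to be $a$ itself perturbed slightly toward the interior of the simplex, or more robustly $p = (1-\varepsilon) a + \varepsilon \cdot \mathbf{1}\tfrac{b}{m}$ for small $\varepsilon$; one checks the nearest lattice point of $\D$ is $a$. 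For connectedness of two allocations $a, a'$, we need $p, p' \in \C$ with $a$ first / $a'$ second under $\wpo_p$ and the reverse under $\wpo_{p'}$; the natural candidates are points on the segment between $a$ and $a'$ slightly off the midpoint, which works cleanly when $a$ and $a'$ are "adjacent'' lattice points, meaning $\ellone{a}{a'} = 2$ (one unit of budget shifted from one coordinate to another). The plan for (ii) is then to use adjacency: build a path/connectivity structure on $\D$ under the adjacency relation "$\ell_1$-distance $2$''. The graph on $\D$ with this adjacency is connected (any integral allocation can be transformed into any other by single-unit moves), and in fact every vertex has degree $\ge 2$ as soon as $(m-1)b \ge 2$ — this is exactly where the hypothesis enters, since $|\D| = \binom{m+b-1}{m-1} \ge 3$ under that condition and the lattice has enough room for each point to have two neighbors. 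One then orders $\D$ by, say, a traversal that keeps the visited set connected and checks that from the third vertex on, each new vertex has two already-visited adjacent vertices; alternatively one argues directly that this adjacency graph is $2$-connected (for $(m-1)b \ge 2$) and invokes a standard fact that $2$-connected graphs admit such an "ear/st-numbering''-style vertex order, which is precisely linkedness when "connected in $\mathbb{D}$'' coincides with graph adjacency.

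I expect the main obstacle to be part (ii) — proving that the adjacency graph on $\D$ is $2$-connected (or directly producing the linked ordering) for \emph{all} $m, b$ with $(m-1)b \ge 2$, including the boundary cases like $m = 2$ with $b \ge 2$ (a path on $b+1$ vertices — wait, that's not $2$-connected, so the notion of "connected in $\mathbb{D}$'' must be \emph{weaker} than $\ell_1$-distance-$2$ adjacency; presumably pairs at larger $\ell_1$ distance can also be "connected'' in the weak-ranking sense by choosing $p, p'$ far from both, which rescues the $m=2$ case) and small instances like $m=3, b=1$ ($|\D| = 3$). Getting the connectedness relation in $\mathbb{D}$ characterized correctly — it is \emph{not} merely lattice-adjacency but "there exist votes making $a$ top-and-$a'$-second and vice versa,'' which is a genuinely geometric condition on Voronoi-type regions in the simplex — and then nailing down an explicit linked order uniformly in $m,b$ is where the real work lies; the reduction itself (strategy-proofness $\Leftrightarrow$ truthfulness, onto $\Rightarrow$ unanimity, dictator $\Rightarrow$ dictator) is essentially bookkeeping once the strengthened ACS theorem from Section~\ref{sec:strengthening} is in hand. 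A secondary subtlety is ensuring the weak rankings in $\mathbb{D}$ have no ties in their top two ranks (needed to apply the \emph{strengthened} rather than original ACS theorem), which part (i) must be set up to guarantee, and checking that the "connected'' witnesses can be chosen with this tie-freeness property as well.
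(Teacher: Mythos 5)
Your skeleton is the paper's: pass to the domain $\nabla$ of weak rankings over $\D$ induced by votes with singleton first and second ranks, prove that $\nabla$ is linked, invoke the strengthened Aswal--Chatterji--Sen theorem, and transfer unanimity, truthfulness, and non-dictatorship between $\A$ and the induced social choice function. Your connectedness witness for two allocations at $\ell_1$-distance $2$ (a point on the segment between them, off the midpoint) matches the paper's choice $p=\tfrac{2}{3}a+\tfrac{1}{3}a'$, and your plan of ordering $\D$ so that each new allocation is adjacent to two earlier ones is what the paper's linkedness lemma carries out by an explicit three-phase sweep over coordinates. But the two steps you defer are where the content lies, and your proposed shortcuts for both fail. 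For the linked ordering, reducing to $2$-connectivity of the adjacency graph plus ``a standard ear/$st$-numbering fact'' is not sound: linkedness requires the \emph{third} allocation in the order to be connected to \emph{both} of the first two, i.e., the graph must contain a triangle through the initial edge (and similar local density thereafter), and $2$-connected graphs need not supply this ($C_4$ is $2$-connected yet admits no such ordering). One really has to exhibit an ordering, as the paper does for $m\ge 3$. Your instinct that $m=2$ is problematic is correct, but your rescue is wrong: for $m=2$ the two closest integral allocations to any fractional vote are always consecutive, so connectedness coincides with adjacency, the domain is a path, and it is genuinely not linked. (In fact, for $m=2$ the induced preferences are single-peaked on a line, and rounding each vote to its nearest integral allocation and applying a phantom median is onto, truthful, and non-dictatorial; the paper's own ordering also tacitly assumes $m\ge 3$, so this boundary case is an issue with the statement itself rather than something your construction can repair.)

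The second gap is the transfer of non-dictatorship, which you wave off as bookkeeping. The delicate direction is that $\A$ being non-dictatorial on $\Cc$ implies the induced rule $\B$ is non-dictatorial on $\nabla^n$: the witness that voter $n$ is not a dictator for $\A$ is an arbitrary profile $P\in\Cc$, whereas $\B$ only sees profiles of representatives from $\nabla$, so the witness cannot simply be restricted. The paper's argument fixes the outcome $a^\star=\A(P)$, replaces the votes of voters $1,\dots,n-1$ one at a time by a representative whose unique favourite is $a^\star$ (truthfulness of $\A$ forces the outcome to remain $a^\star$ at each step), and then shows that if voter $n$ were a dictator for $\B$ she could deviate to a representative topping her preferred allocation and profitably change the outcome, contradicting truthfulness of $\A$. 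Without this (or an equivalent) argument the reduction is incomplete. The remaining pieces of your plan --- realizing each $a\in\D$ as a unique top, onto plus truthful implying unanimity, and the forward transfer of unanimity and truthfulness --- are indeed routine and agree with the paper.
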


\begin{proof}[Proof Sketch of \Cref{thm:continuous_impossibility}]
We start by defining a set of weak rankings induced by $\C$, namely, \[\nabla = \left\{\wpo_p\, \mid p \in \C \text{ and } |r_1(\wpo_p)| = |r_2(\wpo_p)| = 1 \right\},\] where $\wpo_p$ is as defined at the beginning of \Cref{subsec:imp-fractional}, and $r_1(\wpo_p)$ (resp., $r_2(\wpo_p)$) denotes the set of alternatives ranked first (resp., second) by $\wpo_p$. 
We prove that this domain is linked, according to an adaptation of the definition of linkedness by \citet{ACS03b} to weak rankings that have singleton top ranks. 
To this end, we carefully construct a ranking of the elements in $\D$ that witnesses the linkedness of $\nabla$. 
Assume for contradiction that there exists a fractional-input mechanism $\mathcal{A}$ that is onto, truthful, and non-dictatorial for some $n \in \N$. We show that this implies the existence of a social choice function $\mathcal{B}$ on domain $\nabla$ that is unanimous, truthful, and non-dictatorial for $n$ voters, which contradicts the strengthened version of \Cref{thm:strict_order_impossibility}. While proving unanimity and truthfulness for $\B$ is rather immediate, establishing that $\B$ is non-dictatorial requires more effort. This is because $\nabla$ can be interpreted\footnote{This is not formally precise, as $\nabla$ contains rankings and not elements of $\C$.
However, we build a bijection between $\nabla$ and a subset of $\C$.} as a subdomain of $\C$, and $\A$ being non-dictatorial on $\Cc$ does not directly imply that $\B$ is non-dictatorial on $\nabla^n$. 
\end{proof}

The sharp contrast between the fractional-input and integral settings in relation to truthfulness may seem surprising. However, we remark that integral moving-phantom mechanisms can be used to construct fractional mechanisms that are approximately truthful, and the incentive to misreport diminishes as the budget increases. Specifically, we map each vote $p \in \C$ to a point in $\D$ closest to it (with $\ell_1$-distance at most $\frac{m}{2}$) and apply an integral moving-phantom mechanism. By the triangle inequality, the disutility decrease from misreporting is bounded by $2 \cdot \frac{m}{2} = m$. Thus, for constant $m$, (relative) misreporting incentives vanish as $b$ grows.

\section{Conclusion and Future Work}

In this paper, we have introduced the setting of discrete budget aggregation, which reflects the integrality requirement on the output often found in budget aggregation applications, and studied it with respect to truthfulness and proportionality axioms.
Regarding truthfulness, we established a sharp contrast between the integral and the fractional-input settings: in the former, we presented a class of truthful mechanisms by building upon the literature on fractional budget aggregation, while in the latter, we exhibited the limitations of truthful mechanisms by leveraging existing results on dictatorial domains. 
Regarding proportional representation, we demonstrated that our integral setting can be interpreted as a subdomain of approval-based committee voting, but even basic representation guarantees from this literature are incompatible with truthfulness. In contrast, we proved that proportionality can be attained when voters are single-minded. 

Our paper leaves several intriguing directions for future work. 
First, it would be useful to characterize the class of truthful integral mechanisms. 
For the fractional setting, \citet{deberg2024truthful} have recently shown that there exist truthful mechanisms beyond moving-phantom mechanisms. 
While characterizing all truthful mechanisms appears to be difficult in the fractional case given that some of these mechanisms are arguably unnatural, the question may be easier to answer in the integral case. 
Another interesting avenue is to further explore the connections of budget aggregation to approval-based committee voting, independently of truthfulness. For example, which mechanisms do we obtain in the fractional setting if we apply well-established committee rules, such as the \emph{method of equal shares} \citep{peters2020proportionality}, in the integral setting and let the budget approach infinity? 

\section*{Acknowledgments}
This work was partially 
supported by the Dutch Research Council (NWO) under project number VI.Veni.232.254, by the Singapore Ministry of Education under grant number MOE-T2EP20221-0001, and by an NUS Start-up Grant. We thank the anonymous reviewers for their valuable feedback.

\bibliographystyle{named}
\bibliography{literature}

\begin{thebibliography}{}

\bibitem[\protect\citeauthoryear{Aswal \bgroup \em et al.\egroup }{2003}]{ACS03b}
Navin Aswal, Shurojit Chatterji, and Arunava Sen.
\newblock Dictatorial domains.
\newblock {\em Economic Theory}, 22(1):45--62, 2003.

\bibitem[\protect\citeauthoryear{Aziz \bgroup \em et al.\egroup }{2017}]{ABC+15a}
Haris Aziz, Markus Brill, Vincent Conitzer, Edith Elkind, Rupert Freeman, and Toby Walsh.
\newblock Justified representation in approval-based committee voting.
\newblock {\em Social Choice and Welfare}, 48(2):461--485, 2017.

\bibitem[\protect\citeauthoryear{Balinski and Young}{1982}]{BaYo82a}
Michel~L. Balinski and H.~Peyton Young.
\newblock {\em Fair Representation: {{Meeting}} the Ideal of One Man, One Vote}.
\newblock Yale University Press, 1982.

\bibitem[\protect\citeauthoryear{Barbera and Peleg}{1990}]{barbera1990strategy}
Salvador Barbera and Bezalel Peleg.
\newblock Strategy-proof voting schemes with continuous preferences.
\newblock {\em Social Choice and Welfare}, 7(1):31--38, 1990.

\bibitem[\protect\citeauthoryear{Brandl \bgroup \em et al.\egroup }{2021}]{BBP+21a}
Florian Brandl, Felix Brandt, Dominik Peters, and Christian Stricker.
\newblock Distribution rules under dichotomous preferences: Two out of three ain't bad.
\newblock In {\em Proceedings of the 22nd {{ACM Conference}} on {{Economics}} and {{Computation}} (ACM-EC)}, pages 158--179, 2021.

\bibitem[\protect\citeauthoryear{Brandt \bgroup \em et al.\egroup }{2024}]{brandt2024optimal}
Felix Brandt, Matthias Greger, Erel Segal-Halevi, and Warut Suksompong.
\newblock Optimal budget aggregation with single-peaked preferences.
\newblock In {\em Proceedings of the 25th ACM Conference on Economics and Computation (ACM-EC)}, page~49, 2024.

\bibitem[\protect\citeauthoryear{Brill and Peters}{2023}]{BrPe23a}
Markus Brill and Jannik Peters.
\newblock Robust and verifiable proportionality axioms for multiwinner voting.
\newblock In {\em Proceedings of the 24th {{ACM Conference}} on {{Economics}} and {{Computation}} ({{ACM-EC}})}, page 301, 2023.

\bibitem[\protect\citeauthoryear{Brill \bgroup \em et al.\egroup }{2024}]{brill2024approval}
Markus Brill, Paul G\"{o}lz, Dominik Peters, Ulrike Schmidt-Kraepelin, and Kai Wilker.
\newblock Approval-based apportionment.
\newblock {\em Mathematical Programming}, 203(1--2):77--105, 2024.

\bibitem[\protect\citeauthoryear{Caragiannis \bgroup \em et al.\egroup }{2022}]{caragiannis2022truthful}
Ioannis Caragiannis, George Christodoulou, and Nicos Protopapas.
\newblock Truthful aggregation of budget proposals with proportionality guarantees.
\newblock In {\em Proceedings of the 36th AAAI Conference on Artificial Intelligence (AAAI)}, pages 4917--4924, 2022.

\bibitem[\protect\citeauthoryear{{de Berg} \bgroup \em et al.\egroup }{2024}]{deberg2024truthful}
Mark {de Berg}, Rupert Freeman, Ulrike Schmidt-Kraepelin, and Markus Utke.
\newblock Truthful budget aggregation: Beyond moving-phantom mechanisms.
\newblock In {\em Proceedings of the 20th International Conference on Web and Internet Economics (WINE)}, 2024.

\bibitem[\protect\citeauthoryear{Delemazure \bgroup \em et al.\egroup }{2023}]{DDE+23b}
Th{\'e}o Delemazure, Tom Demeulemeester, Manuel Eberl, Jonas Israel, and Patrick Lederer.
\newblock Strategyproofness and proportionality in party-approval multiwinner elections.
\newblock In {\em Proceedings of the 37th {{AAAI Conference}} on {{Artificial Intelligence}} (AAAI)}, pages 5591--5599, 2023.

\bibitem[\protect\citeauthoryear{Elkind \bgroup \em et al.\egroup }{2023}]{elkind2023settling}
Edith Elkind, Warut Suksompong, and Nicholas Teh.
\newblock Settling the score: Portioning with cardinal preferences.
\newblock In {\em Proceedings of the 26th European Conference on Artificial Intelligence (ECAI)}, pages 621--628, 2023.

\bibitem[\protect\citeauthoryear{Freeman and Schmidt-Kraepelin}{2024}]{freeman2024project}
Rupert Freeman and Ulrike Schmidt-Kraepelin.
\newblock Project-fair and truthful mechanisms for budget aggregation.
\newblock In {\em Proceedings of the 38th {AAAI} Conference on Artificial Intelligence (AAAI)}, pages 9704--9712, 2024.

\bibitem[\protect\citeauthoryear{Freeman \bgroup \em et al.\egroup }{2021}]{freeman2021truthful}
Rupert Freeman, David~M. Pennock, Dominik Peters, and Jennifer~Wortman Vaughan.
\newblock Truthful aggregation of budget proposals.
\newblock {\em Journal of Economic Theory}, 193:105234, 2021.

\bibitem[\protect\citeauthoryear{Gibbard}{1973}]{gibbard1973manipulation}
Allan Gibbard.
\newblock Manipulation of voting schemes: A general result.
\newblock {\em Econometrica}, 41(4):587--601, 1973.

\bibitem[\protect\citeauthoryear{Goel \bgroup \em et al.\egroup }{2019}]{goel2019knapsack}
Ashish Goel, Anilesh~K. Krishnaswamy, Sukolsak Sakshuwong, and Tanja Aitamurto.
\newblock Knapsack voting for participatory budgeting.
\newblock {\em ACM Transactions on Economics and Computation}, 7(2):8:1--8:27, 2019.

\bibitem[\protect\citeauthoryear{Intriligator}{1973}]{intriligator1973probsc}
Michael~D. Intriligator.
\newblock A probabilistic model of social choice.
\newblock {\em The Review of Economic Studies}, 40(4):553--560, 1973.

\bibitem[\protect\citeauthoryear{Lackner and Skowron}{2023}]{lackner2023multi}
Martin Lackner and Piotr Skowron.
\newblock {\em Multi-Winner Voting with Approval Preferences}.
\newblock Springer, 2023.

\bibitem[\protect\citeauthoryear{Lindner \bgroup \em et al.\egroup }{2008}]{lindner2008midpoint}
Tobias Lindner, Klaus Nehring, and Clemens Puppe.
\newblock Allocating public goods via the midpoint rule.
\newblock In {\em Proceedings of the 9th International Meeting of the Society for Social Choice and Welfare}, 2008.

\bibitem[\protect\citeauthoryear{Lindner}{2011}]{lindner2011zur}
Tobias Lindner.
\newblock {\em Zur Manipulierbarkeit der Allokation öffentlicher G\"{u}ter: Theoretische Analyse und Simulationsergebnisse}.
\newblock PhD thesis, Karlsruhe Institute of Technology, 2011.

\bibitem[\protect\citeauthoryear{Peters and Skowron}{2020}]{peters2020proportionality}
Dominik Peters and Piotr Skowron.
\newblock Proportionality and the limits of welfarism.
\newblock In {\em Proceedings of the 21st ACM Conference on Economics and Computation (ACM-EC)}, pages 793--794, 2020.

\bibitem[\protect\citeauthoryear{Peters}{2018}]{Pete18c}
Dominik Peters.
\newblock Proportionality and strategyproofness in multiwinner elections.
\newblock In {\em Proceedings of the 17th {{International Conference}} on {{Autonomous Agents}} and {{Multiagent Systems}} (AAMAS)}, pages 1549--1557, 2018.

\bibitem[\protect\citeauthoryear{Satterthwaite}{1975}]{satterthwaite1975strategy}
Mark~Allen Satterthwaite.
\newblock Strategy-proofness and {A}rrow's conditions: Existence and correspondence theorems for voting procedures and social welfare functions.
\newblock {\em Journal of Economic Theory}, 10(2):187--217, 1975.

\end{thebibliography}

\newpage
\appendix

\onecolumn 

\section{Missing Definitions and Proofs from \Cref{sec:truthful}}
\label{app:integral-truthful}

We start by formally defining some apportionment methods. To account for tie-breaking, we define apportionment methods in a non-resolute fashion in the following. 
An apportionment method maps an element from \C to a subset of elements in \D, that is, $\M: \C \rightarrow 2^{\D}$. Then, a tie-breaking rule $\beta$ selects one of the outcomes, i.e., for any $S \subseteq \D$, we have $\beta(S) \in S$. 
Hence, formally, we study the composition $\beta \circ \M \circ \A$. We now provide the definitions of different apportionment methods.

\paragraph{Hamilton's Method} Given a vector $a \in \C$, let $r$ be the vector of residues, i.e., $r_j = a_j - \lfloor a_j \rfloor$ for each $j \in [m]$. Note that $k := \sum_{j \in [m]} r_j$ is an integer. 
Hamilton's method first gives every alternative $j$ at least $\lfloor a_j \rfloor$. Moreover, it gives $\lceil a_j \rceil$ to the $k$ alternatives maximizing $r_j$. Since there may be multiple alternatives with the same residue, Hamilton returns one vector per subset of $k$ alternatives maximizing $r_j$. 

\paragraph{Stationary Divisor Methods} Stationary divisor methods are parameterized by $\Delta \in [0,1]$, which defines a rounding function. Formally, 
\[
    \llbracket z \rrbracket_\Delta = \begin{cases}
        \{y \} & \text{ if } y-1 + \Delta < z < y + \Delta \text{ for some } y\in \mathbb{N}\cup\{0\},\\
        \{y,y+1\} & \text{ if } z = y + \Delta \text{ for some } y\in \mathbb{N} \cup \{0\}.
    \end{cases}
\]

For $\Delta \in [0,1]$, the \textit{$\Delta$-divisor method} $\M_{\Delta}$ is defined by
\[
    \M_{\Delta}(a) = \bigg\{ x\in (\mathbb{N}\cup\{0\})^n ~\bigg|~ \text{there exists } \lambda>0 \text{ such that\ } x_j \in \llbracket \lambda a_j\rrbracket_\Delta \text{ for every }j\in [m] \text{ and } \sum_{i=1}^{m} x_j = \sum_{i=1}^{m} a_j \bigg\}.
\]
The Jefferson/D'Hondt method corresponds to $\M_{1}$ in our notation, and the Webster/Sainte-Lagu\"e method to $\M_{0.5}$.

\paragraph{Quota Method} The quota method can be seen as a constrained version of (a sequential interpretation of) Jefferson's method. That is, we assign the budget of $b$ iteratively. For each round $k \in [b]$, let $\gamma_j$ be the current budget of alternative $j$. Then, in each round we choose an alternative $j$ minimizing $(\gamma_j+1)/a_j$ over a restricted subset of ``eligible'' alternatives. An alternative is \emph{eligible} if allocating the next unit of budget to it would not give it more than its ``fair share'' rounded up. Formally, in round~$k$, the set of eligible agents is $U(a,\gamma,k) = \left\{j \in [m]\,\middle|\, \gamma_j < \frac{a_j k }{ \sum_{j' \in [m]} a_{j'}}\right\}$, where $\gamma=(\gamma_1,\dots,\gamma_m)$. 
Among all eligible alternatives, the next unit of budget is assigned to an alternative $j$ minimizing $(t_j+1)/a_j$. The output of the quota method consists of all vectors that can be achieved by some way of breaking ties. 

\thmApportionmentNoTie*

\begin{proof}
    \textbf{Hamilton's method.} Consider the following preference profile with $n =10$, $m=6$, and $b=8$: 
        $$ P = 
        \begin{bmatrix}
            8 & 0 & 0 & 0 & 0 & 0  \\
            8 & 0 & 0 & 0 & 0 & 0  \\
            8 & 0 & 0 & 0 & 0 & 0  \\
            8 & 0 & 0 & 0 & 0 & 0  \\
            8 & 0 & 0 & 0 & 0 & 0  \\
            7 & 1 & 0 & 0 & 0 & 0  \\
            7 & 0 & 1 & 0 & 0 & 0  \\
            7 & 0 & 0 & 1 & 0 & 0  \\
            7 & 0 & 0 & 0 & 1 & 0  \\
            7 & 0 & 0 & 0 & 0 & 1  \\
        \end{bmatrix},
        $$
    where every row corresponds to the report of one voter. For this profile, \textsc{IndependentMarkets} returns $(\sfrac{80}{15},\sfrac{8}{15},\sfrac{8}{15},\sfrac{8}{15},\sfrac{8}{15},\sfrac{8}{15})$. As a result, Hamilton's method rounds up three of the alternatives $2,3,4,5,6$; without loss of generality, we can assume that the outcome of Hamilton's method is $(5,1,1,1,0,0)$. 
    Now, suppose that the last voter changes her report to $(8,0,0,0,0,0)$. For the new profile, \IM returns $(\sfrac{40}{7},\sfrac{4}{7},\sfrac{4}{7},\sfrac{4}{7},\sfrac{4}{7},0)$, and Hamilton's method rounds up the first alternative along with two of the alternatives $2,3,4,5$ (this leads to, e.g., the outcome $(6,1,1,0,0,0)$). Any of these outcomes is strictly preferred to the original outcome by the last voter, so truthfulness is violated. 

    \bigskip 
    \textbf{Quota Method.} Consider the following two profiles with $n=b=4$ and $m =5$: 
        $$
            P = \begin{bmatrix}
            3 & 1 & 0 & 0 & 0\\
            3 & 0 & 1 & 0 & 0\\
            3 & 0 & 0 & 1 & 0\\
            3 & 0 & 0 & 0 & 1\\
            \end{bmatrix} \qquad
            P^\star = \begin{bmatrix}
            3 & 1 & 0 & 0 & 0\\
            3 & 0 & 1 & 0 & 0\\
            3 & 0 & 0 & 1 & 0\\
            4 & 0 & 0 & 0 & 0\\
            \end{bmatrix}.
        $$
        \IM returns $a = (2,0.5,0.5,0.5,0.5)$ for profile $P$, which is (without loss of generality) rounded to $a=(2,1,1,0,0)$ by the quota method. For the profile $P^{\star}$, \IM returns $(\sfrac{16}{7},\sfrac{4}{7},\sfrac{4}{7},\sfrac{4}{7},0)$. Note that the first alternative is eligible to receive a budget of $3$ and that $\frac{16}{7}\cdot \frac{1}{3} > \frac{4}{7}$. Hence, the quota method returns (without loss of generality) $(3,1,0,0,0)$, which is strictly preferred by the last voter.
        This causes a violation of truthfulness.

    \bigskip
    
    \textbf{Stationary Divisor Methods.} Consider the integral profile $P \in \Dc$ with $m = \lceil 2+\frac{2}{\Delta} \rceil$ alternatives, $n = m$ voters, and a budget of $b = 2$, where for each $i \in [n-1]$ we have $p_{i,1} = p_{i,i+1} = 1$, and $p_n = (1,1,0,\dots,0)$. As an example, we display this profile for $\Delta = 1$ (so $n = m = 4$):
    $$
        \begin{bmatrix}
            1 & 1 & 0 & 0  \\
            1 & 0 & 1 & 0  \\
            1 & 0 & 0 & 1  \\
            1 & 1 & 0 & 0  \\
        \end{bmatrix}.
    $$
    \textsc{IndependentMarkets} reaches normalization at time $t = \frac{1}{2n}$, when phantom $k$ is at position $\frac{n-k}{n}$ for each $k \in [n]_0$. 
    Hence, the output of \textsc{IndependentMarkets} is $(1, \frac{2}{n}, \frac{1}{n}, \dots, \frac{1}{n})$. Taking this as an input to the stationary divisor method $\M_{\Delta}$, we note that choosing the multiplier $\lambda = 1+\Delta$ leads to $1\lambda = 1+\Delta$ and $\frac{1}{n} \lambda < \frac{2}{n} \lambda \leq \frac{2}{(2 + \frac{2}{\Delta})(\frac{1}{1+\Delta})} = \Delta$. Here, note that the weak inequality is an equality if and only if $\frac{2}{\Delta} \in \mathbb{N}$; in this case, $\M_{\Delta}$ may return $(2,0,\dots,0)$ or $(1,1,0,\dots,0)$, and we are going to assume that the tie-breaking rule chooses in favor of $(2,0,\dots,0)$. If $\frac{2}{\Delta} \not\in \mathbb{N}$, then the unique output is $(2,0,\dots,0)$. 

    Now, consider the profile $P^{\star}$ in which the last voter misreports $p_n^\star = (0,2,0,\dots,0)$. \textsc{IndependentMarkets} reaches normalization at time $t = \frac{1}{2n-1}$, when phantom $k$ is at position $\frac{2(n-k)}{2n-1}$ for each $k \in [n]_0$. The output of \IM is therefore $x = (\frac{2(n-1)}{2n-1}, \frac{4}{2n-1}, \frac{2}{2n-1}, \dots, \frac{2}{2n-1})$, for which we can find the multiplier $\lambda^\star = \frac{\Delta(2n-1)}{4}$ such that $x_1 \lambda^\star = \frac{(n-1)\Delta}{2} = (\lceil 2 + \frac{2}{\Delta} \rceil -1)\frac{\Delta}{2} < (2+\frac{2}{\Delta})\frac{\Delta}{2} = 1+\Delta$ and $x_2 \lambda^\star = \Delta$. Thus, the output of $\M_{\Delta}$ is $(1, 1, 0, \dots, 0)$, which is strictly preferred by voter $n$.
    This yields a contradiction to truthfulness. 
\end{proof}

\thmDiscreteTruthful*
\begin{proof}
    The proof is analogous to the proof of truthfulness of fractional moving-phantom mechanisms given by \citet[Theorem~2]{freeman2021truthful}. Let $m, n, b \in \N$ be fixed, $P = (p_1, \dots, p_n) \in \Dc$ an integral profile, and $P^\star = (p_1, \dots, p_{i-1}, p_i^\star, p_{i+1}, \dots, p_n)$ a profile with a misreport $p_i^\star \neq p_i$ by voter $i$. Let $\A^\W$ be a mechanism defined by the integral phantom system $\W$, and let $a = \A^\W(P)$ and $a^\star = \A^\W(P^\star)$. Further, let $\tau$ and $\tau^\star$ be times at which $\sum_{j \in [m]} \med(P,\W,j,\tau) = b$ and $\sum_{j \in [m]} \med(P^\star,\W,j,\tau^\star) = b$, respectively. 
    
    We first show that any change of the medians resulting from reporting $p_i^\star$ instead of $p_i$ can only increase the disutility of voter $i$ if we fix all phantoms $\ph_{k,j}$ at their positions $\ph_{k,j}(\tau)$. We then show that updating the phantom positions from $\ph_{k,j}(\tau)$ to $\ph_{k,j}(\tau^\star)$ can result in a disutility decrease of at most the amount that it increased before.
    
    Let us first consider the medians on each alternative of the phantom positions $\ph_{k,j}(\tau)$ and the profile $P$. For each alternative $j$, the median only changes if the voter ``crosses'' it by moving from $p_i$ to $p_i^\star$, i.e., we have
    \begin{align*}
        \med(P^\star,\W,j,\tau) < \med(P,\W,j,\tau) \qquad & \text{ if } p^\star_{i,j} < \med(P,\W,j,\tau) \le p_{i,j}, \\ 
        \med(P^\star,\W,j,\tau) > \med(P,\W,j,\tau) \qquad & \text{ if } p_{i,j} \le \med(P,\W,j,\tau) < p^\star_{i,j}, \\ 
        \med(P^\star,\W,j,\tau) = \med(P,\W,j,\tau) \qquad & \text{ otherwise}. 
    \end{align*}
    Therefore, any change of the medians will be in the direction away from $p_i$, thus increasing voter $i$'s disutility by
    \begin{align}
        y 
        & := \sum_{j\in[m]} |p_{i,j} - \med(P^\star, \W, j, \tau)| - \sum_{j\in[m]} |p_{i,j} - \med(P, \W, j, \tau)| \label{eq:disutility_change_step1}\\
        & = \sum_{j\in[m]} |\med(P^\star, \W, j, \tau) - \med(P, \W, j, \tau)|.\nonumber
    \end{align}
    
    We now consider how the disutility can change when updating the phantom positions from $\tau$ to $\tau^\star$. Assume that $\tau \le \tau^\star$ and thus $\sum_{j\in[m]} \med(P^\star, \W, j, \tau) \le b$; the proof for $\tau \ge \tau^\star$ works analogously. Since for all $k,j$ we know that $\ph_{k,j}$ is non-decreasing, the medians $\med(P^\star, \W, j, \tau)$ must be lower than the medians $\med(P^\star, \W, j, \tau^\star)$ by a total amount of
    \begin{align}
        \begin{split}
            \sum_{j\in[m]} |\med(P^\star, \W, j, \tau^\star) - \med(P^\star, \W, j, \tau)|
            & = b - \sum_{j\in[m]} \med(P^\star, \W, j, \tau) \\
            & = \sum_{j\in[m]} \med(P, \W, j, \tau) - \sum_{j\in[m]} \med(P^\star, \W, j, \tau) \\
            & \le \sum_{j\in[m]} |\med(P, \W, j, \tau) - \med(P^\star, \W, j, \tau)| = y. \label{eq:disutility_change_step2}
        \end{split}
    \end{align}

    Thus, for the total disutility of voter $i$, we get
    \begin{align*}
        \ellone{p_i}{\A^\W(P^\star)} 
        & =   \sum_{j\in[m]} |p_{i,j} - \med(P^\star, \W, j, \tau^\star)| \\
        \text{\scriptsize(triangle inequality)} 
        & \ge \sum_{j\in[m]} \big( |p_{i,j} - \med(P^\star, \W, j, \tau)| - |\med(P^\star, \W, j, \tau) - \med(P^\star, \W, j, \tau^\star)| \big)\\
        & = \sum_{j\in[m]} |p_{i,j} - \med(P^\star, \W, j, \tau)| - \sum_{j\in[m]} |\med(P^\star, \W, j, \tau) - \med(P^\star, \W, j, \tau^\star)|\\
        \text{\scriptsize((\ref{eq:disutility_change_step1}) and (\ref{eq:disutility_change_step2}))}
        & \ge \left(\sum_{j\in[m]} |p_{i,j} - \med(P, \W, j, \tau)| + y\right) - y\\
        & = \ellone{p_i}{\A^\W(P)},
    \end{align*}
    which concludes the proof.
\end{proof}

\floorUtilEqualsUtilHamilton*

\begin{proof}
    For any $n,m,b \in \N$, consider a profile $P$ and let $\F_n = \{f_0, \dots, f_n\}$ be the phantom system of the \textsc{Utilitarian} mechanism. Let $\W_{n,m,b} = \{\ph_{k,j} \mid k \in [n]_0,\, j \in [m]\}$ be the integral phantom system induced by $\F_n$ and the floor function. We denote Hamilton's method by $\M$.
    
    Let $a = \A^\F(P)$ be the fractional allocation returned by \textsc{Utilitarian}, let $t$ be a time of normalization of $\A^\F$ for $P$, and let $\tau$ be the time step where $\ph_{k,j}(\tau) = \lfloor f_k(t)\rfloor$ for $k \in [n]_0$ and $j \in [m]$.
    Denote by $J \subseteq [m]$ the set of alternatives for which $a_j$ is non-integral.
    Since all votes lie on integral points, the medians for all alternatives $j \in J$ must lie on phantom positions. 
    Since the utilitarian mechanism moves the phantoms in sequence, only one phantom $k^\star$ can be at a non-integral position at time $t$, and we have $a_j = f_{k^\star}(t)$ for all $j \in J$. 

    Let $x = b - \sum_{j \in [m]} \med(P, \W, j, \tau)$ be the amount of budget missing for normalization at time step $\tau$. 
    Now, let $t'$ be the time at which phantom $k^\star$ reaches $\lceil f_{k^\star}(t) \rceil$ (note that all other phantoms did not move relative to time $t$). By the construction of $\W$, the discrete phantoms $\ph_{j,k^\star}$ increase their position in order of $j$. 
    For the aggregate $\hat{a} = \A^\W(P)$, we therefore have that $\hat{a}_j = a_j$ for $j \notin J$, $\hat{a}_j = \lceil a_j \rceil$ for the $x$ smallest $j \in J$, and $\hat{a}_j = \lfloor a_j \rfloor$ for all other $j \in J$. 

    Similarly, for Hamilton's method, since all non-integral values in $a$ are the same, the decision on which alternatives are rounded up and which are rounded down is entirely decided by the tie-breaking. Since we assume tie-breaking by the indices of the alternatives, this leads to the same aggregate $\M(a) = \hat{a}$.
\end{proof}

\section{Missing Proofs from \Cref{sec:truthful_proportional}} \label{app:truthful_proportional}

We first prove \Cref{thm:quota_proportionality}, which establishes the existence of single-minded quota-proportional mechanisms.

\thmQuotaProportionality*
\begin{proof}
    Let $\F$ be a family of upper-quota capped phantom systems that induce a single-minded proportional moving-phantom mechanism $\A^\F$. Let $\W$ be the family of integral phantom systems induced by $\F$ and the floor function, according to the approach described at the end of \Cref{subsec:rounding_phantoms}. We want to show that $\A^\W$ satisfies single-minded quota-proportionality. Let $n,m,b \in \N$ and $P \in \Dc$ be a single-minded profile. For each $j\in[m]$, let $n_j$ be the number of voters $i$ with $p_{i,j} = b$. The average on each alternative $j$ is then given by $\Amean(P)_j = b \cdot \frac{n_j}{n}$.
    We prove that $\A^\W$ satisfies single-minded quota-proportionality by showing that it outputs an aggregate $a$ with $a_j \in \{\lfloor b \cdot \frac{n_j}{n} \rfloor, \lceil b \cdot \frac{n_j}{n} \rceil\}$ for each alternative $j$. 
    
    Since on each alternative $j$ there are $n_j$ voters at position $b$ and $n-n_j$ voters at position $0$, the median at any time step $\tau$ equals the $(n-n_j+1)$th phantom of $\W$, i.e., $\med(P, \W, j, \tau) = \ph_{n-n_j,\,j}(\tau)$. Similarly, for the fractional phantom system $\F$, the median on alternative $j$ is $\med(P, \F, j, t) = f_{n-n_j}(t)$ at any time $t$. 

    By the single-minded proportionality of $\A^\F$, we know that at any time of normalization $t$ for profile $P$, we have $f_{n-n_j}(t) = b \cdot \frac{n_j}{n}$ for all alternatives $j$. 
    By construction of $\phi$, there must exist a time step $\tau_1 \in \N$ at which $\ph_{n-n_j,\,j}(\tau_1) = \lfloor b \cdot \frac{n_j}{n} \rfloor$ on each alternative $j$. Then, the sum of medians at that time is
    $$
        \sum_{j\in[m]} \med(P, \W, j, \tau_1) = \sum_{j\in[m]} \ph_{n-n_j,\, j}(\tau_1) = \sum_{j\in[m]}  \left\lfloor b \cdot \frac{n_j}{n}\right\rfloor \le \sum_{j\in[m]} b \cdot \frac{n_j}{n} = b,
    $$
    showing that the aggregate $a_j$ must be at least $\lfloor b \cdot \frac{n_j}{n}\rfloor$ on each alternative $j$.

    Similarly, since $\F$ is upper-quota capped, the phantoms of $\F$ reach  $(\lceil b \cdot \frac{(n-0)}{n}\rceil, \dots, \lceil b \cdot \frac{(n-n)}{n}\rceil)$ at time $1$ for each alternative $j$. Therefore, there must exist a time step $\tau_2 \in \N$ at which $\ph_{n-n_j,\,j}(\tau_2) = \lceil b \cdot \frac{n_j}{n} \rceil$ on each alternative $j$. Thus,
    $$
        \sum_{j\in[m]} \med(P, \W, j, \tau_2) = \sum_{j\in[m]} \ph_{n-n_j,\, j}(\tau_2) = \sum_{j\in[m]}  \left\lceil b \cdot \frac{n_j}{n}\right\rceil \ge \sum_{j\in[m]} b \cdot \frac{n_j}{n} = b,
    $$
    showing the upper bound $a_j \le \lceil b \cdot \frac{n_j}{n}\rceil$.
\end{proof}

We now turn towards proving \Cref{thm:JR}, which shows the incompatibility of anonymity, truthfulness, and justified representation for integral mechanisms. For constructing the proof, we use a computer-aided approach, encoding an anonymous, truthful, and JR mechanism into a SAT formula and then checking satisfiability using a SAT-solver. Below, we go into more detail on how this works. The process is similar to proofs by, e.g., \citet{Pete18c}, \citet{BBP+21a}, and \citet{DDE+23b}. 

Fix $n,m,b$. 
For $P \in \Dc$, let $\Gamma(P) \subseteq \D$ denote all allocations satisfying JR with respect to $P$. For each pair of profile $P \in \Dc$ and JR allocation $a \in \Gamma(P)$, we define a boolean variable $x_{P,a}$ representing whether $a$ is chosen as the output allocation for $P$. Since any mechanism must choose exactly one allocation per profile, we add the following clauses to guarantee that at least one and at most one variable can be positive for a profile, respectively. 
\begin{align*}
    \bigvee_{a \in \Gamma(P)} x_{P,a} & \qquad  \text{ for all } P \in \Dc \\
    \neg x_{P,a} \vee \neg x_{P,a'} & \qquad \text{ for all } P \in \Dc  \text{ and } a, a' \in \Dc \text{ with } a \neq a'\\
\end{align*}
Since we require anonymity, we can significantly reduce the number of variables by identifying all profiles that are permutations of each other.
To implement truthfulness, we add a clause for each combination of two profile-outcome pairs that violate truthfulness. 
\begin{align*}
    \neg x_{P,a} \vee \neg x_{P^\star,a^\star} & \quad \text{ for all } P = (p_1, \dots, p_n) \in \Dc \\ 
    & \quad \text{ for all } i \in [n],\, p_i^\star \in \D  \text{ with } \ellone{p_i}{a} > \ellone{p_i}{a^\star},\\
    & \quad \text{ where } P^\star = (p_1, \dots, p_{i-1}, p_i^\star, p_{i+1}, \dots, p_n)\\
\end{align*}

For $m=4$ alternatives, $n=3$ voters, and a budget of $b=3$, the SAT formula is unsatisfiable, which establishes \Cref{thm:JR}. To present a human-readable proof, we extracted a minimum unsatisfiable set of clauses and then identified a case distinction over the outcome of a specific profile that leads to a contradiction by a series of truthfulness applications.

\thmJR*

\begin{proof}

    \newcommand{\hide}[1]{\textcolor{gray}{#1}}
    \newcommand{\highlightOne}[1]{\underline{#1}}
    \newcommand{\highlightTwo}[1]{\textbf{#1}}
    
    \newcommand{\lineid}[1]{s#1}

    Suppose for contradiction that there is a mechanism $\A$ satisfying anonymity, truthfulness, and JR. Let $n = 3$, $m = 4$, and $b = 3$. We show the theorem by computing all JR outcomes for a number of specific profiles and excluding the possible outputs of $\A$ further by applying truthfulness to pairs of profiles. 
    Note that since $n = b$, JR is equivalent to the condition that for every voter, there exists an alternative such that both the voter and the outcome allocate a positive amount to it.
    For the sake of readability, we write profiles and allocations without parentheses and commas, e.g., the profile $((0,0,2,1), (0,3,0,0), (0,0,0,3))$ is written as 0021~0300~0003.
    Since $\A$ satisfies anonymity, it returns the same outcome for all permutations of a profile. Thus, we represent each profile by its lexicographically smallest permutation, e.g., we represent all permutations of the profile 0021~0300~0003 by the profile 0003~0021~0300. 
    
    The following tables can be read as follows:
    For each profile in the table, we list all JR outcomes in the ``Outcomes'' column. Outcomes that have been excluded by earlier arguments are grayed out. Outcomes that are newly excluded by the current argument are grayed out and underlined. 
    Each line in the table thus represents a \textit{statement}, which says that the outcome of $\A$ for the given profile must be one of the black allocations. Each unique statement has an ID. 
    
    Every pair of adjacent lines (without whitespace) represents an \textit{argument}, which corresponds to a truthfulness application. The profiles from a pair of adjacent lines only differ in one vote and we can exclude outcomes of the \textit{second} profile by truthfulness. For this, we need to assume that one of the profiles has the truthful vote and the other one has the misreport of the changing voter. Which profile has the truthful vote or misreport is denoted in the columns ``Truthful ID'' and ``Misreport ID''.
    To make it easier to follow the argument for a specific profile, the ``Last Mentioned'' column refers to the statement ID that last restricted the output set of the profile. 

    As an example, consider the following argument.
    
    \begin{longtable}{@{\hspace{-2mm}}>{\centering\arraybackslash}p{1cm} l p{8cm} >{\centering\arraybackslash}m{1.6cm} >{\centering\arraybackslash}m{1.2cm} >{\centering\arraybackslash}m{1.2cm}@{\hspace{2mm}}}
        \hline
        ID & Profile & Outcomes & Last Mentioned & Truthful ID & Misreport ID \\
        \hline
        \\
        \lineid{032}   &  0003 0030 2010  &   \hide{0012} \hide{0021} \hide{0111} 1011                      & & &\\
        \lineid{039}   &  0030 1002 2010  &   \hide{\highlightOne{0012}} \hide{0021} \hide{0111} 1011 \hide{1020} \hide{1110} \hide{2010}  &\lineid{038}    & \lineid{032} & \lineid{039}\\
        \\
        \hline
    \end{longtable}

    In the first line, we can see that out of all JR outcomes for the profile 0003~0030~2010, all but one have been excluded by previous arguments. Thus, we know at this point that $\A$ must return 1011, which allows us to exclude some of the outcomes of the profile in the second line. Notice that apart from the order of the voters, the two profiles only differ in one vote, which is 0003 in the first profile and 1002 in the second profile. For this argument, the table indicates that the profile from the first line (\lineid{032}) contains the truthful vote and the second one (\lineid{039}) has the misreport. Since a voter with truthful vote 0003 has disutility $4$ for the outcome 1011 and the disutility cannot decrease by misreporting, we can exclude outcome 0012 for the second profile, as this would lead to a disutility of $2$. This outcome is grayed out and underlined. The other grayed-out outcomes have been excluded by earlier arguments, which can be checked in statement \lineid{038}. 
    
    Note that there are situations in which not all outcomes that could be excluded by a truthfulness application are excluded. These outcomes will then be excluded using a different argument at a later time.
    
    \begin{longtable}{@{\hspace{-2mm}}>{\centering\arraybackslash}p{1.2cm} l p{8cm} >{\centering\arraybackslash}m{1.6cm} >{\centering\arraybackslash}m{1.2cm} >{\centering\arraybackslash}m{1.2cm}@{\hspace{2mm}}}
    \hline
    ID & Profile & Outcomes & Last Mentioned & Truthful ID & Misreport ID \\
    \hline
    \\
    \lineid{001}   &  0003 0030 0300  &   0111                                                          & & &\\
    \lineid{002}   &  0003 0021 0300  &    \hide{\highlightOne{0102}} 0111 \hide{\highlightOne{0201}} \hide{\highlightOne{1101}}  & & \lineid{002} & \lineid{001}\\
    \lineid{003}   &  0003 0021 0210  & \hide{\highlightOne{0012}} 0021 0102 0111 0201 \hide{\highlightOne{1011}} 1101  & & \lineid{003} & \lineid{002}\\
    \\
    \lineid{002}   &  0003 0021 0300  &   \hide{0102} 0111 \hide{0201} \hide{1101}                      & & &\\
    \lineid{004}   &  0012 0021 0300  &   0102 0111 \hide{\highlightOne{0120}} \hide{\highlightOne{0201}} \hide{\highlightOne{0210}} \hide{\highlightOne{1101}} \hide{\highlightOne{1110}}  & & \lineid{004} & \lineid{002}\\
    \\
    \lineid{002}   &  0003 0021 0300  &   \hide{0102} 0111 \hide{0201} \hide{1101}                      & & &\\
    \lineid{005}   &  0012 0021 0300  &   \hide{\highlightOne{0102}} 0111 \hide{0120} \hide{0201} \hide{0210} \hide{1101} \hide{1110}   &\lineid{004}    & \lineid{002} & \lineid{005}\\
    \lineid{006}   &  0012 0021 0210  &   0012 \hide{\highlightOne{0021}} 0030 0102 0111 0120 0201 0210 \hide{\highlightOne{1011}} 1020 1101 1110 2010  & & \lineid{006} & \lineid{005}\\
    \\
    \lineid{002}   &  0003 0021 0300  &   \hide{0102} 0111 \hide{0201} \hide{1101}                     & & &\\
    \lineid{007}   &  0021 0300 1002  &   \hide{\highlightOne{0102}} 0111 0201 1101 1110                 & & \lineid{002} & \lineid{007}\\
    \\
    \lineid{001}   &  0003 0030 0300  &   0111                                                          & & &\\
    \lineid{008}   &  0003 0030 0201  &   \hide{\highlightOne{0012}} \hide{\highlightOne{0021}} 0111 \hide{\highlightOne{1011}}  & & \lineid{008} & \lineid{001}\\
    \lineid{009}   &  0030 0201 1002  &   \hide{\highlightOne{0012}} 0021 0111 1011 1110                & & \lineid{008} & \lineid{009}\\
    \\
    \lineid{008}   &  0003 0030 0201  &   \hide{0012} \hide{0021} 0111 \hide{1011}                      & & &\\
    \lineid{010}   &  0030 0201 1011  &   0012 0021 0111 \hide{\highlightOne{0120}} \hide{\highlightOne{0210}} 1011 1110  & & \lineid{010} & \lineid{008}\\
    \\
    \lineid{008}   &  0003 0030 0201  &   \hide{0012} \hide{0021} 0111 \hide{1011}                      & & &\\
    \lineid{011}   &  0030 0201 1011  &   \hide{\highlightOne{0012}} 0021 0111 \hide{0120} \hide{0210} 1011 1110  &\lineid{010}    & \lineid{008} & \lineid{011}\\
    \\
    \lineid{008}   &  0003 0030 0201  &   \hide{0012} \hide{0021} 0111 \hide{1011}                      & & &\\
    \lineid{012}   &  0030 0201 1110  &   \hide{\highlightOne{0012}} \hide{\highlightOne{0021}} 0111 0120 0210 1011 1110  & & \lineid{012} & \lineid{008}\\
    \\
    \lineid{001}   &  0003 0030 0300  &   0111                                                          & & &\\
    \lineid{013}   &  0003 0030 0210  &   \hide{\highlightOne{0012}} \hide{\highlightOne{0021}} 0111 \hide{\highlightOne{1011}}  & & \lineid{013} & \lineid{001}\\
    \lineid{014}   &  0003 0021 0210  &   \hide{0012} 0021 \hide{\highlightOne{0102}} 0111 \hide{\highlightOne{0201}} \hide{1011} \hide{\highlightOne{1101}}  &\lineid{003}    & \lineid{014} & \lineid{013}\\
    \\
    \lineid{013}   &  0003 0030 0210  &   \hide{0012} \hide{0021} 0111 \hide{1011}                      & & &\\
    \lineid{015}   &  0003 0021 0210  &   \hide{0012} \hide{\highlightOne{0021}} \hide{0102} 0111 \hide{0201} \hide{1011} \hide{1101}  &\lineid{014}    & \lineid{013} & \lineid{015}\\
    \lineid{016}   &  0012 0021 0210  &   0012 \hide{0021} \hide{\highlightOne{0030}} 0102 0111 \hide{\highlightOne{0120}} \hide{\highlightOne{0201}} \hide{\highlightOne{0210}} \hide{1011} \hide{\highlightOne{1020}} \hide{\highlightOne{1101}} \hide{\highlightOne{1110}} \hide{\highlightOne{2010}}  &\lineid{006}    & \lineid{016} & \lineid{015}\\
    \\
    \lineid{015}   &  0003 0021 0210  &   \hide{0012} \hide{0021} \hide{0102} 0111 \hide{0201} \hide{1011} \hide{1101}  &    & &\\
    \lineid{017}   &  0012 0021 0210  &   \hide{\highlightOne{0012}} \hide{0021} \hide{0030} \hide{\highlightOne{0102}} 0111 \hide{0120} \hide{0201} \hide{0210} \hide{1011} \hide{1020} \hide{1101} \hide{1110} \hide{2010}  &\lineid{016}    & \lineid{015} & \lineid{017}\\
    \lineid{018}   &  0012 0030 0210  &   0012 \hide{\highlightOne{0021}} 0030 0111 0120 0210 1011 1020 1110 2010  & & \lineid{017} & \lineid{018}\\
    \\
    \lineid{013}   &  0003 0030 0210  &   \hide{0012} \hide{0021} 0111 \hide{1011}                      & & &\\
    \lineid{019}   &  0012 0030 0210  &   0012 \hide{0021} \hide{\highlightOne{0030}} 0111 \hide{\highlightOne{0120}} \hide{\highlightOne{0210}} 1011 \hide{\highlightOne{1020}} \hide{\highlightOne{1110}} \hide{\highlightOne{2010}}  &\lineid{018}    & \lineid{019} & \lineid{013}\\
    \\
    \lineid{013}   &  0003 0030 0210  &   \hide{0012} \hide{0021} 0111 \hide{1011}                      & & &\\
    \lineid{020}   &  0012 0030 0210  &   \hide{\highlightOne{0012}} \hide{0021} \hide{0030} 0111 \hide{0120} \hide{0210} 1011 \hide{1020} \hide{1110} \hide{2010}  &\lineid{019}    & \lineid{013} & \lineid{020}\\
    \\
    \lineid{001}   &  0003 0030 0300  &   0111                                                          & & &\\
    \lineid{021}   &  0003 0030 1110  &   \hide{\highlightOne{0012}} \hide{\highlightOne{0021}} 0111 1011  & & \lineid{021} & \lineid{001}\\
    \\
    \lineid{001}   &  0003 0030 0300  &   0111                                                          & & &\\
    \lineid{022}   &  0012 0030 0300  &   0111 \hide{\highlightOne{0120}} \hide{\highlightOne{0210}} \hide{\highlightOne{1110}}  & & \lineid{022} & \lineid{001}\\
    \lineid{023}   &  0012 0030 0210  &   \hide{0012} \hide{0021} \hide{0030} 0111 \hide{0120} \hide{\highlightOne{1011}} \hide{1020} \hide{0210} \hide{1110} \hide{2010}  &\lineid{020}    & \lineid{023} & \lineid{022}\\
    \lineid{024}   &  0012 0210 3000  &   1011 \hide{\highlightOne{1020}} 1101 1110 2010                & & \lineid{023} & \lineid{024}\\
    \\
    \lineid{001}   &  0003 0030 0300  &   0111                                                          & & &\\
    \lineid{025}   &  0030 0300 1011  &   0111 \hide{\highlightOne{0120}} \hide{\highlightOne{0210}} 1110  & & \lineid{025} & \lineid{001}\\
    \\
    \lineid{026}   &  0003 0030 3000  &   1011                                                          & & &\\
    \lineid{027}   &  0003 0021 3000  &   \hide{\highlightOne{1002}} 1011 \hide{\highlightOne{1101}} \hide{\highlightOne{2001}}  & & \lineid{027} & \lineid{026}\\
    \lineid{028}   &  0003 0021 2010  &   \hide{\highlightOne{0012}} 0021 \hide{\highlightOne{0111}} 1002 1011 1101 2001  & & \lineid{028} & \lineid{027}\\
    \\
    \lineid{027}   &  0003 0021 3000  &   \hide{1002} 1011 \hide{1101} \hide{2001}                     & & &\\
    \lineid{029}   &  0021 1110 3000  &   \hide{\highlightOne{1002}} 1011 1020 1101 1110 \hide{\highlightOne{2001}} 2010  & & \lineid{029} & \lineid{027}\\
    \\
    \lineid{026}   &  0003 0030 3000  &   1011                                                          & & &\\
    \lineid{030}   &  0003 0030 1011  &   \hide{\highlightOne{0012}} \hide{\highlightOne{0021}} \hide{\highlightOne{0111}} 1011  & & \lineid{030} & \lineid{026}\\
    \lineid{031}   &  0030 1011 2010  &   \hide{\highlightOne{0012}} \hide{\highlightOne{0021}} \hide{\highlightOne{0030}} \hide{\highlightOne{0111}} \hide{\highlightOne{0120}} \hide{\highlightOne{0210}} 1011 1020 1110 2010  & & \lineid{031} & \lineid{030}\\
    \\
    \lineid{026}   &  0003 0030 3000  &   1011                                                          & & &\\
    \lineid{032}   &  0003 0030 2010  &   \hide{\highlightOne{0012}} \hide{\highlightOne{0021}} \hide{\highlightOne{0111}} 1011  & & \lineid{032} & \lineid{026}\\
    \lineid{033}   &  0003 0021 2010  &   \hide{0012} 0021 \hide{0111} \hide{\highlightOne{1002}} 1011 \hide{\highlightOne{1101}} \hide{\highlightOne{2001}} &\lineid{028}    & \lineid{033} & \lineid{032}\\
    \\
    \lineid{032}   &  0003 0030 2010  &   \hide{0012} \hide{0021} \hide{0111} 1011                      & & &\\
    \lineid{034}   &  0003 0021 2010  &   \hide{0012} \hide{\highlightOne{0021}} \hide{0111} \hide{1002} 1011 \hide{1101} \hide{2001}  &\lineid{033}    & \lineid{032} & \lineid{034}\\
    \lineid{035}   &  0021 1110 2010  &   \hide{\highlightOne{0012}} \hide{\highlightOne{0021}} \hide{\highlightOne{0030}} 0111 0120 0210 1002 1011 1020 1101 1110 \hide{\highlightOne{2001}} 2010  & & \lineid{035} & \lineid{034}\\
    \\
    \lineid{034}   &  0003 0021 2010  &   \hide{0012} \hide{0021} \hide{0111} \hide{1002} 1011 \hide{1101} \hide{2001}  &\lineid{033}    & &\\
    \lineid{036}   &  0021 1110 2010  &   \hide{0012} \hide{0021} \hide{0030} 0111 0120 0210 \hide{\highlightOne{1002}} 1011 1020 1101 1110 \hide{2001} 2010  &\lineid{035}    & \lineid{034} & \lineid{036}\\
    \\
    \lineid{032}   &  0003 0030 2010  &   \hide{0012} \hide{0021} \hide{0111} 1011                      & & &\\
    \lineid{037}   &  0003 1110 2010  &   \hide{\highlightOne{0012}} \hide{\highlightOne{0021}} 0111 \hide{\highlightOne{1002}} 1011 1101 \hide{\highlightOne{2001}}  & & \lineid{037} & \lineid{032}\\
    \\
    \lineid{032}   &  0003 0030 2010  &   \hide{0012} \hide{0021} \hide{0111} 1011                      & & &\\
    \lineid{038}   &  0030 1002 2010  &   0012 \hide{\highlightOne{0021}} \hide{\highlightOne{0111}} 1011 \hide{\highlightOne{1020}} \hide{\highlightOne{1110}} \hide{\highlightOne{2010}}  & & \lineid{038} & \lineid{032}\\
    \\
    \lineid{032}   &  0003 0030 2010  &   \hide{0012} \hide{0021} \hide{0111} 1011                      & & &\\
    \lineid{039}   &  0030 1002 2010  &   \hide{\highlightOne{0012}} \hide{0021} \hide{0111} 1011 \hide{1020} \hide{1110} \hide{2010}  &\lineid{038}    & \lineid{032} & \lineid{039}\\
    \lineid{040}   &  0030 1002 2100  &   0111 1011 1020 1110 \hide{\highlightOne{2010}}                & & \lineid{039} & \lineid{040}\\
    \\
    \lineid{032}   &  0003 0030 2010  &   \hide{0012} \hide{0021} \hide{0111} 1011                      & & &\\
    \lineid{041}   &  0030 1011 2010  &   \hide{0012} \hide{0021} \hide{0030} \hide{0111} \hide{0120} \hide{0210} 1011 \hide{\highlightOne{1020}} \hide{\highlightOne{1110}} \hide{\highlightOne{2010}}  &\lineid{031}    & \lineid{041} & \lineid{032}\\
    \lineid{042}   &  0030 1011 2100  &   0111 0120 0210 1011 1020 1110 \hide{\highlightOne{2010}}      & & \lineid{041} & \lineid{042}\\
    \\
    \lineid{043}   &  0003 0300 3000  &   1101                                                          & & &\\
    \lineid{044}   &  0003 0201 3000  &   \hide{\highlightOne{1002}} \hide{\highlightOne{1011}} 1101 \hide{\highlightOne{2001}}  & & \lineid{044} & \lineid{043}\\
    \lineid{045}   &  0012 0201 3000  &   \hide{\highlightOne{1002}} 1011 1101 1110 2001                & & \lineid{044} & \lineid{045}\\
    \\
    \lineid{043}   &  0003 0300 3000  &   1101                                                          & & &\\
    \lineid{046}   &  0003 0300 2100  &   \hide{\highlightOne{0102}} \hide{\highlightOne{0111}} \hide{\highlightOne{0201}} 1101  & & \lineid{046} & \lineid{043}\\
    \lineid{047}   &  0003 0210 2100  &   0102 0111 \hide{\highlightOne{0201}} 1011 1101                & & \lineid{046} & \lineid{047}\\
    \\
    \lineid{048}   &  0030 0300 3000  &   1110                                                          & & &\\
    \lineid{049}   &  0030 0300 1110  &   \hide{\highlightOne{0111}} \hide{\highlightOne{0120}} \hide{\highlightOne{0210}} 1110  & & \lineid{049} & \lineid{048}\\
    \lineid{050}   &  0030 0201 1110  &   \hide{0012} \hide{0021} 0111 0120 \hide{\highlightOne{0210}} 1011 1110  &\lineid{012}    & \lineid{049} & \lineid{050}\\
    \\
    \lineid{048}   &  0030 0300 3000  &   1110                                                          & & &\\
    \lineid{051}   &  0030 0300 2010  &   \hide{\highlightOne{0111}} \hide{\highlightOne{0120}} \hide{\highlightOne{0210}} 1110  & & \lineid{051} & \lineid{048}\\
    \lineid{052}   &  0021 0300 2010  &   0111 \hide{\highlightOne{0120}} 1101 1110 0210                & & \lineid{051} & \lineid{052}\\
    \\
    \lineid{048}   &  0030 0300 3000  &   1110                                                          & & &\\
    \lineid{053}   &  0030 0300 2100  &   \hide{\highlightOne{0111}} \hide{\highlightOne{0120}} \hide{\highlightOne{0210}} 1110  & & \lineid{053} & \lineid{048}\\
    \lineid{054}   &  0030 0201 2100  &   0111 0120 \hide{\highlightOne{0210}} 1011 1110                & & \lineid{053} & \lineid{054}\\
    \\
    \lineid{053}   &  0030 0300 2100  &   \hide{0111} \hide{0120} \hide{0210} 1110                      & & &\\
    \lineid{055}   &  0030 1011 2100  &   0111 \hide{\highlightOne{0120}} 0210 1011 1020 1110 \hide{2010}  &\lineid{042}    & \lineid{055} & \lineid{053}\\
    \\
    \lineid{053}   &  0030 0300 2100  &   \hide{0111} \hide{0120} \hide{0210} 1110                     & & &\\
    \lineid{056}   &  0030 1011 2100  &   0111 \hide{0120} \hide{\highlightOne{0210}} 1011 1020 1110 \hide{2010}  &\lineid{055}    & \lineid{053} & \lineid{056}\\
    \\
        \hline
\end{longtable}

According to \lineid{054}, the profile 0030 0201 2100 has four remaining possible output allocations: 0111, 1011, 1110, and 0120. We make a case distinction.\\

\textbf{Case 1:} 0030 0201 2100 has outcome 0111.
\begin{longtable}{@{\hspace{-2mm}}>{\centering\arraybackslash}p{1.2cm} l p{8cm} >{\centering\arraybackslash}m{1.6cm} >{\centering\arraybackslash}m{1.2cm} >{\centering\arraybackslash}m{1.2cm}@{\hspace{2mm}}}
    \hline
    ID & Profile & Outcomes & Last Mentioned & Truthful ID & Misreport ID \\
    \hline
    \\
    \lineid{054}   &  0030 0201 2100  &   0111 \hide{\highlightOne{0120}} \hide{0210} \hide{\highlightOne{1011}} \hide{\highlightOne{1110}}          & & &\\
    \lineid{057}   &  0030 0201 1011  &   \hide{0012} 0021 0111 \hide{0120} \hide{0210} 1011 \hide{\highlightOne{1110}}  &\lineid{011}    & \lineid{054} & \lineid{057}\\
    \\
    \lineid{054}   &  0030 0201 2100  &   0111 \hide{0120} \hide{0210} \hide{1011} \hide{1110}          & & &\\
    \lineid{058}   &  0030 0201 3000  &   1011 \hide{\highlightOne{1110}}                               & & \lineid{054} & \lineid{058}\\
    \lineid{059}   &  0012 0201 3000  &   \hide{1002} 1011 \hide{\highlightOne{1101}} \hide{\highlightOne{1110}} \hide{\highlightOne{2001}}  &\lineid{045}    & \lineid{059} & \lineid{058}\\
    \lineid{060}   &  0012 0210 3000  &   1011 \hide{1020} \hide{\highlightOne{1101}} 1110 2010         &\lineid{024}    & \lineid{059} & \lineid{060}\\
    \\
    \lineid{059}   &  0012 0201 3000  &   \hide{1002} 1011 \hide{1101} \hide{1110} \hide{2001}          &    & &\\
    \lineid{061}   &  0012 0300 3000  &   \hide{\highlightOne{1101}} 1110                               & & \lineid{059} & \lineid{061}\\
    \lineid{062}   &  0012 0210 3000  &   \hide{\highlightOne{1011}} \hide{1020} \hide{1101} 1110 \hide{\highlightOne{2010}}  &\lineid{060}    & \lineid{062} & \lineid{061}\\
    \lineid{063}   &  0003 0210 3000  &   \hide{\highlightOne{1011}} 1101                               & & \lineid{062} & \lineid{063}\\
    \lineid{064}   &  0003 0210 2100  &   \hide{\highlightOne{0102}} \hide{\highlightOne{0111}} \hide{0201} \hide{\highlightOne{1011}} 1101  &\lineid{047}    & \lineid{064} & \lineid{063}\\
    \lineid{065}   &  0003 0030 2100  &   \hide{\highlightOne{0111}} 1011                               & & \lineid{064} & \lineid{065}\\
    \lineid{066}   &  0030 1011 2100  &   0111 \hide{0120} \hide{0210} 1011 1020 \hide{\highlightOne{1110}} \hide{2010}  &\lineid{056}    & \lineid{066} & \lineid{065}\\
    \\
    \lineid{058}   &  0030 0201 3000  &   1011 \hide{1110}                                              & & &\\
    \lineid{067}   &  0030 0201 1011  &   \hide{0012} \hide{\highlightOne{0021}} \hide{\highlightOne{0111}} \hide{0120} \hide{0210} 1011 \hide{1110}  &\lineid{057}    & \lineid{067} & \lineid{058}\\
    \lineid{068}   &  0030 0300 1011  &   \hide{\highlightOne{0111}} \hide{0120} \hide{0210} 1110      &\lineid{025}    & \lineid{067} & \lineid{068}\\
    \lineid{069}   &  0030 1011 2100  &   0111 \hide{0120} \hide{0210} \hide{\highlightOne{1011}} \hide{\highlightOne{1020}} \hide{1110} \hide{2010}  &\lineid{066}    & \lineid{069} & \lineid{068}\\
    \lineid{070}   &  0030 0201 1011  &   \hide{0012} \hide{0021} \hide{0111} \hide{0120} \hide{0210} \hide{\highlightOne{1011}} \hide{1110}  &\lineid{067}    & \lineid{070} & \lineid{069}\\
            \\
    
        \hline
\end{longtable}
\textbf{Case 2:} 0030 0201 2100 has outcome 1011.
\begin{longtable}{@{\hspace{-2mm}}>{\centering\arraybackslash}p{1.2cm} l p{8cm} >{\centering\arraybackslash}m{1.6cm} >{\centering\arraybackslash}m{1.2cm} >{\centering\arraybackslash}m{1.2cm}@{\hspace{2mm}}}
    \hline
    ID & Profile & Outcomes & Last Mentioned & Truthful ID & Misreport ID \\
    \hline
    \\
    \lineid{054}   &  0030 0201 2100  &   \hide{\highlightOne{0111}} \hide{\highlightOne{0120}} \hide{0210} 1011 \hide{\highlightOne{1110}}          & & &\\
    \lineid{071}   &  0030 0201 1011  &   \hide{0012} \hide{\highlightOne{0021}} \hide{\highlightOne{0111}} \hide{0120} \hide{0210} 1011 \hide{\highlightOne{1110}}  &\lineid{011}    & \lineid{071} & \lineid{054}\\
    \lineid{072}   &  0030 0300 1011  &   \hide{\highlightOne{0111}} \hide{0120} \hide{0210} 1110      &\lineid{025}    & \lineid{071} & \lineid{072}\\
    \lineid{073}   &  0030 1011 2100  &   0111 \hide{0120} \hide{0210} \hide{\highlightOne{1011}} \hide{\highlightOne{1020}} 1110 \hide{2010}  &\lineid{056}    & \lineid{073} & \lineid{072}\\
    \\
    \lineid{071}   &  0030 0201 1011  &   \hide{0012} \hide{0021} \hide{0111} \hide{0120} \hide{0210} 1011 \hide{1110}  &    & &\\
    \lineid{074}   &  0030 1011 2100  &   \hide{\highlightOne{0111}} \hide{0120} \hide{0210} \hide{1011} \hide{1020} 1110 \hide{2010}  &\lineid{073}    & \lineid{071} & \lineid{074}\\
    \lineid{075}   &  0030 0201 2100  &   \hide{0111} \hide{0120} \hide{0210} \hide{\highlightOne{1011}} \hide{1110}  &\lineid{054}    & \lineid{074} & \lineid{075}\\
        \\
    
        \hline
\end{longtable}
\textbf{Case 3:} 0030 0201 2100 has outcome 1110.
\begin{longtable}{@{\hspace{-2mm}}>{\centering\arraybackslash}p{1.2cm} l p{8cm} >{\centering\arraybackslash}m{1.6cm} >{\centering\arraybackslash}m{1.2cm} >{\centering\arraybackslash}m{1.2cm}@{\hspace{2mm}}}
    \hline
    ID & Profile & Outcomes & Last Mentioned & Truthful ID & Misreport ID \\
    \hline
    \\
    \lineid{054}   &  0030 0201 2100  &   \hide{\highlightOne{0111}} \hide{\highlightOne{0120}} \hide{0210} \hide{\highlightOne{1011}} 1110          & & &\\
    \lineid{076}   &  0003 0030 2100  &   \hide{\highlightOne{0111}} 1011                               & & \lineid{054} & \lineid{076}\\
    \lineid{077}   &  0030 1002 2100  &   0111 1011 \hide{\highlightOne{1020}} \hide{\highlightOne{1110}} \hide{2010}  &\lineid{040}    & \lineid{077} & \lineid{076}\\
    \\
    \lineid{054}   &  0030 0201 2100  &   \hide{0111} \hide{0120} \hide{0210} \hide{1011} 1110          & & &\\
    \lineid{078}   &  0030 0201 1110  &   \hide{0012} \hide{0021} \hide{\highlightOne{0111}} \hide{\highlightOne{0120}} \hide{0210} \hide{\highlightOne{1011}} 1110  &\lineid{050}    & \lineid{078} & \lineid{054}\\
    \lineid{079}   &  0003 0030 1110  &   \hide{0012} \hide{0021} \hide{\highlightOne{0111}} 1011       &\lineid{021}    & \lineid{078} & \lineid{079}\\
    \lineid{080}   &  0003 1110 2010  &   \hide{0012} \hide{0021} \hide{\highlightOne{0111}} \hide{1002} 1011 \hide{\highlightOne{1101}} \hide{2001}  &\lineid{037}    & \lineid{080} & \lineid{079}\\
    \lineid{081}   &  0021 1110 2010  &   \hide{0012} \hide{0021} \hide{0030} 0111 0120 \hide{\highlightOne{0210}} \hide{1002} 1011 1020 \hide{\highlightOne{1101}} \hide{\highlightOne{1110}} \hide{2001} \hide{\highlightOne{2010}}  &\lineid{036}    & \lineid{081} & \lineid{080}\\
    \\
    \lineid{054}   &  0030 0201 2100  &   \hide{0111} \hide{0120} \hide{0210} \hide{1011} 1110          & & &\\
    \lineid{082}   &  0030 1002 2100  &   \hide{\highlightOne{0111}} 1011 \hide{1020} \hide{1110} \hide{2010}  &\lineid{077}    & \lineid{054} & \lineid{082}\\
    \lineid{083}   &  0030 0300 1002  &   0111 \hide{\highlightOne{1110}}                               & & \lineid{082} & \lineid{083}\\
    \lineid{084}   &  0021 0300 1002  &   \hide{0102} 0111 \hide{\highlightOne{0201}} \hide{\highlightOne{1101}} \hide{\highlightOne{1110}}  &\lineid{007}    & \lineid{084} & \lineid{083}\\
    \lineid{085}   &  0021 0300 2010  &   0111 \hide{0120} 0210 \hide{\highlightOne{1101}} 1110        &\lineid{052}    & \lineid{084} & \lineid{085}\\
    \\
    \lineid{084}   &  0021 0300 1002  &   \hide{0102} 0111 \hide{0201} \hide{1101} \hide{1110}          &    & &\\
    \lineid{086}   &  0021 0300 3000  &   \hide{\highlightOne{1101}} 1110                               & & \lineid{084} & \lineid{086}\\
    \lineid{087}   &  0021 0300 2010  &   \hide{\highlightOne{0111}} \hide{0120} \hide{\highlightOne{0210}} \hide{1101} 1110  &\lineid{085}    & \lineid{087} & \lineid{086}\\
    \lineid{088}   &  0021 1110 2010  &   \hide{0012} \hide{0021} \hide{0030} 0111 0120 \hide{0210} \hide{1002} \hide{\highlightOne{1011}} \hide{\highlightOne{1020}} \hide{1101} \hide{1110} \hide{2001} \hide{2010}  &\lineid{081}    & \lineid{088} & \lineid{087}\\
    \\
    \lineid{093}   &  0021 0300 3000  &   \hide{1101} 1110                                              & & &\\
    \lineid{089}   &  0021 1110 3000  &   \hide{1002} \hide{\highlightOne{1011}} \hide{\highlightOne{1020}} \hide{\highlightOne{1101}} 1110 \hide{2001} \hide{\highlightOne{2010}}  &\lineid{029}    & \lineid{089} & \lineid{093}\\
    \lineid{090}   &  0021 1110 2010  &   \hide{0012} \hide{0021} \hide{0030} \hide{\highlightOne{0111}} \hide{\highlightOne{0120}} \hide{0210} \hide{1002} \hide{1011} \hide{1020} \hide{1101} \hide{1110} \hide{2001} \hide{2010}  &\lineid{088}    & \lineid{090} & \lineid{089}\\
    \\    
        \hline
\end{longtable}
\textbf{Case 4:} 0030 0201 2100 has outcome 0120.
\begin{longtable}{@{\hspace{-2mm}}>{\centering\arraybackslash}p{1cm} l p{8cm} >{\centering\arraybackslash}m{1.6cm} >{\centering\arraybackslash}m{1.2cm} >{\centering\arraybackslash}m{1.2cm}@{\hspace{2mm}}}
    \hline
    ID & Profile & Outcomes & Last Mentioned & Truthful ID & Misreport ID \\
    \hline
    \\
    \lineid{054}   &  0030 0201 2100  &   \hide{\highlightOne{0111}} 0120 \hide{0210} \hide{\highlightOne{1011}} \hide{\highlightOne{1110}}          & & &\\
    \lineid{091}   &  0030 0201 1002  &   \hide{0012} 0021 0111 1011 \hide{\highlightOne{1110}}         &\lineid{009}    & \lineid{054} & \lineid{091}\\
    \\
    \lineid{054}   &  0030 0201 2100  &   \hide{0111} 0120 \hide{0210} \hide{1011} \hide{1110}          & & &\\
    \lineid{092}   &  0030 0201 3000  &   1011 \hide{\highlightOne{1110}}                               & & \lineid{054} & \lineid{092}\\
    \lineid{093}   &  0012 0201 3000  &   \hide{1002} 1011 \hide{\highlightOne{1101}} \hide{\highlightOne{1110}} \hide{\highlightOne{2001}}  &\lineid{045}    & \lineid{093} & \lineid{092}\\
    \lineid{094}   &  0012 0210 3000  &   1011 \hide{1020} \hide{\highlightOne{1101}} 1110 2010         &\lineid{024}    & \lineid{093} & \lineid{094}\\
    \\
    \lineid{093}   &  0012 0201 3000  &   \hide{1002} 1011 \hide{1101} \hide{1110} \hide{2001}          &    & &\\
    \lineid{095}   &  0012 0300 3000  &   \hide{\highlightOne{1101}} 1110                               & & \lineid{093} & \lineid{095}\\
    \lineid{096}   &  0012 0210 3000  &   \hide{\highlightOne{1011}} \hide{1020} \hide{1101} 1110 \hide{\highlightOne{2010}}  &\lineid{094}    & \lineid{096} & \lineid{095}\\
    \lineid{097}   &  0003 0210 3000  &   \hide{\highlightOne{1011}} 1101                               & & \lineid{096} & \lineid{097}\\
    \lineid{098}   &  0003 0210 2100  &   \hide{\highlightOne{0102}} \hide{\highlightOne{0111}} \hide{0201} \hide{\highlightOne{1011}} 1101  &\lineid{047}    & \lineid{098} & \lineid{097}\\
    \lineid{099}   &  0003 0030 2100  &   \hide{\highlightOne{0111}} 1011                               & & \lineid{098} & \lineid{099}\\
    \lineid{100}   &  0030 1002 2100  &   0111 1011 \hide{\highlightOne{1020}} \hide{\highlightOne{1110}} \hide{2010}  &\lineid{040}    & \lineid{100} & \lineid{099}\\
    \\
    \lineid{092}   &  0030 0201 3000  &   1011 \hide{1110}                                              & & &\\
    \lineid{101}   &  0030 0201 1002  &   \hide{0012} \hide{\highlightOne{0021}} \hide{\highlightOne{0111}} 1011 \hide{1110}  &\lineid{091}    & \lineid{101} & \lineid{092}\\
    \lineid{102}   &  0030 0300 1002  &   \hide{\highlightOne{0111}} 1110                               & & \lineid{101} & \lineid{102}\\
    \lineid{103}   &  0030 1002 2100  &   0111 \hide{\highlightOne{1011}} \hide{1020} \hide{1110} \hide{2010}  &\lineid{100}    & \lineid{103} & \lineid{102}\\
    \lineid{104}   &  0030 0201 2100  &   \hide{0111} \hide{\highlightOne{0120}} \hide{1011} \hide{1110} \hide{0210}  &\lineid{054}    & \lineid{104} & \lineid{103}\\
        \\
    
    \hline
\end{longtable}

    We have reached a contradiction in all four cases, which completes the proof.
\end{proof}

\section{Missing Definitions and Proofs from \Cref{sec:continuous_input}} \label{app:continuous_input}

We start this section by arguing that a strengthening of \Cref{thm:strict_order_impossibility} holds in \Cref{sec:strengthening}, before using this result for our proof of \Cref{thm:continuous_impossibility} which we present in \Cref{app-sec:continupus-impossibility}.

\subsection{Strengthening the Result by \citet{ACS03b}} \label{sec:strengthening}

Let $A$ be a set of alternatives and $\mathcal{R}(A)$ be the set of all weak rankings over $A$. 
For any weak ranking $\wpo \in \mathcal{R}(A)$, we let $r_1(\wpo) = \{a \in A \mid a \wpo a' \text{ for all } a' \in A\}$ and $r_2(\wpo) = \{a \in A \setminus r_1(\wpo)  \mid a \wpo a' \text{ for all } a' \in A \setminus r_1(\wpo)\}$ be the set of alternatives of \textit{rank} $1$ and $2$, respectively, with respect to $\wpo$. 
We call $\mathbb{D} \subseteq \mathcal{R}(A)$ a (sub)domain. 
In particular, we will be interested in the subdomain of weak rankings for which the first and second ranks are unique, i.e., $\hat{\mathcal{R}}(A) = \{\wpo \in \mathcal{R}(A) \mid |r_1(\wpo)| = |r_2(\wpo)| = 1\}$. 
For $\hat{\mathcal{R}}(A)$, \textit{linkedness} can be defined in an analogous manner as for strict rankings in \Cref{sec:continuous_input}. 
For completeness, we restate this definition below. 

\paragraph{Linked Domains.}
Let $\mathbb{D} \subseteq \hat{\mathcal{R}}(A)$ be a subdomain. 
\begin{itemize}
    \item We call two alternatives $a,a' \in A$ \emph{connected} in $\mathbb{D}$ if there exist weak rankings $\wpo,\wpo' \in \mathbb{D}$ such that $a \in r_1(\wpo) =  r_2(\wpo')$ and $a' \in r_1(\wpo') =  r_2(\wpo)$.
    \item We say that alternative $a \in A$ is \emph{linked} to a subset $B \subseteq A$ if there exist distinct $a', a'' \in B$ such that $a$ is connected to both $a'$ and $a''$ in $\mathbb{D}$.
    \item We call the subdomain $\mathbb{D}$ \emph{linked} if we can order the alternatives in $A$ into a vector $(a^{1}, \dots, a^{|A|})$, such that
    $a^{1}$ is connected to $a^{2}$ and, for all $k \in \{3, \dots, |A|\}$,
    it holds that $a^{k}$ is linked to $\{a^{1}, \dots, a^{k-1}\}$.
\end{itemize}

Before we formalize the strengthening of the result by \citet{ACS03b}, we define social choice functions along with the three required axioms. 

\paragraph{Social Choice Function.} For a domain $\mathbb{D} \subseteq \mathcal{R}(A)$, a social choice function is a family of functions\footnote{Note, that \citet{ACS03b} define a social choice function with a fixed number of voters $n$, whereas we define a social choice function as a family of functions, one for each number of voters.} $\B_n: \mathbb{D}^n \rightarrow A$, one for every number of voters $n \in \N$.
Since $n$ is often clear from context, we slightly abuse notation and write $\B$ instead of $\B_n$. 

\paragraph{Unanimity.} A social choice function $\B$ is \emph{unanimous} if for any number of voters $n \in \N$, alternative $a \in A$ and profile $P = (\wpo_1, \dots, \wpo_n)$ with $r_1(\wpo_i) = \{a\}$ for all $i \in [n]$ we have $\B(P) = a$.

\paragraph{Dictatorial.} For $n \in \N$, voter $i \in [n]$ is a \textit{dictator} for a social choice function $\B$ if for all profiles $P = (\wpo_1, \dots, \wpo_n)$ with $|r_1(\wpo_i)| = 1$ we have $\B(P) \in r_1(\wpo_i)$. A social choice function $\B$ is called \textit{dictatorial} for $n \in \N$ if, for any $n$ voters, there exists a voter that is a dictator for $\B$.

\paragraph{Truthfulness.} A social choice function $\B$ is \emph{truthful} if for any number of voters $n \in \N$, any profile $P = (\wpo_1, \dots, \wpo_n)$, voter $i \in [n]$, and misreport $\wpo_i^\star$, the following holds for profile $P^\star = (\wpo_1, \dots, \wpo_{i-1}, \wpo_i^\star, \wpo_{i+1}, \dots, \wpo_n)$: \[\B(P) \wpo_i \B(P^\star).\]

\begin{theorem}[{Adjusted version of Theorem 3.1 of \citet{ACS03b}}] \label{thm:strict_order_impossibility-strong}
    For any set of alternatives $A$ with $|A|\ge 3$, the following holds: If a subdomain $\mathbb{D} \subseteq \hat{\mathcal{R}}(A)$ is linked, then any unanimous and truthful social choice function $\B$ on domain $\mathbb{D}$ is dictatorial for any number of voters $n \in \N$.
\end{theorem}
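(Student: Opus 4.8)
The plan is to adapt the proof of Theorem~3.1 of \citet{ACS03b} to weak rankings with singleton top two ranks. The guiding observation is that \emph{linkedness}, \emph{unanimity}, and \emph{dictatoriality} are all stated purely in terms of the first and second ranks of a ranking, and every element of $\hat{\mathcal{R}}(A)$ has singleton ranks $r_1$ and $r_2$; so the combinatorial scaffolding of the original argument should survive intact, provided the two analytic ingredients it relies on---the description of the outcome via a voter's \emph{option set} and a monotonicity property---continue to hold when preferences are weak (with ``weakly preferred'' as the relevant notion in the definition of truthfulness). I would therefore first fix $n \in \N$ and a unanimous, truthful social choice function $\B$ on $\mathbb{D}^n$, and, writing $o_i(P_{-i}) := \{\B(\wpo_i, P_{-i}) \mid \wpo_i \in \mathbb{D}\}$ for the option set of voter $i$, establish: (i) $\B(\wpo_i, P_{-i})$ is a $\wpo_i$-maximal element of $o_i(P_{-i})$, and hence equals the top alternative of $\wpo_i$ whenever the latter lies in $o_i(P_{-i})$ (using that $r_1(\wpo_i)$ is a singleton); and (ii) a Maskin-type monotonicity statement: if $\B(\wpo_i, P_{-i}) = a$ and $\wpo_i' \in \mathbb{D}$ is a ranking whose lower contour set at $a$ contains that of $\wpo_i$, then $\B(\wpo_i', P_{-i}) = a$ as well. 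Both follow from applying the truthfulness inequality $\B(P) \wpo_i \B(P^\star)$ in both directions, exactly as in the strict case.

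With these in hand, I would run the induction along a linkedness order $(a^1,\dots,a^{|A|})$ witnessing that $\mathbb{D}$ is linked. For each pair of connected alternatives, the two witnessing rankings in $\mathbb{D}$ let one speak of ``decisive'' coalitions of voters for that pair, and a pivotal-voter analysis (using (i) and (ii)) pins down the relevant structure; linkedness---specifically, the fact that each new alternative $a^k$ is linked to $\{a^1,\dots,a^{k-1}\}$, i.e.\ connected to \emph{two} distinct earlier alternatives---forces these per-pair structures to be mutually consistent as $k$ grows, and ultimately collapses them to a single voter $d$ (a single connection could leave room for a non-principal decisive family, but two connections to an already-settled block force agreement). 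Once $k = |A|$, one concludes that $d$ is a dictator on all of $A$: given an arbitrary profile, monotonicity lets us promote each voter's unique top alternative to the top of its ranking without changing the outcome, and the resulting ``tops-only'' statement then yields $\B(P) \in r_1(\wpo_d)$.

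I expect the main obstacle to be the careful handling of the indifferences permitted below the top two ranks, which do not occur in \citet{ACS03b}. Concretely, two points need scrutiny: (a) every auxiliary ranking invoked in the pivotal and monotonicity steps must genuinely belong to the restricted domain $\mathbb{D}$---linkedness supplies the rankings witnessing connectedness, and the monotonicity lemma must only ever be applied between pairs of rankings both known to lie in $\mathbb{D}$, as opposed to freely ``promoting to top,'' which may leave $\mathbb{D}$; and (b) since an option set may contain two alternatives that voter $i$ is indifferent between, its $\wpo_i$-maximal element need not be unique, so one must ensure that at each step where the outcome is actually identified, the pinned-down alternative sits strictly on top of the relevant ranking---which holds precisely because $r_1$ is a singleton in $\hat{\mathcal{R}}(A)$. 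The substantive work is thus to verify that the \citet{ACS03b} argument is ``local'' enough that singleton top-two ranks plus weak-preference truthfulness are exactly what it needs; I would go through their proof claim by claim to confirm this, rephrasing the intermediate lemmas to refer to $r_1$ and $r_2$ rather than to the full strict order.
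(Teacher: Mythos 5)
Your proposal is correct and follows essentially the same route as the paper: both arguments adapt the proof of \citet{ACS03b} wholesale, observing that linkedness, unanimity, and dictatorship refer only to $r_1$ and $r_2$ (singletons in $\hat{\mathcal{R}}(A)$), that truthfulness with weak preference still yields the two-directional inequalities needed for the monotonicity/option-set lemmas, and that the option-set maximality argument is only ever invoked at the (singleton) top two ranks, where indifference cannot arise. The paper's sketch additionally pinpoints the reduction to two voters (Proposition~3.1 of \citet{ACS03b}) as a place where strictness is used, but your points (a) and (b) capture the same substantive concerns.
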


\begin{proof}[Proof Sketch]
    We claim that the proof provided by \citet{ACS03b} carries over essentially as it is written from the class of strict rankings $\mathcal{L}(A)$ to the class $\hat{\mathcal{R}}(A)$, i.e., the class of weak rankings $\wpo$ with the property that $r_1(\wpo)$ and $r_2(\wpo)$ are singletons. Since the proof by \citet{ACS03b} spans five pages in their paper, it would be out of scope to identically reproduce this proof at this point. To nevertheless be more concrete, we roughly summarize the proof below and point out the situations at which properties of strict rankings are used by \citet{ACS03b} and why they are also met by $\hat{\mathcal{R}}(A)$. 

    \paragraph{Proof of Proposition 3.1 of \citet{ACS03b}.} The proof by \citet{ACS03b} starts by showing that the statement in  \Cref{thm:strict_order_impossibility} holds if and only if the same statement holds for the case of two voters. This statement is proven in Proposition 3.1 of their paper. We discuss two points in this proof: 
    \begin{itemize}
        \item Since the forward direction is trivial, they focus on the backwards direction. To this end, they first assume that $f$ is a truthful social choice function satisfying unanimity. From that, they define a social choice function $g$ for two voters by defining $g(\triangleright_i,\triangleright_j) = f(\triangleright_i,\triangleright_j, \dots, \triangleright_j)$. They then go on and argue that $g$ is truthful as well. For contradiction, they assume that $g$ is manipulable by $j$, i.e., there exist $\triangleright_i, \triangleright_j,\triangleright_j'$ such that $b = g(\triangleright_i, \triangleright_j') \triangleright_j g(\triangleright_i, \triangleright_j) = a$. Then, they argue that sequentially moving from $(\triangleright_i, \triangleright_j, \dots, \triangleright_j)$ to $(\triangleright_i, \triangleright_j', \dots, \triangleright_j')$ and applying the truthfulness of $f$ in each step shows that $a = f(\triangleright_i, \triangleright_j, \dots, \triangleright_j) \triangleright_j f(\triangleright_i, \triangleright_j', \dots, \triangleright_j') = b$, which is a contradiction to the previous assumption. For the case of weak rankings, we can still assume that $b \triangleright_j a$, but only show that $a \wpo_j b$. Clearly, this still yields a contradiction.
        \item On the second page of this proof, the authors use the fact that a social choice function that is onto and truthful also satisfies unanimity. This implication also holds for social choice functions defined on $\hat{\mathcal{R}}(A)$. 
    \end{itemize}

    \paragraph{Induction over the alternatives.} After having established Proposition 3.1, the proof moves on and shows \Cref{thm:strict_order_impossibility} for the case of two voters. The proof carries out an induction over the number of alternatives. Within this induction, the proof uses the concept of option sets. The set $O_2(\triangleright_1) \subseteq A$ contains all alternatives that can be returned, given that voter $1$ votes $\triangleright_1$, i.e., it is the \emph{option set} for voter $2$. The proof now uses the fact that any truthful social choice function has to satisfy $f(\triangleright_1,\triangleright_2) = \max_{\triangleright_2}(O_2(\triangleright_1))$ (and the same for reversed roles of the voters). This is a well-established fact which can be easily shown (for example, see \citet{barbera1990strategy}). For the case of weak rankings, this generalizes to $f(\wpo_1,\wpo_2) \in \max_{\wpo_2}(O_2(\wpo_1))$. However, since the proof only applies this result when at least one of the two top-ranked alternatives of voter $2$ is in their option set (and thus, the maximum set is a singleton), we can use the original version of the statement in all of these cases. 
\end{proof}

\subsection{Proof of \Cref{thm:continuous_impossibility}} \label{app-sec:continupus-impossibility}

We now discuss the connection between fractional-input mechanisms and weak rankings. For that, let $n,m,b$ be fixed for the rest of the section. For each vote $p \in \C$ let $\wpo_p$ be the inferred weak ranking over all integral allocations in $\D$, i.e., for two integral allocations $a, a' \in \D$ we have $a \wpo_p a'$ if and only if $\ellone{p}{a} \le \ellone{p}{a'}$. Let $\nabla = \{\wpo_p \mid p \in \C \text{ with } |r_1(\wpo_p)| = |r_2(\wpo_p)| = 1\}$ be the domain of weak rankings over the elements of $\D$ inferred from all elements in $\C$ with unique rank one and two allocations. We show in the following that $\nabla$ forms a linked domain. 

\begin{lemma}\label{lem:nabla-linked}
    The subdomain $\nabla \subseteq \hat{\mathcal{R}}(A)$ is linked. 
\end{lemma}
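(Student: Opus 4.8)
The plan is to prove the lemma in two steps: a local observation that lattice‑neighboring allocations are connected in $\nabla$, and an explicit enumeration of $\D$ that certifies linkedness using only such connections. First I would set up the local step. Call $a,a'\in\D$ \emph{neighbors}, written $a\sim a'$, if $\ellone{a}{a'}=2$, i.e.\ $a'=a-e_i+e_j$ for some $i\neq j$ with $a_i\ge 1$. I claim neighbors are connected in $\nabla$. Consider the midpoint $p=\tfrac{1}{2}(a+a')\in\C$; a direct calculation gives $\ellone{p}{a}=\ellone{p}{a'}=1$, whereas every other $a''\in\D$ must differ from $a$ in some coordinate outside $\{i,j\}$, so $\ellone{p}{a''}\ge 2$. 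Thus $a$ and $a'$ are the two tied closest allocations to $p$. Now perturb: for small $\varepsilon>0$ the point $p_\varepsilon:=p+\varepsilon(e_i-e_j)$ lies in $\C$ and satisfies $\ellone{p_\varepsilon}{a}=1-2\varepsilon<1+2\varepsilon=\ellone{p_\varepsilon}{a'}<\ellone{p_\varepsilon}{a''}$ for all other $a''$ (the last inequality since those allocations sit at distance at least $2$ from $p$, so a small perturbation cannot overtake them). Hence $\wpo_{p_\varepsilon}\in\nabla$ with $r_1=\{a\}$, $r_2=\{a'\}$; perturbing instead towards $a'$ gives a ranking in $\nabla$ with $r_1=\{a'\}$, $r_2=\{a\}$. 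So $a,a'$ are connected in $\nabla$.

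By this local step it suffices to enumerate $\D=\{a^1,\dots,a^N\}$ so that $a^1\sim a^2$ and, for every $k\ge 3$, $a^k$ is a neighbor of at least two of $a^1,\dots,a^{k-1}$: such an enumeration directly witnesses that $\nabla$ is linked. I would construct it by induction on the number of alternatives $m\ge 3$. For the inductive step, decompose $I^{m+1}_b=\bigsqcup_{c=0}^{b} I^{m}_{b-c}\times\{c\}$ by the value of the last coordinate; order the block $c=0$ via the inductive enumeration of $I^m_b$ (still valid, and still with its first two elements being neighbors, since a unit move among the first $m$ coordinates is a unit move in $I^{m+1}_b$), and then append the blocks $c=1,\dots,b$ in this order, each internally in an arbitrary order. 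Any $(u,c)$ with $c\ge 1$ has the $m\ge 3$ neighbors $(u+e_j,c-1)$, $j\in[m]$, all lying in the already‑completed block $c-1$, hence at least two earlier neighbors. The base case $m=3$ needs extra care, because $I^2_{\,\cdot}$ is merely a path and cannot be linked in isolation. Here I would slice $I^3_b$ by the first coordinate into paths $L_c=\{v:v_1=c\}$, "zip" $L_0$ and $L_1$ together: writing $L_0=(p_0,\dots,p_b)$ and $L_1=(q_0,\dots,q_{b-1})$ along the respective paths, one checks $p_i\sim q_i$ and $p_i\sim q_{i-1}$, so the interleaving $q_0,p_0,p_1,q_1,p_2,q_2,\dots$ has $a^1\sim a^2$ and gives every later vertex two earlier neighbors; finally append $L_2,\dots,L_b$ in increasing order of $c$, since each $(c,x,y)\in L_c$ has its two neighbors $(c-1,x+1,y),(c-1,x,y+1)\in L_{c-1}$ placed earlier.

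I expect the base case $m=3$ to be the main obstacle: the interleaving of the two "bottom" paths must be arranged so that no vertex is left with only one earlier neighbor, and the degenerate instances (small $b$, and the path endpoints $p_0,p_b,q_0,q_{b-1}$) must be checked directly. Once this is in place, the inductive step and the local connectedness step are routine. Finally, having exhibited a valid enumeration of $\D$, the local step upgrades every "$\sim$" used in it to a connection in $\nabla$, so the enumeration is the required witness and $\nabla$ is linked. (The construction presumes $m\ge 3$.)
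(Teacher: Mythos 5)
Your proposal is correct and follows essentially the same route as the paper: both establish that allocations at $\ell_1$-distance $2$ are connected via an explicit witness in $\C$ (the paper uses the asymmetric point $\tfrac{2}{3}a+\tfrac{1}{3}a'$ rather than a perturbed midpoint) and then enumerate $\D$ so that every element beyond the second is adjacent to two predecessors, with your zipped base case and coordinate-slicing induction being a repackaging of the paper's three-phase construction (its phase~1 is exactly your interleaving of $L_0$ and $L_1$, and its phases~2--3 are your ``append slices in increasing order of the new coordinate''). One local slip to repair: the claim that every $a''\notin\{a,a'\}$ must differ from $a$ in some coordinate outside $\{i,j\}$ is false (e.g.\ $a''=a-2e_i+2e_j$ or $a''=a+e_i-e_j$), although the needed bound $\ellone{p}{a''}\ge 2$ does still hold after a short case analysis---or comes for free if you adopt the paper's witness point.
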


\begin{proof}
We prove the claim by iteratively constructing a vector $(a^{1}, \dots, a^{|\D|})$ of the allocations in $\D$. For each allocation added, we argue why it is connected in $\nabla$ to at least two previously added allocations. 

 We call two integral allocations $a,a' \in \D$ \textit{adjacent} if they have $\ell_1$-distance of 2, which means they only differ on two alternatives and only by $1$ each. Note that for any pair of adjacent allocations $a,a' \in \D$, we can find a fractional vote $p \in \C$, such that $r_1(\wpo_p) = \{a\}$ and $r_2(\wpo_p) = \{a'\}$.\footnote{Choose $p = \frac{2}{3} a + \frac{1}{3} a'$, which gives $\ellone{p}{a} = \frac{2}{3}$, $\ellone{p}{a'} = \frac{4}{3}$, and $\ellone{p}{\hat{a}} \ge 2$ for any other integral allocation $\hat{a} \in \D$.} Thus, all pairs of adjacent allocations are connected in $\nabla$. By adding allocations from $\D$ to the vector, such that they are adjacent to at least two previous allocations, we make sure to satisfy the connection requirements from the definition of a linked domain.

We add the allocations to the vector in three phases. We provide an example of the vector construction for $m = 4$ and $b = 3$ in \Cref{tab:connected_domain_order}. 

        \begin{table}[h]
        \centering
        \caption{Example construction of the ranking of the elements of $I^4_3$. 
        For readability, we shorten the notation for an allocation by leaving out parenthesis, commas and spaces, e.g., the allocation $(1,2,0,0)$ is written as $1200$. In the first phase, we add allocations $a$ with $a_3 \in \{0,1\}$ and $a_4 = 0$. In the second phase, we add allocations $a$ with $a_4 = 0$ in increasing order of $a_3$. In the third phase, we add allocations $a$ with $a_4 \neq 0$ in increasing order of $a_4$.}
        \label{tab:connected_domain_order}
        \begin{tabular}{l|l}
            \textbf{Phase 1} & 3000, 2010, 2100, 1110, 1200, 0210, 0300\\
            \textbf{Phase 2} & 1020, 0120\\
                             & 0030\\
            \textbf{Phase 3} & 2001, 1101, 1011, 0201, 0111, 0021\\
                             & 1002, 0102, 0012\\
                             & 0003\\
        \end{tabular}
    \end{table}
    
    In the \textit{first phase} we only add allocations $a$ with $a_3 \in \{0, 1\}$ and $a_4 = \dots = a_m = 0$. The first two allocations are $(b, 0, 0, \dots, 0)$ and $(b-1, 0, 1, 0, \dots, 0)$. We then alternate, between (i) adding $1$ to the second element and subtracting $1$ from the third and (ii) adding $1$ to the third element and subtracting $1$ from the first alternative, until we added $(0, b, 0, \dots, 0)$. In both cases, we can easily see, that the new element is adjacent to the two previous elements.

    For the \textit{second phase}, we add all remaining allocations $a$ with $a_4 = \dots = a_m = 0$ in increasing order of $a_3 = 2, \dots, b$. For any $a = (a_1, a_2, a_3, 0, \dots, 0)$, we know that the two adjacent allocations $\hat{a} = (a_1+1, a_2, a_3-1, 0, \dots, 0)$ and $\bar{a} = (a_1, a_2+1, a_3-1, 0, \dots, 0)$ have already been added.

    Finally, in the \textit{third phase}, for each remaining alternative $j \in \{4, \dots, m\}$, we add all elements with $a_{j+1} = \dots = a_m = 0$ in increasing order of $a_j = 0, \dots, b$. As in the second phase, we know for any $a = (a_1, a_2, \dots, a_{j-1}, a_j, 0, \dots, 0)$ that the adjacent allocations $\hat{a} = (a_1+1, a_2, \dots, a_{j-1}, a_j-1, 0, \dots, 0)$ and $\bar{a} = (a_1, a_2+1, \dots, a_{j-1}, a_j-1, 0, \dots, 0)$ have already been added.

We constructed a vector of all the allocations in $\D$, such that each one is connected in $\nabla$ to two previous allocations, which shows that $\nabla$ is linked. \end{proof}

Now, we restate \Cref{thm:continuous_impossibility} from \Cref{sec:continuous_input}. Recall that the axioms \textit{onto} and \textit{dictatorial} were defined in the main text. 

\continuousImpossibility*

\begin{proof}

The high-level structure of the proof is the following: 
\begin{enumerate}[label=(\roman*)]
    \item{We assume that there exists a fractional-input mechanism $\mathcal{A}$ that is onto, truthful, and non-dictatorial for some $n \in \N$. Then, we show that such a mechanism $\mathcal{A}$ is in particular also unanimous (definition follows below).} \label{step1}
    \item We show that the fractional-input mechanism $\mathcal{A}$ induces a social choice function $\mathcal{B}$ on the domain $\nabla$ that is unanimous, truthful, and non-dictatorial for $n$ voters. However, since $\nabla$ is linked by \Cref{lem:nabla-linked}, this yields a contradiction to the assumption from \ref{step1}.
\end{enumerate}

\medskip

\paragraph{Step 1.} We assume that there exists a fractional-input mechanism $\mathcal{A}$ that is onto, truthful, and non-dictatorial  for some $n \in \N$. We start by defining unanimity.

 \paragraph{Unanimity.} For an allocation $a \in \D$, let $X_a \subset \Cc$ be the set of votes that strictly prefer $a$ over any other allocation in $\D$. A fractional-input mechanism $\A$ is \emph{unanimous} if for any $n, m, b \in \N$ with $m \ge 2$, allocation $a \in \D$ and profile $P \in (X_a)^{n}$ we have $\A(P) = a$.

\medskip 

We now show that $\A$ being onto and truthful implies that $\A$ is unanimous. We remark that this implication has been established in other contexts. For $a \in \D$, let $X_a \subseteq \C$ be as in the definition of unanimity. Assume for contradiction that there exists some $a \in \D$ and a profile $P=(p_1, \dots, p_n) \in (X_a)^n$ with $\A(P) = a' \neq a$. Since $\A$ is onto there must be another profile $\hat{P} = (\hat{p}_1, \dots, \hat{p}_n)$ with $\A(\hat{P}) = a$. We transform the profile $P$ into the profile $\hat{P}$, by moving each voter $i$ from their original vote $p_i$ to $\hat{p}_i$ one by one. For $i \in [n]_0$ let $\Bar{P}_i = (\hat{p}_1, \dots, \hat{p}_i, p_{i+1}, \dots, p_n)$ be the profile in which every voter up to $i$ has changed their vote from $p_i$ to $\hat{p}_i$. Note that $\Bar{P}_0 = P$ and $\Bar{P}_n = \hat{P}$. We know by truthfulness for any $i \in [n-1]_0$ that if $\A(\Bar{P}_i) \neq a$ then $\A(\Bar{P}_{i+1}) \neq a$, otherwise voter $(i+1)$ with a truthful vote of $p_{i+1} \in X_a$ would prefer the outcome of the profile with a misreport $a = \A(\Bar{P}_{i+1})$ over the outcome of the profile with the truthful report $\A(\Bar{P}_i) \neq a$. By iteratively applying this argument, we get that $\A(\Bar{P}_i) \neq a$ for all $i \in [n]_0$, contradicting the assumption $a = \A(\hat{P}) = A(\Bar{P}_n)$. Hence, $\A$ is unanimous. 

\bigskip

    \paragraph{Step 2.} We first define for every element in $\wpo \in \nabla$ exactly one representative in $\C$. More precisely, let $\pi: \nabla \rightarrow \C$ be such that for $p = \pi(\wpo)$ it holds that $\wpo_{p} = \wpo$. Then, we define the set of representatives by $\Cn = \{\pi(\wpo) \mid \wpo \in \nabla\}$. 
    Now, we are ready to construct the social choice function $\mathcal{B}$ on domain $\nabla$ as follows: for any profile of weak rankings $P = (\wpo_1. \dots, \wpo_n) \in \nabla^n$, we return the output $\B(P) = \A(\pi(\wpo_1), \dots, \pi(\wpo_n))$.

    In the following, we are going to show that $\mathcal{B}$ is unanimous, truthful, and non-dictatorial for $n$ voters by using these properties of $\A$. We point out that we defined these properties independently for the two functions, since the former is a social choice function and the latter is a fractional-input mechanism. 
    
    \paragraph{Claim:} $\B$ is unanimous, truthful, and non-dictatorial for $n$ voters. 

    \medskip

    \noindent 
    \textit{Proof of claim:} The fact that $\A$ is unanimous directly implies that $\B$ is unanimous. The same holds for truthfulness. 

    It remains to show that $\B$ is non-dictatorial for $n$. Suppose without loss of generality that voter $n$ is a dictator under $\B$. Since $\A$ is non-dictatorial for $n$ voters, there must be an integral allocation $a \in \D$ and a profile $P = (p_1, \dots, p_n) \in \Cc$ with an aggregate $a^\star = \A(P)$, such that $\ellone{p_n}{a} < \ellone{p_n}{a^\star}$ (i.e., a certificate of $n$ not being a dictator). Let $p^\star \in \Cn \cap X_{a^{\star}}$, i.e., $p^{\star}$ is a vote that ``prefers'' $a^\star$ over any other allocation from $\D$ and is a representative. We iteratively transform the profile $P$ into the profile $P^\star = (p^\star, \dots, p^\star, p_n)$ by moving each voter $i \in [n-1]$ from $p_i$ to $p^\star$. In each step, we argue that the aggregate cannot change. This is because otherwise the $i$th voter with (truthful) vote $p^\star$ in the new profile $(p^\star, \dots, p^\star, p_{i+1}, \dots, p_n)$ with an aggregate $a' \neq a^\star$ could misreport $p_i$ to attain the profile $(p^\star, \dots, p^\star, p_i, \dots, p_n)$ with aggregate $a^\star$, which is strictly preferred by voter $i$, contradicting truthfulness of $\A$. Thus, we have $\A(P^\star) = a^\star$. Consider the situation when voter $n$ misreports any element $p_n' \in \Cn \cap X_a$. The new profile $P' = (p^\star, \dots, p^\star, p_n')$ is $(\Cn)^n$. Therefore, by the fact that $n$ is a dictator under $\B$ and $p_n' \in X_a$, we know that $\B(\wpo_{p^\star}, \wpo_{p^\star}, \dots, \wpo_{p_n'}) = a$. By the construction of $\B$ this also implies that $\A(P') = a$. However, this contradicts truthfulness of $\A$. Thus, $\B$ is non-dictatorial for $n$ voters. 
    \hfill $\blacksquare$

\medskip

As sketched out before, we showed in Step 2 that $\B$ is unanimous, truthful, and non-dictatorial for $n$ voters. However, since $\nabla$ is a linked domain (by \Cref{lem:nabla-linked}) over $|\D| \ge 3$ alternatives (since $(m-1) \cdot b \ge 2)$, this is a contradiction to \Cref{thm:strict_order_impossibility-strong}. Therefore, $\A$ has to be dictatorial for all $n \in \N$. 
\end{proof}

\end{document}